\title{ Two-valued states on Baer $^*$-semigroups }
\author{{\sc Hector Freytes}\thanks{%
Fellow of the Consejo Nacional de Investigaciones Cient\'{\i}ficas y
T\'ecnicas (CONICET)}\ \ $^{1,2}$, \  {\sc Graciela Domenech}$^{*}$
$^3$ \\ and \ {\sc Christian de Ronde}$^{*}$ $^{4,5}$}
\begin{document}

\bibliographystyle{plain}

\maketitle

\begin{center}

\begin{small}
1. Instituto Argentino de Matem\'atica (IAM) \\
Saavedra 15 - 3er piso - 1083 Buenos Aires, Argentina\\
e-mail: hfreytes@dm.uba.ar - hfreytes@gmail.com\\
2. Dipartimento di Matematica e Informatica ``U. Dini'' \\
Viale Morgagni, 67/a - 50134 Firenze, Italia\\
3. Instituto de Astronom\'{\i}a y F\'{\i}sica del Espacio (IAFE)\\
Casilla de Correo 67, Sucursal 28 . 1428 Buenos Aires, Argentina\\
e-mail: domenech@iafe.uba.ar\\
4. Instituto de Filosof\'{\i}a  ``Dr. Alejandro Korn''\\
(UBA-CONICET), Buenos Aires, Argentina \\
5. Center Leo Apostel (CLEA) and Foundations of  the Exact Sciences (FUND) \\
Brussels Free University,  Krijgskundestraat 33 - 1160 Brussels,
Belgium\\
e-mail: cderonde@vub.ac.be
\end{small}
\end{center}

\begin{abstract}
\noindent In this paper we develop an algebraic framework that
allows us to extend families of two-valued states on orthomodular
lattices to Baer $^*$-semigroups. We apply this general approach to
study the full class of two-valued states and the subclass of
Jauch-Piron two-valued states on Baer $^*$-semigroups.
\end{abstract}

\noindent {\bf Keywords:} Baer $^*$-semigroups, two-valued states,
orthomodular lattices

\noindent {\bf PACS numbers:} 02.10 De\\

\bibliography{pom}

\begin{thebibliography}{10}

\bibitem{AD} D. Adams,  ``Equational classes of Foulis semigroups and orthomodular lattices'', Proc. Univ. of Houston, Lattice Theory Conf.. Houston (1973) 486-497.

\bibitem{BvN} G. Birkhoff, and J. von Neumann,  ``The logic of quantum mechanics'', Ann. Math. {\bf 27}  (1936)
823-843.

\bibitem{BLIJ} T. S. Blyth, M. F. Janowitz,  ``Residuation Theory'', Pergamon Press, (1972)


\bibitem{Bur} S. Burris,  H. P.  Sankappanavar, {\em A Course in Universal
Algebra}, Graduate Text in Mathematics, Vol. 78. Springer-Verlag,
New York Heidelberg Berlin, 1981.

\bibitem{DFD} G. Domenech, H. Freytes,  C. de Ronde, ``Equational characterization for two-valued states in orthomodular quantum systems'', Rep. Math. Phys. {\bf 68}, (2011) 65-83.

\bibitem{DGG} M. L. Dalla Chiara , R. Giuntini,  R. Greechie, {\em Reasoning in Quantum Theory, Sharp and Unsharp Quantum Logics}, Kluwer, Dordrecht-Boston-London, 2004.


\bibitem{DV1} A. Dvure\u{c}enskij, ``On States on MV-algebras and their Applications'' J. Logic and Computation {\bf 21} (3), (2011) 407-427.


\bibitem{FOU} D. Foulis,  {\em Baer $^*$-semigroups}, Proccedings of American Mathematical Society {\bf 11}, (1960)
648-654.

\bibitem{F} D. Foulis, ``Observables, states, and symmetries in the context of $CB$-effect algebras'', Rep. Math. Phys. {\bf 60}, (2007) 329-346.

\bibitem{gleason} Gleason, A. M. ``Measures on the closed subspaces of a Hilbert space", Journal of Mathematics and Mechanics {\bf 6}, (1957) 885-893.


\bibitem{GUD} S. Gudder,  {\em Stochastic Methods in Quantum Mechanics}, Elseiver-North-Holand, New York 1979.


\bibitem{JAU1} J. Jauch, {\em Foundations of Quantum Mechanics}, Addison Wesley, Reading, Mass, 1968.


\bibitem{Ka} J. A. Kalman, ``Lattices with involution'', Trans. Amer. Math. Soc.  {\bf 87}, (1958) 485-491.


\bibitem{KAL}  G. Kalmbach,  {\em Ortomodular Lattices}, Academic Press, London, 1983.


\bibitem{KM} J. K\"{u}hr and D. Mundici, ``De Finetti theorem and Borel states in $[0,1]$-valued logic'', Int J. Approx. Reason {\bf 46} (3), (2007) 605-616.


\bibitem{KR} T. Kroupa, ``Every state on semisimple MV-algebra is integral'', Fuzzy Sets Syst. {\bf 157},  (2006) 2771-2787.


\bibitem{HP} J. Harding and P. Pt\'{a}k, ``On the set representation of an orthomodular poset'', Colloquium
Math. 89 (2001) 233-240


\bibitem{MAC} G. W. Mackey {\em Mathematical foundations of quantum mechanics} New York: W. A. Benjamin, Inc. 1963.

\bibitem{MES} A. Messiah  {\em Quantum mechanics}, Vol 1. Amsterdan: North-Holand Publishing Company 1961.

\bibitem{MM} F. Maeda and S. Maeda, {\em Theory of Symmetric
Lattices}, Springer-Verlag, Berlin, 1970.


\bibitem{NAV1} M. Navara, ``Descriptions of state spaces of orthomodular lattices'', Math. Bohemica {\bf 117}, (1992) 305-313.

\bibitem{NAV2} M. Navara, ``Triangular norms and measure of fuzzy set'', in {\em Logical, Algebraic, Analytic and Probabilistic
Aspect of Triangular Norms}, Elsevier, Amsterdan 2005, 345-390.

\bibitem{vNlibro} J. von Neumann, {\em Mathematical Foundations of Quantum
Mechanics}, Princeton University Press, 12th. edition, Princeton,
1996.


\bibitem{PIR1} C. Piron, {\em Foundations of Quantum Physics}, Benjamin, Reading, Mass. 1976.

\bibitem{POOL} J.C. Pool,  {\em Baer $^*$-semigroups and the logic of quantum mechanics}, Commun. Math. Phys. {\bf 9}, (1968) 118-141.

\bibitem{PUL} P. Pt\'{a}k, S. Pulmannov\'{a}, {\em Orthomodular Structures as Quantum Logics}, Kluwer, Dordrecht, 1991.

\bibitem{PUL1} S. Pulmannov\'{a}, ``Sharp and unsharp observables on s-MV algebras — A comparison with the Hilbert space approach'', Fuzzy Sets Syst. {\bf 159} (22), (2008) 3065-3077.

\bibitem{PTAK} P. Pt\'{a}k, ``Weak dispersion-free states and hidden variables hypothesis'', J. Math. Phys. {\bf 24} (1983) 839-840.

\bibitem{RIE} Z. Rie\u{c}anov\'{a},  ``Continuous lattice effect algebras admitting order-continuous states'', Fuzzy Sets Syst. {\bf 136}, (2003) 41-54.

\bibitem{RIE2} Z. Rie\u{c}anov\'{a}, ``Effect algebraic extensions of generalized effect algebras and two-valued
states'', Fuzzy Sets Syst. {\bf 159},   (2008)  1116-1122.

\bibitem{RU} G.T. R\"{u}ttimann ``Jauch-Piron states'', J. Math. Phys. {\bf 18} (1977) 189-193.

\bibitem{TK1} J. Tkadlec, ``Partially additive measures and set representation of orthoposets'', J. Pure Appl. Algebra  {\bf 86},  (1993) 79-94.

\bibitem{TK2} J. Tkadlec, ``Partially additive states on orthomodular posets'',  Colloquium Mathematicum {\bf LXII},  (1995) 7-14.

\bibitem{TK3} J. Tkadlec, ``Boolean orthoposets and two-valued states on them'', Rep. Math. Phys. {\bf 31}, (1992) 311-316.










\end{thebibliography}

\newtheorem{theo}{Theorem}[section]

\newtheorem{definition}[theo]{Definition}

\newtheorem{lem}[theo]{Lemma}

\newtheorem{met}[theo]{Method}

\newtheorem{prop}[theo]{Proposition}

\newtheorem{coro}[theo]{Corollary}

\newtheorem{exam}[theo]{Example}

\newtheorem{Problem}[theo]{Problem}

\newtheorem{rema}[theo]{Remark}{\hspace*{4mm}}

\newtheorem{example}[theo]{Example}

\newcommand{\proof}{\noindent {\em Proof:\/}{\hspace*{4mm}}}

\newcommand{\qed}{\hfill$\Box$}

\newcommand{\ninv}{\mathord{\sim}} 

\section{Introduction}

Recently, several authors have paid attention to the study of the
concept of ``state'' by extending it to classes of algebras more
general than the  $\sigma$-algebras, as orthomodular posets
\cite{DGG, PUL}, MV-algebras \cite{DV1, KM, KR, NAV2, PUL1} or
effect algebras \cite{F, RIE, RIE2}. In the particular case of
quantum mechanics (QM), different families of states are
investigated not only because they provide different representations
of the event structure of quantum systems \cite{NAV1, TK1, TK2} but
also because of their importance in order to understand QM
\cite{GUD, JAU1, PIR1, PTAK}.

In \cite{DFD}, a general theoretical framework to study families of
two-valued states on orthomodular lattices is given. We shall use
these ideas for a general study of two-valued states extended to
Baer $^*$-semigroups. Moreover, we investigate varieties of Baer
$^*$-semigroups expanded with a unary operation that allows us to
capture the notion of two-valued states in an algebraic structure.

The paper is organized as follows: Section \ref{BASICNOTION}
contains generalities on universal algebra, orthomodular lattices,
and Baer $^*$-semigroups. In Section \ref{EVENTSEC}, motivations for
a natural extension of the concept of two-valued state from
orthomodular lattices to Baer $^*$-semigroups are presented. In
Section \ref{ALGEBRAICAP}, we introduce the concept of $IE_
B^*$-semigroup. It is presented as a Baer $^*$-semigroup with a
unary operation that enlarges the language of the structure. This
operation is defined by equations giving rise to a variety denoted
by ${\mathcal{IE}}^*_B$. In this way, ${\mathcal{IE}}^*_B$ defines a
common abstract framework in which several families of two-valued
states can be algebraically treated as unary operations on Baer
$^*$-semigroups.  In Section \ref{VARIETIES}, we give a decidable
procedure to extend equational theories of two-valued states on
orthomodular lattices to Baer $^*$-semigroups determining
sub-varieties of ${\mathcal{IE}}^*_B$. In Section \ref{FULLCLASS}
and Section \ref{JAUCHPIRON}, we apply the results obtained in an
abstract way to two important classes of two-valued states, namely
the full class of two-valued states and the subclass of Jauch-Piron
two-valued states. In Section \ref{APROBLEM}, we study some problems
about equational completeness related to subvarieties of
${\mathcal{IE}}^*_B$. Finally, in Section \ref{OPERATORE}, we
introduce subvarieties ${\mathcal{IE}}^*_B$ whose equational
theories are determined by classes of two-valued states on
orthomodular lattices.

\section{Basic notions} \label{BASICNOTION}

First we recall from \cite{Bur} some notions of universal algebra
that will play an important role in what follows. A {\em variety} is
a class of algebras of the same type defined by a set of equations.
If ${\mathcal A}$ is a variety and ${\mathcal B}$ is a subclass of
${\mathcal A}$, we denote by ${\mathcal V}({\mathcal B})$ the
subvariety of ${\mathcal A}$ generated by the class ${\mathcal B}$,
i.e. ${\mathcal V}({\mathcal B})$ is the smallest subvariety of
${\mathcal A}$ containing ${\mathcal B}$. Let ${\mathcal A}$ be a
variety of algebras of type $\tau$. We denote by Term$_{\mathcal A}$
the {\em absolutely free algebra} of type $\tau$ built  from the set
of variables $V = \{x_1, x_2,...\}$. Each element of Term$_{\mathcal
A}$ is referred as a {\em term}. We denote by Comp($t$) the
complexity of the term $t$ and by $t = s$ the equations of
Term$_{\mathcal A}$.

For $t\in$ Term$_{\mathcal A}$ we often write $t(x_1, \ldots x_n)$
to indicate that the variables occurring in $t$ are among $x_1,
\ldots x_n$. Let $A \in {\mathcal A}$. If $t(x_1, \ldots x_n) \in$
Term$_{\mathcal A}$ and $a_1,\dots, a_n \in A$, by $t^A(a_1,\dots,
a_n)$ we denote the result of the application of the term operation
$t^A$ to the elements $a_1,\dots, a_n$. A {\em valuation} in $A$ is
a function $v:V\rightarrow A$. Of course, any valuation $v$ in $A$
can be uniquely extended to an ${\mathcal A}$-homomorphism
$v:$Term$_{\mathcal A} \rightarrow A$ in the usual way, i.e., if
$t_1, \ldots, t_n \in$ Term$_{\mathcal A}$ then $v(t(t_1, \ldots,
t_n)) = t^A(v(t_1), \ldots, v(t_n))$. Thus, valuations are
identified with ${\mathcal A}$-homomorphisms from the absolutely
free algebra. If $t,s \in$ Term$_{\mathcal A}$, $A \models t = s$
means that for each valuation $v$ in $A$, $v(t) = v(s)$ and
${\mathcal A}\models t=s$ means that for each $A\in {\mathcal A}$,
$A \models t = s$.

For each algebra $A \in {\mathcal A}$, we denote by Con($A$) the
congruence lattice of $A$, the diagonal congruence is denoted by
$\Delta$ and the largest congruence $A^2$ is denoted by $\nabla$.
$\theta$ is called  {\em factor congruence} iff there is a
congruence $\theta^*$ on $A$ such that, $\theta \land \theta^* =
\Delta$, $\theta \lor \theta^* = \nabla$ and $\theta$ permutes with
$\theta^*$. If $\theta$ and $\theta^*$ is a pair of factor
congruences on $A$ then $A \cong A/\theta \times A/\theta^*$. $A$ is
{\em directly indecomposable} if $A$ is not isomorphic to a product
of two non trivial algebras or, equivalently, $\Delta,\nabla$ are
the only factor congruences in $A$. We say that $A$ is {\em
subdirect product} of a family of $(A_i)_{i\in I}$ of algebras if
there exists an embedding $f: A \rightarrow \prod_{i\in I} A_i$ such
that $\pi_i f : A\! \rightarrow A_i$ is a surjective homomorphism
for each $i\in I$ where $\pi_i$ is the projector onto $A_i$. $A$ is
{\em subdirectly irreducible} iff $A$ is trivial or there is a
minimum congruence in Con($A$)$ - \Delta$. It is clear that a
subdirectly irreducible algebra is directly indecomposable. An
important result due to Birkhoff is that every algebra $A$ is a
subdirect product of subdirectly irreducible algebras. Thus, the
class of subdirectly  irreducible algebras rules the valid equations
in the variety ${\mathcal A}$.

Now we recall from \cite{KAL, MM} some notions about orthomodular
lattices.  A {\em lattice with involution} \cite{Ka} is an algebra
$\langle L, \lor, \land, \neg \rangle$ such that $\langle L, \lor,
\land \rangle$ is a  lattice and $\neg$ is a unary operation on $L$
that fulfills the following conditions: $\neg \neg x = x$ and $\neg
(x \lor y) = \neg x \land \neg y$.  An {\em orthomodular lattice} is
an algebra $\langle L, \land, \lor, \neg, 0,1 \rangle$ of type
$\langle 2,2,1,0,0 \rangle$ that satisfies the following conditions:

\begin{enumerate}
\item
$\langle L, \land, \lor, \neg, 0,1 \rangle$ is a bounded lattice with involution,

\item
$x\land  \neg x = 0 $.

\item
$x\lor ( \neg x \land (x\lor y)) = x\lor y $

\end{enumerate}

We denote by ${\mathcal{OML}}$ the variety of orthomodular lattices.
Let $L$ be an orthomodular lattice. Two  elements $a,b$ in $L$ are
{\em orthogonal} (noted $a \bot b$) iff $a\leq \neg b$. For each
$a\in L$ let us consider the interval $[0,a] = \{x\in L : 0\leq x
\leq a \}$ and the unary operation in  $[0,a]$ given by $\neg_a x =
\neg x \land a$. As one can readily realize, the structure $L_a =
\langle [0,a], \land, \lor, \neg_a, 0, a \rangle$ is an orthomodular
lattice.

{\em Boolean algebras} are orthomodular lattices satisfying  the
{\em distributive law} $x\land (y \lor z) = (x \land y) \lor (x
\land z)$. We denote by ${\mathbf 2}$ the Boolean algebra of two
elements. Let $L$ be an orthomodular lattice. An element $c\in L$ is
said to be a {\em complement} of $a$ iff $a\land c = 0$ and $a\lor c
= 1$. Given $a, b, c$ in $L$, we write: $(a,b,c)D$\ \   iff $(a\lor
b)\land c = (a\land c)\lor (b\land c)$; $(a,b,c)D^{*}$ iff $(a\land
b)\lor c = (a\lor c)\land (b\lor c)$ and $(a,b,c)T$\ \ iff
$(a,b,c)D$, (a,b,c)$D^{*}$ hold for all permutations of $a, b, c$.
An element $z$ of $L$ is called {\em central} iff for all elements
$a,b\in L$ we have\ $(a,b,z)T$. We denote by $Z(L)$ the set of all
central elements of $L$ and it is called the {\em center} of $L$.

\begin{prop}\label{eqcentro} Let $L$ be an orthomodular lattice. Then we have:

\begin{enumerate}

\item
$Z(L)$ is a Boolean sublattice of $L$ {\rm \cite[Theorem 4.15]{MM}}.

\item
$z \in Z(L)$ iff for each $a\in L$, $a = (a\land z) \lor (a \land \neg z)$  {\rm \cite[Lemma 29.9]{MM}}.
\end{enumerate}
\rule{5pt}{5pt}
\end{prop}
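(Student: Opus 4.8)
The plan is to prove part 2 first and then deduce part 1 from it, since part 2 furnishes the working description of $Z(L)$ as the set of elements that \emph{commute} with everything. Recall that $z$ commutes with $a$ precisely when $a = (a\land z)\lor(a\land\neg z)$, and that this relation turns out to be symmetric in an orthomodular lattice. For the forward implication of part 2, assume $z\in Z(L)$. Taking $b=\neg z$ in the definition of centrality gives $(a,\neg z,z)T$, so in particular the distributive identity for the permutation placing $z,\neg z$ in the join yields $a = 1\land a = (z\lor\neg z)\land a = (z\land a)\lor(\neg z\land a)$. This is exactly the required decomposition, so the forward direction is immediate.

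The reverse implication of part 2 is where the real work lies, and I expect it to be the main obstacle. Here I would assume $a = (a\land z)\lor(a\land\neg z)$ for every $a$, i.e. $z$ commutes with each element of $L$, and must produce the full list of identities packaged in $(a,b,z)T$: both $D$ and $D^{*}$ for all permutations of the triple $a,b,z$, for arbitrary $a,b$. The natural tool is the Foulis--Holland theorem: if one of three elements of an orthomodular lattice commutes with the other two, then the three generate a distributive sublattice. Since $z$ commutes with both $a$ and $b$, this theorem delivers distributivity of the sublattice generated by $\{a,b,z\}$, which contains every instance of $D$ and $D^{*}$ for every permutation, hence $(a,b,z)T$ and $z\in Z(L)$. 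The delicate point is justifying Foulis--Holland itself; if one does not wish to invoke it as a black box, it must be reconstructed from the orthomodular law by a careful chain of lattice manipulations starting from the commutation hypothesis, and this is the genuinely technical part of the argument.

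For part 1, I would work entirely through the characterization just established, viewing $Z(L)$ as the set of elements commuting with all of $L$. The containments $0,1\in Z(L)$ are checked directly, and closure under $\neg$ is immediate since $a=(a\land z)\lor(a\land\neg z)=(a\land\neg z)\lor(a\land\neg\neg z)$, so that $\neg z$ again commutes with every $a$. For closure under $\land$ and $\lor$, given $z_1,z_2\in Z(L)$ and any $a$, the element $a$ commutes with both $z_1$ and $z_2$, so Foulis--Holland makes $\{a,z_1,z_2\}$ distributive; within a distributive sublattice the commutation relation is preserved under meets and joins, whence $a$ commutes with $z_1\land z_2$ and with $z_1\lor z_2$. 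Thus $Z(L)$ is a bounded sublattice closed under $\neg$. It is distributive because any three of its elements lie in a common distributive sublattice (each central element commutes with the other two), and each $z\in Z(L)$ has a lattice complement $\neg z\in Z(L)$ satisfying $z\land\neg z=0$ and $z\lor\neg z=1$, which hold in every orthomodular lattice. Since a bounded, distributive, complemented lattice is a Boolean algebra, this completes part 1.
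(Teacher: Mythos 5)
You should first note that the paper does not actually prove this proposition: it is imported as background, with both items attributed to Maeda--Maeda ({\rm \cite[Theorem 4.15]{MM}} and {\rm \cite[Lemma 29.9]{MM}}), and the end-of-proof box marks a citation rather than an argument. Your reconstruction is the standard textbook route (essentially Kalmbach's treatment): the forward half of part 2 by specializing the defining triple identities to $(z,\neg z,a)$, and the reverse half plus all the closure properties in part 1 by the Foulis--Holland theorem, using the symmetry of the commutation relation $aCz \Leftrightarrow a=(a\land z)\lor(a\land\neg z)$ in an orthomodular lattice. This is sound and correctly locates the technical core in Foulis--Holland, which you openly treat as a black box; the trade-off is that the paper's citation keeps the exposition short, while your version makes the proposition self-contained modulo one classical theorem. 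One step needs tightening: the claim that ``within a distributive sublattice the commutation relation is preserved under meets and joins'' does not quite work as stated, because verifying $a\,C\,(z_1\land z_2)$ requires the element $\neg(z_1\land z_2)$, which need not lie in the lattice sublattice generated by $\{a,z_1,z_2\}$. The standard repair is either to invoke the corollary of Foulis--Holland that the commutant $C(a)=\{x: x\,C\,a\}$ is closed under $\land$, $\lor$ and $\neg$, or to argue directly: since commutation is preserved by orthocomplementation, $a$ commutes with $\neg z_1$ and $\neg z_2$, so Foulis--Holland applied to $\{a,\neg z_1,\neg z_2\}$ gives $a\land(\neg z_1\lor\neg z_2)=(a\land\neg z_1)\lor(a\land\neg z_2)$, and combining this with the decompositions of $a$ along $z_1$ and of $a\land z_1$ along $z_2$ yields $a=(a\land z_1\land z_2)\lor\bigl(a\land\neg(z_1\land z_2)\bigr)$ as required. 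With that one-line patch the whole argument goes through, and the remainder of part 1 (boundedness, closure under $\neg$, distributivity of any central triple, complementation, hence Booleanness) is exactly as you wrote it.
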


Now we recall from \cite{AD, FOU, KAL}  some notions about Baer
$^*$-semigroups. A {\em Baer $^*$-semigroup} \cite{FOU} also called
Foulis semigroup \cite{AD, BLIJ, KAL} is an algebra $\langle S,
\cdot , ^*, ', 0 \rangle$ of type $\langle 2,1,1,0 \rangle$ such
that, upon defining $1= 0'$, the following conditions are satisfied:

\begin{enumerate}
\item
$\langle S, \cdot \rangle$ is a semigroup,

\item
$0\cdot x = x \cdot 0 = 0$,

\item
$1\cdot x = x \cdot 1 = x$,

\item
$(x \cdot y)^* = y^* \cdot x^*$,

\item
$x^{**} = x $,

\item
$x\cdot x' = 0$,

\item
$x' \cdot x' = x' = (x')^*$,

\item
$x'\cdot y \cdot (x\cdot y)' = y\cdot (x\cdot y)' $.

\end{enumerate}

Let $S$ be a Baer $^*$-semigroup. An element $e\in S$ is a {\em
projector} iff $e = e^* = e\cdot e$. The set of all projectors of
$S$ is denoted by $P(S)$. A projector $e \in P(S)$ is said to be
closed iff $e'' = e$. We denote by $P_c(S)$ the set of all closed
projectors. Moreover we can prove that: $$P_c(S) = \{x': x\in S \}$$
We can define a partial order $\langle P(S), \leq \rangle$ as
follows: $$e \leq f \Longleftrightarrow e \cdot f = e$$

In {\rm \cite[Theorem 37.2]{MM}} it is proved that, for any $e, f
\in P_c(S)$, $e \leq f$  iff $e\cdot S \subseteq f\cdot S$. The
facts stated in the next proposition are either proved in \cite{FOU}
or follow immediately from the results in \cite{FOU}:

\begin{prop}\label{PBAER1}
Let $S$ be a Baer $^*$-semigroup. Then:

\begin{enumerate}
\item
If $x,y \in P(S)$ and $x\leq y$ then $y'\leq x'$,

\item
$(x\cdot y)'' = (x'' \cdot y)'' \leq y''$,

\item
$(x^* \cdot x)'' = x''$,

\item
for each $x\in P_c(S)$, $0\leq x \leq 1$,

\item
$x \cdot y = 0$ \hspace{0.2cm} iff \hspace{0.2cm} $y = x'\cdot y$

\end{enumerate}
\rule{5pt}{5pt}
\end{prop}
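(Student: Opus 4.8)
The plan is to take item (5) as the cornerstone of the whole proposition, since it is the only part that uses the Baer identity $x'\cdot y\cdot(x\cdot y)'=y\cdot(x\cdot y)'$, and to derive the remaining items from it together with the involution axioms. For the forward implication of (5), if $x\cdot y=0$ then $(x\cdot y)'=0'=1$, so the Baer identity collapses, via $1\cdot z=z\cdot 1=z$, to $x'\cdot y=y$; conversely, if $y=x'\cdot y$ then $x\cdot y=(x\cdot x')\cdot y=0\cdot y=0$ using $x\cdot x'=0$ and $0\cdot z=0$. Item (4) I would dispatch immediately: $0\le x$ is just the identity $0\cdot x=0$ and $x\le 1$ is just $x\cdot 1=x$, so it holds for every element of $S$, a fortiori on $P_c(S)$.

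Before turning to (1)--(3) I would record two auxiliary identities, both obtained by playing (5) against the involution. First, $x\cdot x''=x$: taking adjoints in $x\cdot x'=0$ (with $(x')^*=x'$ and $x^{**}=x$) gives $x'\cdot x^*=0$, so (5) yields $x^*=x''\cdot x^*$, and a second adjoint (with $(x'')^*=x''$) gives $x\cdot x''=x$. Second, a cancellation law $w^*\cdot w=0\Rightarrow w=0$: applying (5) to $w^*\cdot w=0$ gives $w=(w^*)'\cdot w$, and passing to adjoints produces $w^*=w^*\cdot(w^*)'=0$, whence $w=w^{**}=0$. I expect these two identities to be the main obstacle: they are the only places where one must combine the annihilator reading of $'$ with the self-adjointness of primed elements, and the sole non-mechanical move is making the adjoint land on the correct factor.

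With these in hand the remaining items follow. For (1), $x\le y$ means $x\cdot y=x$, so $x\cdot y'=(x\cdot y)\cdot y'=x\cdot(y\cdot y')=0$; then (5) gives $y'=x'\cdot y'$, and an adjoint gives $y'\cdot x'=y'$, that is $y'\le x'$. For the inequality in (2), $y\cdot y'=0$ yields $(x\cdot y)\cdot y'=0$, so (5) and an adjoint give $y'\le(x\cdot y)'$, and the contravariance just proved in (1) gives $(x\cdot y)''\le y''$. For the equality $(x\cdot y)''=(x''\cdot y)''$ I would prove that the two right annihilators coincide: if $x\cdot y\cdot z=0$ then (5) gives $y\cdot z=x'\cdot y\cdot z$, hence $x''\cdot y\cdot z=(x''\cdot x')\cdot y\cdot z=0$ because $x''\cdot x'=(x'\cdot x'')^*=0$; conversely $x''\cdot y\cdot z=0$ forces $x\cdot y\cdot z=x\cdot x''\cdot y\cdot z=0$ by $x\cdot x''=x$. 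Equal annihilators mean $(x\cdot y)'\cdot S=(x''\cdot y)'\cdot S$ (by (5), $\{z:a\cdot z=0\}=a'\cdot S$), so $(x\cdot y)'=(x''\cdot y)'$ by \cite[Theorem 37.2]{MM} and therefore $(x\cdot y)''=(x''\cdot y)''$. Finally (3) goes the same way: $x\cdot x'=0$ gives $x^*\cdot x\cdot x'=0$ and hence $x'\le(x^*\cdot x)'$, while with $w=x\cdot(x^*\cdot x)'$ one gets $w^*\cdot w=(x^*\cdot x)'\cdot(x^*\cdot x)\cdot(x^*\cdot x)'=0$, so $w=0$ by the cancellation law and thus $(x^*\cdot x)'\le x'$; the two inequalities give $(x^*\cdot x)'=x'$ and so $(x^*\cdot x)''=x''$.
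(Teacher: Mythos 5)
Your proof is correct, and it is genuinely different in character from what the paper does: the paper offers no argument at all, stating that these facts ``are either proved in \cite{FOU} or follow immediately from the results in \cite{FOU}'' (with the equivalence of item 5 and axiom 8 delegated to \cite[Proposition 2]{AD}), whereas you reconstruct everything from the eight listed axioms. Your architecture --- prove (5) first from the Baer identity, then extract the two Foulis lemmas $x\cdot x''=x$ and $w^*\cdot w=0\Rightarrow w=0$, then read $'$ through the annihilator identity $\{z: a\cdot z=0\}=a'\cdot S$ --- in effect recovers Foulis's original development, in which $'$ is \emph{defined} by $x'\cdot S$ being the right annihilator of $x$; so your route is the natural self-contained version of the citation. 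Every step checks: the adjoint manipulations are legitimate because $(x')^*=x'$ and $x^{**}=x$, antisymmetry of $\leq$ on projections follows by taking adjoints in $e\cdot f=e$, $f\cdot e=f$, and the one unstated micro-step, $0^*=0$, follows from $1'=1\cdot 1'=0$ together with axiom 7. Two small remarks. First, your appeal to \cite[Theorem 37.2]{MM} for the equality $(x\cdot y)'=(x''\cdot y)'$ is permissible (the paper quotes that theorem just before the proposition), but it is avoidable: from equal annihilators, $(x\cdot y)\cdot(x\cdot y)'=0$ gives $(x''\cdot y)\cdot(x\cdot y)'=0$, whence by (5) and an adjoint $(x\cdot y)'\leq(x''\cdot y)'$, and symmetrically --- this keeps the whole proof inside your own toolkit. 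Second, your cancellation law $w^*\cdot w=0\Rightarrow w=0$ is exactly the classical ``$^*$-cancellation'' lemma of Foulis, and identifying it as the crux (your stated ``main obstacle'') is accurate: it is the only place where the annihilator and involution structures must interact nontrivially. In short: where the paper buys brevity by outsourcing to the literature, you buy verifiability within the paper's axiom list, at the cost of about a page of computation.
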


Observe that, item 5  was one of the original conditions in the
definition of a Baer *-semigroup in \cite{FOU}. In the presence of
conditions 1 ...7 of the definition of a Baer
*-semigroup, the latter condition is equivalent to condition 8 (see \cite[Proposition 2]{AD}).

\begin{theo}\label{PRO1}{\rm \cite[Theorem 37.8]{MM}}
Let $S$ be a Baer $^*$-semigroup. For any $e_1, e_2 \in P_c(S)$,
we define the following operations:

\begin{enumerate}
\item[]
$e_1 \land e_2 = e_1\cdot (e_2' \cdot e_1)'$,

\item[]
$e_1 \lor e_2 = (e_1' \land e_2')'$.

\end{enumerate}

\noindent Then $\langle P_c(S), \land, \lor, ', 0,1  \rangle$ is an
orthomodular lattice with respect to the order $\langle P(S), \leq
\rangle$. \rule{5pt}{5pt}
\end{theo}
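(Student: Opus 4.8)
The plan is to verify, one by one, the defining conditions of an orthomodular lattice for $\langle P_c(S),\land,\lor,',0,1\rangle$. First I would record the facts about the involution that come almost for free. Since $P_c(S)=\{x':x\in S\}$ and, for every $s\in S$, the element $s'$ is a self-adjoint idempotent (by $s'\cdot s'=s'=(s')^*$) which is closed (as $s'''=s'$), every closed projector is a genuine projector, and $'$ maps $P_c(S)$ into itself. On $P_c(S)$ the operation $'$ is then an antitone involution: order-reversing by Proposition \ref{PBAER1}(1) and involutive because $e''=e$. The bounds are handled by Proposition \ref{PBAER1}(4), which gives $0\le e\le 1$ for all $e\in P_c(S)$.

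The heart of the argument is to show that $m:=e_1\cdot(e_2'\cdot e_1)'$ is the infimum of $e_1,e_2$ in $\langle P_c(S),\le\rangle$, and I would proceed in three steps. \textbf{(i)} Show $m\in P_c(S)$; this is the delicate point and where I expect the main obstacle, since $m$ is a product $e_1\cdot t$ of two projectors ($t=(e_2'\cdot e_1)'$) and need not be self-adjoint unless $e_1$ and $t$ interact correctly. The key tool is condition $x'\cdot y\cdot(x\cdot y)'=y\cdot(x\cdot y)'$: with $x=e_2'$, $y=e_1$ it yields $e_2\cdot m=m$, which together with the idempotency identity $e_1\cdot m=m$ and $(e_2'\cdot e_1)\cdot(e_2'\cdot e_1)'=0$ is what one leverages, via the Baer axioms and $s'''=s'$, to conclude that $m$ is a self-adjoint idempotent in $P_c(S)$. \textbf{(ii)} Show $m$ is a lower bound: from $e_1\cdot m=m$ and $e_2\cdot m=m$, transposing by $^*$ converts these left identities into $m\cdot e_1=m$ and $m\cdot e_2=m$, i.e. $m\le e_1$ and $m\le e_2$. \textbf{(iii)} Show $m$ is greatest: for any common lower bound $g\in P_c(S)$, self-adjointness gives $e_1\cdot g=g$ and $e_2\cdot g=g$, hence $e_2'\cdot g=e_2'\cdot e_2\cdot g=0$ (using $e_2\cdot e_2'=0$, so $e_2'\cdot e_2=0$). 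Then $(e_2'\cdot e_1)\cdot g=e_2'\cdot g=0$, and Proposition \ref{PBAER1}(5) gives $g=(e_2'\cdot e_1)'\cdot g$; transposing by $^*$ yields $g\cdot(e_2'\cdot e_1)'=g$, so $g\cdot m=g\cdot e_1\cdot(e_2'\cdot e_1)'=g$, that is $g\le m$.

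With the meet established as a genuine infimum, the join $e_1\lor e_2=(e_1'\land e_2')'$ needs no separate lattice-theoretic work: since $'$ is an antitone involution on $P_c(S)$, it carries the infimum of $e_1',e_2'$ to the supremum of $e_1,e_2$, so $\lor$ is the join and $\langle P_c(S),\le\rangle$ is a lattice. The De Morgan law $(e_1\lor e_2)'=e_1'\land e_2'$ is then immediate from the definition of $\lor$ and $e''=e$, and $(e_1\land e_2)'=e_1'\lor e_2'$ follows dually; this, with $e''=e$ and the bounds, establishes that $\langle P_c(S),\land,\lor,',0,1\rangle$ is a bounded lattice with involution.

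It remains to verify the two orthomodular conditions. The orthocomplement law $x\land\neg x=0$ is a direct computation from the meet formula: $e\land e'=e\cdot((e')'\cdot e)'=e\cdot(e\cdot e)'=e\cdot e'=0$, using $e''=e$, idempotency of $e$, and $e\cdot e'=0$. The remaining identity $e\lor(e'\land(e\lor f))=e\lor f$ is the subtlest point; I would verify it by unfolding the explicit formulas for $\land$ and $\lor$ and reducing with condition $x'\cdot y\cdot(x\cdot y)'=y\cdot(x\cdot y)'$ together with Proposition \ref{PBAER1}(5), or equivalently by transporting the computation along the order-isomorphism $e\mapsto e\cdot S$ onto the lattice of closed right annihilators (via $e\le f$ iff $e\cdot S\subseteq f\cdot S$), where the orthomodular identity can be checked directly. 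Completing this last step finishes the proof. \qed
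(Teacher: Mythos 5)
The paper itself offers no proof of this theorem: it is imported verbatim as Theorem 37.8 of Maeda--Maeda \cite{MM}, so your attempt has to be measured against the classical Foulis/Maeda--Maeda argument rather than against anything in the text. Most of your reconstruction is sound and completable. In step (i), the self-adjointness you defer does follow from exactly the tools you name: from $e_2\cdot m=m$ (axiom 8) one gets $(e_2'\cdot e_1)\cdot m=e_2'\cdot(e_1\cdot m)=e_2'\cdot e_2\cdot m=0$, so Proposition \ref{PBAER1}(5) yields $m=(e_2'\cdot e_1)'\cdot m=t\cdot m=t\cdot e_1\cdot t$, which is visibly self-adjoint; idempotency then follows from $e_1\cdot m=m$. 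But note that \emph{closedness} of $m$, which you fold into the conclusion of (i) without argument, does not come out of that computation at all. It is rescued by your own step (iii): since $m\le e_i$ gives $m''\le e_i''=e_i$ (Proposition \ref{PBAER1}(1) twice), $g=m''$ is a closed common lower bound, so (iii) gives $m''\le m\le m''$. You never make this observation, so as written (i) is incomplete, though fixable. Steps (ii), (iii), the join via the antitone involution, and $e\land e'=e\cdot(e\cdot e)'=e\cdot e'=0$ are all correct.

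The genuine gap is the orthomodular identity $x\lor(\neg x\land(x\lor y))=x\lor y$, which you explicitly leave as a promissory note with two unexecuted strategies. This is the crux of the theorem: axiom 8 (the Foulis condition) is built into the definition of a Baer $^*$-semigroup precisely to force orthomodularity --- without it $P_c(S)$ would only be an ortholattice --- so a proof that stops here has omitted the theorem's essential content. Nor is the missing step routine: taking $e\le f$, the meet formula gives $f\land e'=f\cdot(e''\cdot f)'=f\cdot(e\cdot f)'=f\cdot e'$, and one must then show that any closed upper bound $h$ of $e$ and $f\cdot e'$ satisfies $f\cdot h'=0$ (using the characterization $e\le f\Leftrightarrow e\cdot f'=0$ for closed projections, itself a consequence of Proposition \ref{PBAER1}(5)); this is where Maeda--Maeda invoke Theorem 37.2 ($e\le f$ iff $e\cdot S\subseteq f\cdot S$) and the Sasaki-projection machinery of their \S 37. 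Your second proposed strategy --- transporting to the lattice of closed right annihilators ``where the orthomodular identity can be checked directly'' --- begs the question, since verifying orthomodularity of that annihilator lattice is exactly the nontrivial point, not a direct check. In short: the skeleton and the meet construction are right, but the orthomodular law remains unproven.
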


We can build a Baer $^*$-semigroup from an orthomodular lattice \cite{FOU}. In the following we briefly describe this construction.

Let $\langle A, \leq, 0,1 \rangle$ be a bounded partial ordered set.
An order-preserving function $\phi: A \rightarrow A$ is called {\em
residuated function} iff there is another order-preserving function
$\phi^+: A \rightarrow A$, called a {\em residual function} of
$\phi$ such that $\phi \phi^+(x)  \leq x \leq  \phi^+\phi (x)$. It
can be proved that if $\phi$ admits a residual function $\phi^+$, it
is completely determined  by $\phi$.

\begin{rema}
{\rm We will adopt the notation in \cite[$\S$1]{AD} in which
residuated functions are written on the right. More precisely, if
$\phi, \psi$ are residuated functions, $x\phi$ indicates the value
$\phi(x)$ and $\psi \phi$ is interpreted as the function $x\psi \phi
= (x\psi) \phi$. }
\end{rema}

We denote by $S(A)$ the set of residuated functions of $A$. Let $\theta$ be the constant
function in $A$ given by $x\theta = 0$. Clearly $\theta$ is an order-preserving function
and $\theta^+$ is the constant function $x\theta^+ = 1$. Thus $\theta \in S(A)$ and $\langle S(A),
\circ, \theta  \rangle$, where $\psi \circ \phi =  \psi \phi$, is a semigroup.

\begin{theo}\label{PRO2}{\rm \cite[Proposition 2]{AD}}
Let $L$ be an orthomodular lattice. For each $a\in L$ we define
\begin{center}
$x\phi_a = (x\lor \neg a) \land a$ {\rm({\em Sasaki projection})}
\end{center}
If we define the following unary operations in $S(L)$:

\begin{itemize}
\item[]
$\phi^*$: such that $x\phi^* = \neg ((\neg x) \phi^+ )$,

\item[]
$\phi': = \phi_{\neg 1\phi} $

\end{itemize}

then:

\begin{enumerate}
\item
$\langle S(L), \circ, ^*, ', \theta \rangle$ is a Baer $^*$-semigroup.

\item
$P_c(S(L)) = \{\phi_a : a \in L \}$,

\item
$f_L:L \rightarrow P_c(S(L))$ such that $f_L(a) = \phi_a$ is an
${\mathcal{OML}}$-isomorphism.
\end{enumerate}
\rule{5pt}{5pt}
\end{theo}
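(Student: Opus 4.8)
The plan is to verify the three numbered claims in order, using the residuated-function machinery and the abstract Baer $^*$-semigroup structure theory already recalled. The governing principle throughout is that every residuated function $\phi$ on a bounded poset is determined by its residual $\phi^+$, and that the Sasaki projection $\phi_a$ is residuated with an explicitly computable residual; so most of the work is reduced to checking identities of order-preserving maps by evaluating them pointwise and appealing to the orthomodular law.

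First I would establish that each Sasaki projection $x\phi_a = (x \lor \neg a)\land a$ is a residuated function. The natural candidate residual is $x\phi_a^+ = (x\land a)\lor \neg a$ (the dual Sasaki map), and I would verify $\phi_a\phi_a^+(x)\leq x \leq \phi_a^+\phi_a(x)$ directly; the orthomodular identity $x = (x\land a)\lor(x\land \neg a)$ for elements comparable to $a$ is exactly what makes these inequalities work, and uniqueness of the residual then pins down $\phi_a^+$. With residuation in hand, I would show $\langle S(L),\circ,{}^*,{}',\theta\rangle$ satisfies the eight axioms of a Baer $^*$-semigroup. Associativity of composition and the zero/unit laws (with $1 = \theta' = \mathrm{id}$) are routine. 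For the involution I would check that $\phi^*$ defined by $x\phi^* = \neg((\neg x)\phi^+)$ is again residuated, that $\phi^{**}=\phi$ (this follows from double negation $\neg\neg x = x$ and uniqueness of residuals), and that $(\psi\circ\phi)^* = \phi^*\circ\psi^*$, which reduces to the contravariance of taking residuals under composition, namely $(\psi\phi)^+ = \phi^+\psi^+$. The remaining axioms 6--8 govern the closure operation $\phi' = \phi_{\neg(1\phi)}$; here I would compute $1\phi = \sup\{\,x : \text{relevant}\,\}$ type quantities and reduce each identity to an orthomodular lattice computation in $L$.

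For claim 2, that the closed projectors are exactly the Sasaki projections, I would argue in two directions. Each $\phi_a$ is idempotent and self-adjoint ($\phi_a\circ\phi_a = \phi_a$ and $\phi_a^* = \phi_a$), hence a projector; computing $\phi_a'' = \phi_a$ shows it is closed, giving $\{\phi_a\}\subseteq P_c(S(L))$. Conversely, using $P_c(S) = \{\phi' : \phi\in S\}$ from the general theory together with the definition $\phi' = \phi_{\neg(1\phi)}$, every closed projector already has the form $\phi_b$ for $b = \neg(1\phi)$, which proves the reverse inclusion. This simultaneously identifies the orthocomplement: $\phi_a' = \phi_{\neg a}$.

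For claim 3, I would show $f_L(a) = \phi_a$ is an ${\mathcal{OML}}$-isomorphism onto $P_c(S(L))$ equipped with the operations of Theorem \ref{PRO1}. Surjectivity is claim 2. Injectivity and order-preservation follow from the equivalence $\phi_a\leq \phi_b \Longleftrightarrow a\leq b$, which I would obtain by evaluating the projection order $\phi_a\circ\phi_b = \phi_a$ at the top element. That $f_L$ commutes with $'$ is the identity $\phi_a' = \phi_{\neg a}$ from the previous step; meets and joins then transfer automatically, since both structures are lattices determined by the same order and the same orthocomplementation, so $\phi_{a\land b} = \phi_a\land\phi_b$ and dually. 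I expect the \emph{main obstacle} to be the verification of axiom 8 (the Foulis identity $x'\cdot y\cdot(x\cdot y)' = y\cdot(x\cdot y)'$) for composed Sasaki projections, since this requires manipulating the residuals of composites rather than a single projection and leans most heavily on the orthomodular law rather than mere lattice identities; the rest is bookkeeping once residuation and the formula $\phi_a' = \phi_{\neg a}$ are secured.
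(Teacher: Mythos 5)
The paper does not prove this theorem at all: it is imported verbatim from \cite[Proposition 2]{AD} (ultimately from Foulis \cite{FOU} and Maeda--Maeda \cite{MM}) with a tombstone and no argument, so there is no in-paper proof to match your attempt against. Judged on its own, your reconstruction is the standard Foulis--Adams argument and is essentially sound: the residual of the Sasaki projection is indeed $y\phi_a^+ = (y\land a)\lor\neg a$ (the Galois condition $x\phi_a\le y \iff x\le y\phi_a^+$ is exactly two applications of the orthomodular/Foulis--Holland identities), $\phi^*$ is residuated with residual $x\mapsto\neg((\neg x)\phi)$ so that $\phi^{**}=\phi$ and the involution reverses products via $(\psi\phi)^+=\phi^+\psi^+$, and your argument for claim 2 --- $\phi_a'=\phi_{\neg a}$ gives one inclusion, and $P_c(S)=\{x':x\in S\}$ together with the definition $\phi'=\phi_{\neg(1\phi)}$ gives the other --- is exactly right and pleasantly short.

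Two refinements are needed to make the sketch airtight. First, evaluating $\phi_a\cdot\phi_b=\phi_a$ at the top element only yields the direction $\phi_a\le\phi_b\Rightarrow a\le b$ (from $a\phi_b=a$); the converse $a\le b\Rightarrow\phi_a\phi_b=\phi_a$ must be checked pointwise at every $x$, using $x\phi_a\le a\le b$ and the identity $(c\lor\neg b)\land b=c$ for $c\le b$ --- a one-line OML computation, but not a consequence of the value at $1$ alone. Second, you can avoid the ``main obstacle'' you anticipate: rather than verifying the Foulis identity (axiom 8) directly for composites, verify the annihilator condition of Proposition \ref{PBAER1}-5, namely $\phi\psi=\theta$ iff $\psi=\phi'\psi$, which the paper explicitly notes is equivalent to axiom 8 in the presence of axioms 1--7 (see the remark citing \cite[Proposition 2]{AD}). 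This is immediate: since residuated maps preserve joins and $0\phi=0$, one has $\phi\psi=\theta$ iff $(1\phi)\psi=0$, and writing $a=1\phi$, orthomodularity in the form $x\lor a = a\lor((x\lor a)\land\neg a)$ gives $x\psi=((x\lor a)\land\neg a)\psi = x(\phi'\psi)$ for all $x$, with the converse obtained by evaluating at $a$. With these two repairs your outline compiles into a complete proof along the classical lines.
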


If $L$ is an orthomodular lattice, the Baer $^*$-semigroup $\langle
S(L), \circ, ^*, ', \theta \rangle$, or $S(L)$ for short, will be
referred to  as {\em the  Baer $^*$-semigroup of the residuated
functions of $L$}.

Let $L$ be an orthomodular lattice. We say that a Baer
$^*$-semigroup $S$ {\em coordinatizes} $L$ iff $L$ is
${\mathcal{OML}}$-isomorphic to $P_c(S)$.

\section{Two-valued states and Baer $^*$-semigroups } \label{EVENTSEC}

The study of  two-valued states becomes relevant in different
frameworks. From a physical point of view, two-valued states are
distinguished among the set of all classes of states because of
their relation to hidden variable theories of quantum mechanics
\cite{GUD}. Another motivation for the analysis of  two-valued
states is rooted in the study of algebraic and topological
representations of the event structures in quantum logic. Examples
of them are the characterization of Boolean orthoposets by means of
two-valued states \cite{TK3} and the representation of orthomodular
lattices via clopen sets in a compact Hausdorff closure space
\cite{TK2}, later  extended to orthomodular posets in \cite{HP}. We
are interested in a theory of two-valued states on Baer
$^*$-semigroups as a natural extension of two-valued states on
orthomodular lattice. Formally, a {\em two-valued state} on an
orthomodular $L$ is a function $\sigma:L \rightarrow \{0,1\}$
satisfying the following :
\begin{enumerate}
\item
$\sigma(1) = 1$,

\item
if $x \bot y$ then $\sigma(x \lor y) = \sigma(x) + \sigma(y)$.
\end{enumerate}

Let $L$ be an orthomodular lattice and $\sigma:L \rightarrow
\{0,1\}$ be a two-valued state. The following properties are derived
directly from the definition of two-valued state:

\begin{center}
$\sigma(\neg x) = 1-\sigma(x)$ \hspace{0.2cm} and \hspace{0.2cm} if $x\leq y$ then $\sigma(x) \leq \sigma(y)$
\end{center}

Based on the above mentioned two properties, in \cite{DFD}, Boolean
pre-states are introduced as a general theoretical framework to
study families of two-valued states on orthomodular lattices. We
shall use these ideas for a general study of two-valued states
extended to Baer $^*$-semigroups. Thus, we first give the definition
of Boolean pre-state.

\begin{definition}
{\rm Let $L$ be an orthomodular lattice. By a {\em Boolean
pre-state} on $L$ we mean a function $\sigma:L \rightarrow \{0,1\}$
such that:
\begin{enumerate}
\item
$\sigma(\neg x) = 1 - \sigma(x)$,
\item
if $x\leq y$ then $\sigma(x) \leq \sigma(y)$.
\end{enumerate}
}
\end{definition}

\begin{example}
{\rm Let us consider the orthomodular lattice $MO2 \times {\mathbf
2}$ whose Hasse diagram has the following form: \vspace{0.8cm}

\begin{center}
\unitlength=1mm
\begin{picture}(60,20)(0,0)
\put(18,20){\line(5,-3){21}}
\put(18,20){\line(5,-6){21}}
\put(18,20){\line(0,-8){13}}
\put(18,20){\line(-5,-6){21}}
\put(18,20){\line(-5,-3){22}}

\put(18,-18){\line(5,6){21}}
\put(18,-18){\line(-5,6){21}}
\put(18,-18){\line(0,6){12}}
\put(18,-18){\line(-5,3){21}}
\put(18,-18){\line(5,3){22}}

\put(18,20){\circle*{1.5}}
\put(-3.3,7){\circle*{1.5}}
\put(7,7){\circle*{1.5}}
\put(29,7){\circle*{1.5}}
\put(39,7){\circle*{1.5}}
\put(18,-18){\circle*{1.5}}
\put(18,7){\circle*{1.5}}

\put(-3,-5){\circle*{1.5}}
\put(7.2,-5.2){\circle*{1.5}}
\put(28.5,-5.3){\circle*{1.5}}
\put(39.2,-5){\circle*{1.5}}
\put(18,-6){\circle*{1.5}}

\put(21,26){\makebox(-5,0){$1$}}
\put(-6,7){\makebox(-5,0){$\neg a$}}
\put(6,7){\makebox(-5,0){$\neg b$}}
\put(16,7){\makebox(-5,0){$\neg c$}}
\put(35,7){\makebox(-5,0){$\neg d$}}
\put(46,7){\makebox(-5,0){$\neg e$}}
\put(21,-24){\makebox(-5,0){$0$}}

\put(-6,-5){\makebox(-4,0){$ a$}}
\put(6,-5){\makebox(-3,0){$ b$}}
\put(16,-5){\makebox(-4,0){$ c$}}
\put(35,-5){\makebox(-4,0){$ d$}}
\put(46,-5){\makebox(-4,0){$ e$}}

\put(7,7){\line(5,-6){11}}
\put(18,7){\line(5,-6){10.5}}
\put(18,7){\line(-5,-6){10.5}}
\put(29,7){\line(-5,-6){10.5}}

\put(18,7){\line(-5,-3){21}}
\put(39,7){\line(-5,-3){21}}
\put(-3,7){\line(5,-3){21}}
\put(18,7){\line(5,-3){21}}

\end{picture}
\end{center}

\vspace{2.5cm}

\noindent If we define the function $\sigma: MO2 \times {\mathbf 2}
\rightarrow \{0,1\}$ such that: $$\sigma(x) = \cases {1, & if $x \in
\{1,\neg a, \neg b, \neg c, \neg d, \neg e\}$ \cr 0 , & if $x \in
\{0, a, b, c, d, e\}$ \cr}$$ we can see that $\sigma$ is a Boolean
pre-state. This function fails to be a two-valued states since, $b
\leq \neg c$ but $\sigma(b\lor c) \not = \sigma(b) + \sigma(c)$. In
fact $\sigma(b\lor c) = \sigma(\neg a) = 1$ and $\sigma(b) +
\sigma(c) = 0$.}
\end{example}

We denote by  ${\mathcal E}_B$ the category whose objects are pairs
$(L,\sigma)$ such that $L$ is an orthomodular lattice and $\sigma$
is a  Boolean pre-state on $L$.  Arrows in ${\mathcal E}_B$ are
$(L_1, \sigma_1) \stackrel{f}{\rightarrow} (L_2, \sigma_2) $ such
that $f:L_1 \rightarrow L_2$ is an $OML$-homomorphism, and the
following diagram is commutative:

\begin{center}
\unitlength=1mm
\begin{picture}(20,20)(0,0)
\put(8,16){\vector(3,0){5}} \put(2,10){\vector(0,-2){5}} \put(10,4){\vector(1,1){7}}

\put(2,10){\makebox(13,0){$\equiv$}}

\put(1,16){\makebox(0,0){$L_1$}} \put(20,16){\makebox(0,0){$\{0,1\}$}}
 \put(2,0){\makebox(0,0){$L_2$}}
 \put(2,20){\makebox(17,0){$\sigma_1$}}
 \put(2,8){\makebox(-6,0){$f$}}
\put(18,2){\makebox(-4,3){$\sigma_2$}}
\end{picture}
\end{center}

\noindent
These arrows are called ${\mathcal E}_B$-homomorphisms.\\

Let $L$ be an orthomodular lattice and let $\sigma:L \rightarrow
\{0,1\}$ be a Boolean pre-state. Since $L$ we can identify with
$P_c(S(L))$, we ask whether the Boolean pre-state $\sigma$ admits a
natural extension to the whole of  $S(L)$. In other words, whether
there exists some kind of function of the form $\sigma^*:S(L)
\rightarrow \{0,1\}$ such that the following diagram is commutative:

\begin{center}
\unitlength=1mm
\begin{picture}(20,20)(0,0)
\put(8,16){\vector(3,0){5}} \put(2,10){\vector(0,-2){5}} \put(10,4){\vector(1,1){7}}

\put(2,10){\makebox(13,0){$\equiv$}}

\put(1,16){\makebox(0,0){$L$}} \put(20,16){\makebox(0,0){$\{0,1\}$}}
 \put(2,0){\makebox(0,0){$S(L)$}}
 \put(2,20){\makebox(17,0){$\sigma$}}
 \put(2,8){\makebox(-6,0){$f_L$}}
\put(18,2){\makebox(-4,3){$\sigma^*$}}
\end{picture}
\end{center}

\noindent where $f_L$ is the ${\mathcal{OML}}$-isomorphism  $f_L: L
\rightarrow P_c(S(L))$ given in Theorem \ref{PRO2}-3. The simplest
way to do this would be to associate with each element $\phi \in
S(L)$ an appropriate closed projection $\phi_x \in P_c(S(L))$ and to
define $\sigma^*(\phi) = \sigma^*(\phi_x) = \sigma(x)$. An obvious
choice for $\phi_x$ is $\phi'' = \phi_{1\phi}$. In virtue of this
suggestion, we introduce the following concept:

\begin{definition}\label{BOOLEANSTAR}
{\rm Let $S$ be a Baer $^*$-semigroup. A Boolean$^*$ pre-state over
$S$ is a function $\sigma: S \rightarrow \{0,1\}$ such that

\begin{enumerate}
\item
$\sigma(x') = 1 - \sigma(x)$

\item
the restriction $\sigma/_{P_c(S)}$ is a Boolean pre-state on $P_c(S)$.

\end{enumerate}
}
\end{definition}

We denote by ${\mathcal E}^*_B$ the category whose objects are pairs
$(S,\sigma)$ such that $S$ is a Baer $^*$-semigroup and $\sigma$ is
a Boolean pre-state on $S$. Arrows in ${\mathcal E}^*_B$ are $(S_1,
\sigma_1) \stackrel{f}{\rightarrow} (S_2, \sigma_2) $ such that
$f:S_1 \rightarrow S_2$ is a Baer $^*$-semigroup homomorphism, and
the following diagram is commutative:

\begin{center}
\unitlength=1mm
\begin{picture}(20,20)(0,0)
\put(8,16){\vector(3,0){5}} \put(2,10){\vector(0,-2){5}} \put(10,4){\vector(1,1){7}}

\put(2,10){\makebox(13,0){$\equiv$}}

\put(1,16){\makebox(0,0){$S_1$}} \put(20,16){\makebox(0,0){$\{0,1\}$}}
 \put(2,0){\makebox(0,0){$S_2$}}
 \put(2,20){\makebox(17,0){$\sigma_1$}}
 \put(2,8){\makebox(-6,0){$f$}}
\put(18,2){\makebox(-4,3){$\sigma_2$}}
\end{picture}
\end{center}

These arrows are called ${\mathcal E}^*_B$-homomorphisms. Up to now
we have presented a notion that would naturally extend  the notion
of Boolean pre-state to Baer $^*$-semigroups. However we have not
yet proved that this extension may be  formally  realized. This will
be shown in Theorem \ref{BSIGMA3}. To see this, we first need the
following basic results:

\begin{prop}\label{BS1}
Let $S$ be a Baer $^*$-semigroup and $\sigma$ be a Boolean$^*$ pre-state on $S$. Then

\begin{enumerate}

\item
$\sigma(x'') = \sigma(x)$.

\item
If $x,y \in P(S)$ and $x\leq y$ then, $\sigma(y')\leq \sigma(x')$ and  $\sigma(x)\leq \sigma(y)$.

\item
$\sigma(x \cdot y) = \sigma(x'' \cdot y) \leq \sigma(y)$.

\item
$\sigma(x^* \cdot x ) = \sigma(x)$.

\end{enumerate}
\end{prop}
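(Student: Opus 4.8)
The plan is to prove the four items of Proposition \ref{BS1} in the order listed, exploiting at each stage the structure already available: the two defining conditions of a Boolean$^*$ pre-state (Definition \ref{BOOLEANSTAR}), the algebraic identities of Proposition \ref{PBAER1}, and the fact that $\sigma$ restricted to $P_c(S)$ is an honest Boolean pre-state. Throughout I would keep in mind that $\sigma$ takes values in $\{0,1\}$, so the arithmetic relation $\sigma(x')=1-\sigma(x)$ is really a two-valued complementation: $\sigma(x')=1$ iff $\sigma(x)=0$.

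For item 1, I would apply condition~1 of Definition \ref{BOOLEANSTAR} twice. Writing $\sigma(x'')=\sigma((x')')=1-\sigma(x')=1-(1-\sigma(x))=\sigma(x)$ gives the claim immediately; no semigroup identities are needed here. For item 2, the element $x'$ lies in $P_c(S)$ whenever $x\in P(S)$ (indeed every closed projector is of the form $z'$), and from $x\le y$ Proposition \ref{PBAER1}-1 yields $y'\le x'$. Since $x',y'\in P_c(S)$ and the restriction of $\sigma$ is a Boolean pre-state, its monotonicity property (condition~2 of Definition of Boolean pre-state) applied to $y'\le x'$ gives $\sigma(y')\le\sigma(x')$. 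For the second inequality $\sigma(x)\le\sigma(y)$, I would combine this with item~1: from $\sigma(y')\le\sigma(x')$ and the complementation rule, $1-\sigma(y'')\le 1-\sigma(x'')$, i.e. $\sigma(x'')\le\sigma(y'')$, and since $x,y\in P(S)$ one has $x''=x$ is \emph{not} automatic, so the cleaner route is to note directly that monotonicity of the pre-state on $P_c(S)$ applied to $x''\le y''$ (obtained from $x\le y$ together with Proposition \ref{PBAER1}) gives $\sigma(x'')\le\sigma(y'')$, whence item~1 collapses this to $\sigma(x)\le\sigma(y)$.

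For item 3, the key input is Proposition \ref{PBAER1}-2, which states $(x\cdot y)''=(x''\cdot y)''\le y''$. Applying $\sigma$, using item~1 in the form $\sigma(z)=\sigma(z'')$, and then monotonicity on $P_c(S)$ through the closed projectors $(x\cdot y)''$, $(x''\cdot y)''$ and $y''$ should deliver $\sigma(x\cdot y)=\sigma(x''\cdot y)\le\sigma(y)$. I expect \emph{this} to be the step demanding the most care: one must justify that $\sigma(x\cdot y)=\sigma((x\cdot y)'')$ and likewise for $x''\cdot y$, which again is item~1, and then transport the order relation $(x\cdot y)''\le y''$ through $\sigma$ using monotonicity restricted to closed projectors. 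Finally, item~4 follows the same template via Proposition \ref{PBAER1}-3, namely $(x^*\cdot x)''=x''$: applying $\sigma$ and item~1 gives $\sigma(x^*\cdot x)=\sigma((x^*\cdot x)'')=\sigma(x'')=\sigma(x)$. The main obstacle is purely bookkeeping—making sure every time I pass from an arbitrary element to its double prime I have invoked item~1 correctly, and that all order comparisons to which monotonicity is applied are genuinely between closed projectors where the restricted pre-state is defined.
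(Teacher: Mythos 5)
Your proposal is correct and takes essentially the same route as the paper's proof: item 1 by applying $\sigma(x')=1-\sigma(x)$ twice, item 2 via Proposition \ref{PBAER1}-1 applied twice (to get $y'\leq x'$ and then $x''\leq y''$) combined with monotonicity of the restricted pre-state on $P_c(S)$ and item 1, and items 3 and 4 by transporting the identities of Proposition \ref{PBAER1}-2 and \ref{PBAER1}-3 through item 1. Your momentary detour in item 2 through the complementation rule would also have closed (item 1 makes the worry about $x''=x$ failing for $x\in P(S)$ harmless, since $\sigma(x)=\sigma(x'')$ regardless), but the route you settle on is exactly the paper's.
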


\begin{proof}
1) Is immediate. 2) Suppose that $x,y \in P(S)$ and $x\leq y$.  By
Proposition \ref{PBAER1}-1, $y'\leq x'$ and taking into account that
$x',y' \in P_c(S)$, $\sigma(y')\leq \sigma(x')$.   By Proposition
\ref{PBAER1}-1 again and since $y'\leq x'$ we have that $x'' \leq
y''$. Hence, by item 1, $\sigma(x) = \sigma(x'') \leq  \sigma(y'') =
\sigma(y)$. 3) By Proposition \ref{PBAER1}-2, $(x\cdot y)'' = (x''
\cdot y)'' \leq y''$. By item 1, $\sigma(x\cdot y) = \sigma((x\cdot
y)'') = \sigma((x'' \cdot y)'') = \sigma(x'' \cdot y)$. Since  $(x''
\cdot y)''$ and  $y''$ are closed projections, by item 1, we have
that $\sigma(x'' \cdot y) =\sigma((x'' \cdot y)'') \leq \sigma(y'')
= \sigma(y)$. 4) By Proposition \ref{PBAER1}-3  $(x^* \cdot x)'' =
x''$. Then, by item 1, $\sigma(x^* \cdot x) = \sigma((x^* \cdot
x)'') = \sigma(x'') = \sigma(x)$. \rule{5pt}{5pt}
\end{proof}

\begin{theo}\label{BSIGMA3}
Let $S$ be a Baer $^*$-semigroup and $\sigma$ a Boolean pre-state on
$P_c(S)$. Then $\sigma_S$ defined as  $$\sigma_S(x) = \sigma(x'')$$
is the unique  Boolean$^*$ pre-state on $S$ such that $\sigma_S
/_{P_c(S)} = \sigma $.

\end{theo}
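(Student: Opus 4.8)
The plan is to establish three things in turn: that $\sigma_S$ is well defined, that it satisfies the two clauses of Definition \ref{BOOLEANSTAR}, and that it restricts to $\sigma$ on $P_c(S)$; uniqueness is then a short separate argument leaning on Proposition \ref{BS1}-1. Well-definedness is the first observation: for every $x\in S$ we have $x''=(x')'\in P_c(S)$, because $P_c(S)=\{y':y\in S\}$, so $\sigma(x'')$ is meaningful since $\sigma$ is defined on the closed projectors. The guiding intuition is exactly the heuristic stated just before the definition, namely that each element should be collapsed onto its associated closed projector $x''=\phi_{1\phi}$ and then evaluated by $\sigma$.

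For the axioms, I would first check clause 2, the restriction clause, since it is the cheapest: if $e\in P_c(S)$ then $e$ is a closed projector, so $e''=e$ and hence $\sigma_S(e)=\sigma(e'')=\sigma(e)$. This shows simultaneously that $\sigma_S/_{P_c(S)}=\sigma$ (settling the restriction requirement of the statement) and that this restriction is a Boolean pre-state, since it coincides with the hypothesised $\sigma$. For clause 1 I must verify $\sigma_S(x')=1-\sigma_S(x)$, that is $\sigma(x''')=1-\sigma(x'')$. The key point is that $x'\in P_c(S)$ is itself a closed projector, so $x'''=(x')''=x'$, whence the left-hand side equals $\sigma(x')$. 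Now both $x''$ and its orthocomplement in the orthomodular lattice $P_c(S)$ — which by Theorem \ref{PRO1} is computed with the operation $'$ — are closed projectors, and $(x'')'=x'''=x'$. Applying clause 1 of the definition of Boolean pre-state to $\sigma$ on $P_c(S)$, where the orthomodular negation is precisely $'$, gives $\sigma(x')=\sigma\big((x'')'\big)=1-\sigma(x'')=1-\sigma_S(x)$, as required.

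For uniqueness, suppose $\tau$ is any Boolean$^*$ pre-state on $S$ with $\tau/_{P_c(S)}=\sigma$. By Proposition \ref{BS1}-1 every Boolean$^*$ pre-state satisfies $\tau(x)=\tau(x'')$; since $x''\in P_c(S)$ this yields $\tau(x)=\tau(x'')=\sigma(x'')=\sigma_S(x)$ for all $x\in S$, so $\tau=\sigma_S$. The only genuinely delicate point in the whole argument is the careful bookkeeping of the orthocomplements in clause 1 — recognising that $x'''=x'$ and that the negation of the orthomodular lattice $P_c(S)$ is exactly the restriction of the operation $'$ — after which everything collapses to the single identity $\sigma(x')=1-\sigma(x'')$ supplied by the Boolean pre-state structure of $\sigma$ on $P_c(S)$.
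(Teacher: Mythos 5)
Your proof is correct and follows essentially the same route as the paper's: well-definedness via $x''\in P_c(S)$, the restriction identity $\sigma_S(e)=\sigma(e'')=\sigma(e)$ for $e\in P_c(S)$, the computation $\sigma_S(x')=\sigma(x''')=1-\sigma(x'')=1-\sigma_S(x)$, and uniqueness from $\tau(x)=\tau(x'')$ for any competing Boolean$^*$ pre-state $\tau$. Your explicit bookkeeping $x'''=(x'')'$ with the negation of $P_c(S)$ being the restriction of $'$ merely spells out a step the paper leaves implicit, and your citation of Proposition \ref{BS1}-1 in the uniqueness argument is in fact the correct item (the paper's proof cites item 2 of that proposition, apparently a typo).
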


\begin{proof}
If $x\in S$ then $x'' \in P_c(S)$ and $\sigma(x'')$ is defined. Then
$\sigma_S$ is well defined as a function. Note that if $x \in
P_c(S)$ then $\sigma_S(x) = \sigma(x'') = \sigma(x)$ since $'$ is an
orthocomplementation on the orthomodular lattice $P_c(S)$. Thus
$\sigma_S /_{P_c(S)} = \sigma$.  Let $x\in S$. Then $\sigma_S(x') =
\sigma(x''') = 1- \sigma(x'') = 1- \sigma_S(x)$. Thus $\sigma_S$ is
a  Boolean$^*$ pre-state on $S$. Let $\sigma_1$ be a Boolean$^*$
pre-state on $S$ such that $\sigma_1 /_{P_c(S)} = \sigma $. Let
$x\in S$. Since $x''\in P_c(S)$, by Proposition \ref{BS1}-2,
$\sigma_1(x) = \sigma_1(x'') = \sigma(x'') = \sigma_S(x)$. Hence
$\sigma_1 = \sigma_S$ and $\sigma_S$ is the unique  Boolean$^*$
pre-state on $S$ such that $\sigma_S /_{P_c(S)} = \sigma $.
\rule{5pt}{5pt}
\end{proof}

\section{An algebraic approach for two-valued states on Baer $^*$-semigroups}\label{ALGEBRAICAP}

In this section we study a variety of  Baer $^*$-semigroups enriched
with a unary operation that allows us to capture the concept of
two-valued states on Baer $^*$-semigroups in an equational theory. We
begin this section showing a way to deal with families of Boolean
pre-states on orthomodular lattices as varieties in which the
concept of Boolean pre-states is captured by adding a unary
operation to the orthomodular lattice structure.

Let $L$ be an orthomodular lattice and $\sigma: L \rightarrow
\{0,1\}$ be a Boolean pre-state. If we define the function $s:L
\rightarrow Z(L)$ such that $s(x) = 0^L$ if $\sigma(x) = 0$ and
$s(x) = 1^L$ if $\sigma(x) = 1$. Then $s$ has properties s1...s5 in
the following definition:

\begin{definition}\label{E}
{\rm An {\em orthomodular lattice with internal Boolean pre-state}
{$IE_B$-lattice for short}  is an algebra $ \langle L, \land, \lor,
\neg, s, 0, 1 \rangle$ of type $ \langle 2, 2, 1,1, 0, 0 \rangle$
such that $ \langle L, \land, \lor, \neg, 0, 1 \rangle$ is an
orthomodular lattice and $s$ satisfies the following equations for
each $x,y \in L$:

\begin{enumerate}

\item[\rm{s1.}]
$s(1) = 1$.

\item[\rm{s2.}]
$s(\neg x) = \neg s(x)$,

\item[\rm{s3.}]
$s(x \lor s(y)) = s(x) \lor s(y)$,

\item[\rm{s4.}]
$y = (y \land s(x)) \lor (y \land \neg s(x)) $,

\item[\rm{s5.}]
$s(x \land y) \leq s(x)\land s(y) $.

\end{enumerate}
}

\end{definition}

\noindent Thus, the class of $IE_B$-lattices is a variety that we
call ${\mathcal {IE}}_B$. The following proposition provides the
main properties of $IE_B$-lattices.

\begin{prop}\label{E1} {\rm \cite[Proposition 3.5]{DFD}}
Let $L$ be a $IE_B$-lattice. Then we have:

\begin{enumerate}
\item
$\langle s(L), \lor, \land, \neg, 0, 1 \rangle$ is a Boolean sublattice of $Z(L)$,

\item
If $x\leq y$ then $s(x) \leq s(y)$,

\item
$s(x) \lor s(y)  \leq  s(x\lor y)$,

\item
$s(s(x)) = s(x)$,

\item
$x\in s(L)$ iff $s(x) = x$,

\item
$s(x\land s(y))= s(x)\land s(y)$.
\rule{5pt}{5pt}
\end{enumerate}
\end{prop}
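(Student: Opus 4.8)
The plan is to establish the six properties of Proposition~\ref{E1} by unfolding the defining equations s1--s5 of an $IE_B$-lattice, using the characterization of central elements in Proposition~\ref{eqcentro} as the main structural tool. The overall strategy is to first pin down the range $s(L)$ as a Boolean subalgebra sitting inside the center $Z(L)$, and then to derive the remaining order- and idempotency-properties from that fact together with s3 and s5.

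Let me sketch the key steps in order. First I would prove item~1. The crucial observation is that $s(L)\subseteq Z(L)$: given $x\in L$, equation s4 says $y = (y\land s(x))\lor(y\land\neg s(x))$ for \emph{every} $y\in L$, which is precisely the characterization in Proposition~\ref{eqcentro}-2 of $s(x)$ being central. Hence every value $s(x)$ lies in $Z(L)$, and by Proposition~\ref{eqcentro}-1, $Z(L)$ is a Boolean sublattice. To see that $s(L)$ is closed under the operations and thus a \emph{sublattice}, I would use s2 for closure under $\neg$ (since $s(\neg x)=\neg s(x)$), and for closure under $\lor$ I would invoke s3: writing $s(x)\lor s(y) = s(s(x)\lor s(y))$ would exhibit the join as a value of $s$ — though here one must be careful, so I would instead apply s3 directly in the form $s(s(x)\lor s(y)) = s(x)\lor s(y)$ after identifying $s(x)$ with an element one can feed into the first slot; closure under $\land$ then follows by De~Morgan inside the Boolean center.

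\textbf{Next}, for the remaining items I would proceed as follows. Item~4, $s(s(x))=s(x)$, should follow from s3 by taking the first argument to be $0$: if $0\lor s(y)=s(y)$ is handled correctly one gets $s(s(y))=s(y)$, so this is essentially a special case of s3. Item~5, the equivalence $x\in s(L)$ iff $s(x)=x$, is then immediate: the ``if'' direction is trivial, and the ``only if'' direction is exactly item~4 applied to $x=s(z)$. For item~2, monotonicity, I would write $x\leq y$ as $y = x\lor y$ and use s3 together with the Boolean structure of $s(L)$, or alternatively observe that $s$ restricted along the order is forced to preserve it via s5 and s2; I would aim to get $s(x)\lor s(y)=s(y)$, giving $s(x)\leq s(y)$. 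Item~3, $s(x)\lor s(y)\leq s(x\lor y)$, should follow from monotonicity (item~2) applied to $x\leq x\lor y$ and $y\leq x\lor y$, taking the join of $s(x)\leq s(x\lor y)$ and $s(y)\leq s(x\lor y)$. Finally item~6, $s(x\land s(y))=s(x)\land s(y)$, combines s5 (giving $\leq$) with a reverse inequality obtained from s3/s4 by exploiting that $s(y)$ is central, so that $x\land s(y)$ interacts distributively.

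\textbf{The main obstacle} I anticipate is establishing the reverse inequality in item~6 and confirming closure of $s(L)$ under $\lor$ in item~1: both require carefully combining centrality (so that distributivity is available for the relevant elements) with the idempotency-type equation s3, and it is easy to misapply s3 by feeding it an argument that is not already in the image of $s$. Since this is cited as \cite[Proposition 3.5]{DFD}, I would expect the intended proof to be short and to lean heavily on s3 and s4 as the workhorses, with s5 supplying only the one-sided inequalities; the delicate point is verifying that no hidden use of the full two-valued state axioms (as opposed to the weaker Boolean pre-state axioms s1--s5) sneaks in.
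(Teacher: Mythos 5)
The paper itself contains no proof of this proposition: it is imported verbatim from \cite[Proposition 3.5]{DFD}, so there is no in-text argument to compare against, and your proposal has to be judged on its own merits. On those merits it is essentially correct, and every step you sketch can be closed. Items 1, 3, 4, 5 are exactly right: s4 together with Proposition \ref{eqcentro}-2 puts $s(L)$ inside $Z(L)$; closure under $\neg$ is s2; and the closure under $\lor$ that worried you involves no delicacy at all, since s3 is an identity in both variables --- substitute $x\mapsto s(x)$ to get $s(s(x)\lor s(y))=s(s(x))\lor s(y)=s(x)\lor s(y)$, using item 4, and closure under $\land$ follows by De Morgan. For item 4 do note explicitly that you need $s(0)=0$, which is not free but follows from s1 and s2 via $s(0)=s(\neg 1)=\neg s(1)=\neg 1=0$.

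Two refinements. First, in item 2 your primary route (write $y=x\lor y$ and use s3) misapplies s3: its second argument must lie in the image of $s$, and $y$ need not. Your fallback via s5 is in fact the one-line proof: $x\leq y$ gives $x=x\land y$, so $s(x)=s(x\land y)\leq s(x)\land s(y)\leq s(y)$; s2 plays no role. Second, the ``main obstacle'' you flag in item 6 dissolves in either of two ways. Your centrality route does close: by s4, $x=(x\land s(y))\lor(x\land\neg s(y))\leq(x\land s(y))\lor\neg s(y)$; applying item 2, then s2 and s3, gives $s(x)\leq s(x\land s(y))\lor\neg s(y)$, and meeting both sides with the central element $s(y)$ and distributing yields $s(x)\land s(y)\leq s(x\land s(y))$, with s5 and item 4 supplying the converse. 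But there is a slicker argument that needs neither s5 nor centrality: item 6 is simply the dual of s3. Substituting $\neg x,\neg y$ into s3 and using s2 twice gives $\neg s(x\land s(y))=s(\neg x\lor s(\neg y))=s(\neg x)\lor s(\neg y)=\neg(s(x)\land s(y))$, whence equality by involution. Finally, your concluding concern is satisfied: nothing beyond s1--s5 and general orthomodular facts is used anywhere, so the proposition indeed holds at the level of Boolean pre-states.
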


A crucial question that must be answered is under which conditions a
class of two-valued states over an orthomodular lattice can be
characterized by a subvariety of ${\mathcal{IE}}_B$. To do this, we
first need the following two basic results:

\begin{prop}\label{FUNC00}{\rm \cite[Theorem 4.4]{DFD}}
Let $L$ be an $IE_B$-lattice. Then there exists a Boolean pre-state
$\sigma:L \rightarrow \{0,1\}$ such that $\sigma(x) = 1$ iff
$\sigma(s(x))=1$. \rule{5pt}{5pt}
\end{prop}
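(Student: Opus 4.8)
The plan is to exploit the fact that, by Proposition \ref{E1}-1, the image $s(L)$ is a Boolean sublattice of the center $Z(L)$, and to transport a two-valued homomorphism on this Boolean algebra back along $s$ to all of $L$. First I would record that the bounds lie in $s(L)$: indeed $s(1)=1$ by s1, and $s(0)=s(\neg 1)=\neg s(1)=\neg 1=0$ by s2. Assuming $L$ nontrivial, $s(L)$ is then a nontrivial Boolean algebra, so it admits an ultrafilter $U$ (equivalently a Boolean homomorphism $h:s(L)\rightarrow {\mathbf 2}$). I would then define $\sigma:L\rightarrow\{0,1\}$ by setting $\sigma(x)=1$ iff $s(x)\in U$, i.e. $\sigma=h\circ s$.

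It remains to check that $\sigma$ is a Boolean pre-state and that it satisfies the compatibility $\sigma(x)=1$ iff $\sigma(s(x))=1$. For the orthocomplement condition, s2 gives $s(\neg x)=\neg s(x)$; since $U$ is an ultrafilter on the Boolean algebra $s(L)$, exactly one of $s(x)$ and $\neg s(x)$ lies in $U$, whence $\sigma(\neg x)=1-\sigma(x)$. For monotonicity, if $x\leq y$ then $s(x)\leq s(y)$ by Proposition \ref{E1}-2, and $U$ is upward closed, so $s(x)\in U$ forces $s(y)\in U$, i.e. $\sigma(x)\leq\sigma(y)$. This establishes that $\sigma$ is a Boolean pre-state. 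The compatibility then falls out of idempotence: by Proposition \ref{E1}-4 we have $s(s(x))=s(x)$, so $\sigma(s(x))=1$ iff $s(s(x))=s(x)\in U$ iff $\sigma(x)=1$.

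The construction is essentially forced once one notices the Boolean structure of $s(L)$, so there is no deep combinatorial obstacle; the only delicate points are set-theoretic and degenerate-case issues. Specifically, one must ensure $s(L)$ is nontrivial (which requires $0\neq 1$ in $L$, i.e. $L$ nontrivial, since otherwise no $\{0,1\}$-valued map can satisfy both defining conditions of a pre-state), and one must invoke the existence of an ultrafilter on a Boolean algebra (the Stone/Boolean prime ideal theorem) for infinite $L$. Everything beyond these two observations is a routine verification directly from equations s1--s5 together with the items of Proposition \ref{E1}.
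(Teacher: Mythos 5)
Your proof is correct and is essentially the argument behind the result as cited from \cite[Theorem 4.4]{DFD}, which the paper itself does not reprove: one composes $s$ with a two-valued homomorphism on the Boolean algebra $s(L) \subseteq Z(L)$ obtained from an ultrafilter, verifying the pre-state axioms via s2 and Proposition \ref{E1}-2, and the coherence condition via the idempotence $s(s(x))=s(x)$ of Proposition \ref{E1}-4. Your side remarks on the nontriviality of $L$ and on the use of the Boolean prime ideal theorem are sound precisions that the statement leaves implicit.
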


Observe that, the Boolean pre-state in the last proposition is not
necessarily unique. When we have an $IE_B$-lattice and a Boolean
pre-state $\sigma:L \rightarrow \{0,1\}$ such that $\sigma(x) = 1$
iff $\sigma(s(x))=1$, we say that $s, \sigma$ are {\em coherent}. On
the other hand, we can build $IE_B$-lattices from objects in the
category ${\mathcal E}_B$ as shown the following proposition:

\begin{prop}\label{FUNC0}{\rm \cite[Theorem 4.10]{DFD}}
Let $L$ be an orthomodular lattice and $\sigma$ be a Boolean
pre-state on $L$. If we define ${\mathcal I}(L) = \langle L, \land,
\lor, \neg,  s_{\sigma}, 0,1  \rangle$ where $$ s_{\sigma}(x) =
\cases {1^L, & if $\sigma(x)=1$ \cr 0^L , & if $\sigma(x)=0$ \cr} $$
then:
\begin{enumerate}
\item
${\mathcal I}(L)$ is a $IE_B$-lattice and $s_{\sigma}$ is coherent
with $\sigma$.

\item
If $(L_1, \sigma_1) \stackrel{f}{\rightarrow} (L_2, \sigma_2) $ is a
${\mathcal E}_B$-homomorphism then $f:{\mathcal I}(L_1) \rightarrow
{\mathcal I}(L_2)$ is a $IE_B$-homomorphism.
\end{enumerate}
\rule{5pt}{5pt}
\end{prop}

Note that, ${\mathcal I}$ in the above proposition defines a functor
of the form  ${\mathcal I}: {\mathcal E}_B \rightarrow
{\mathcal{IE}}_B $. Now it is very important to characterize the
class $\{{\mathcal I}(L): L \in {\mathcal{IE}}_B \}$. To do this,
directly indecomposable algebras in ${\mathcal{IE}}_B$ play an
important role and the following proposition provides this result:

\begin{prop}\label{PROD2} {\rm \cite[Proposition 5.6]{DFD}}
Let $L$ be an $IE_B$-lattice. then:

\begin{enumerate}
\item
$L$ is directly indecomposable in ${\mathcal{IE}}_B$ iff $s(L) =
{\mathbf 2}$.

\item
If $L$ is directly indecomposable in ${\mathcal{IE}}_B$  then the
function $$ \sigma_s(x) = \cases {1, & if $s(x)=1^L$ \cr 0 , & if
$s(x)=0^L$ \cr} $$ is the unique Boolean pre-state coherent with
$s$.
\end{enumerate}
\rule{5pt}{5pt}
\end{prop}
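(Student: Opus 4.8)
The plan is to reduce everything to the behaviour of $s$ on central elements, exploiting that direct decompositions in an expanded language are controlled by the factor congruences of the underlying orthomodular reduct. The governing fact I would isolate first is that the factor congruences of an $IE_B$-lattice $L$ correspond exactly to the elements of $s(L)$. By Proposition \ref{E1}-1, $s(L)$ is a Boolean sublattice of the center $Z(L)$, so every $z \in s(L)$ is central and, by Proposition \ref{E1}-5, satisfies $s(z) = z$. I would then prove the first statement by establishing the contrapositive in both directions, namely that $L$ is directly decomposable in ${\mathcal{IE}}_B$ if and only if $s(L) \neq {\mathbf 2}$.

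For one direction, suppose $s(L) \neq {\mathbf 2}$ and pick $z \in s(L)$ with $0 < z < 1$. Since $z$ is central, Proposition \ref{eqcentro}-2 yields the ${\mathcal{OML}}$-isomorphism $a \mapsto (a \land z,\, a \land \neg z)$ onto $[0,z] \times [0,\neg z]$. To promote this to an ${\mathcal{IE}}_B$-decomposition I must check that $s$ is computed componentwise, which amounts to the identities $s(a) \land z = s(a \land z)$ and $s(a) \land \neg z = s(a \land \neg z)$; both follow from Proposition \ref{E1}-6 applied with the $s$-fixed elements $z = s(z)$ and $\neg z = s(\neg z)$, the latter being in $s(L)$ since $s(L)$ is closed under $\neg$. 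As $0 < z < 1$, both factors are nontrivial, so $L$ is decomposable. Conversely, if $L \cong L_1 \times L_2$ in ${\mathcal{IE}}_B$ with nontrivial factors, this is in particular an ${\mathcal{OML}}$-decomposition arising from a central element $z$ corresponding to $(1,0)$ with $0 < z < 1$. Because $s$ acts componentwise on the product and $s(1) = 1$ by axiom s1 while $s(0) = s(\neg 1) = \neg s(1) = 0$ by s1 and s2, one computes $s(z) = z$, whence $z \in s(L)$ by Proposition \ref{E1}-5 and $s(L) \neq {\mathbf 2}$. This settles the first statement.

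For the second statement, assume $L$ is directly indecomposable, so $s(L) = {\mathbf 2}$ by item 1 and $\sigma_s$ is well defined since each $s(x) \in \{0^L, 1^L\}$. That $\sigma_s$ is a Boolean pre-state is read off directly: the identity $\sigma_s(\neg x) = 1 - \sigma_s(x)$ follows from axiom s2 together with $s(L) = {\mathbf 2}$, and monotonicity follows from Proposition \ref{E1}-2. Coherence, that $\sigma_s(x) = 1$ iff $\sigma_s(s(x)) = 1$, is immediate from $s(s(x)) = s(x)$ in Proposition \ref{E1}-4. For uniqueness, let $\tau$ be any Boolean pre-state coherent with $s$. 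First observe that $\tau(0) = \tau(\neg 1) = 1 - \tau(1) \leq \tau(1)$ by monotonicity, forcing $\tau(1) = 1$ and $\tau(0) = 0$. Coherence gives $\tau(x) = 1$ iff $\tau(s(x)) = 1$, and since $s(x)$ is either $1^L$ or $0^L$ this pins down $\tau(x)$ to agree with $\sigma_s(x)$ in each case; hence $\tau = \sigma_s$.

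The main obstacle is the first statement, and specifically the verification that a central element of $s(L)$ yields a decomposition respecting the added operation: the ${\mathcal{OML}}$ part is classical, but one must confirm that $s$ splits as a product operation, which is precisely the content of the identities supplied by Proposition \ref{E1}-6. Once that correspondence between factor congruences and elements of $s(L)$ is secured, the remaining verifications — well-definedness, the two pre-state axioms, coherence, and uniqueness — are routine consequences of axioms s1, s2 and the derived properties collected in Proposition \ref{E1}.
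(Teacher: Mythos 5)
Your proof is correct, and there is no in-paper argument to compare it against: the paper states this result with a tombstone and attributes it to \cite[Proposition 5.6]{DFD}. Your route is exactly the intended one — the identity $s(x\land z)=s(x)\land z$ obtained from Proposition \ref{E1}-5 and \ref{E1}-6 shows that factor congruences of the expanded algebra correspond to the central elements lying in $s(L)$, the componentwise computation $s(1,0)=(1,0)$ gives the converse, and your uniqueness argument (forcing $\tau(1)=1$, $\tau(0)=0$ by monotonicity and then reading off $\tau$ from coherence since $s(x)\in\{0^L,1^L\}$) is complete and matches the argument of the cited reference.
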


\noindent Thus, an immediate consequence of Proposition \ref{FUNC0} and Proposition \ref{PROD2} is the following proposition:

\begin{prop}\label{PROD3}
Let ${\mathcal D}({\mathcal{IE}}_B)$ be the class of directly
indecomposable algebras in ${\mathcal{IE}}_B$. Then  $${\mathcal
D}({\mathcal{IE}}_B) = \{{\mathcal I}(L): L \in {\mathcal{IE}}_B
\}$$ and ${\mathcal I}: {\mathcal E}_B \rightarrow {\mathcal
D}({\mathcal{IE}}_B) $ is a categorical equivalence when we consider
${\mathcal D}({\mathcal{IE}}_B)$ as a category whose arrows are
$IE_B$-homomorphisms. \rule{5pt}{5pt}
\end{prop}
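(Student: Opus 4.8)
The plan is to prove the two assertions of Proposition \ref{PROD3} in turn, leaning on Proposition \ref{FUNC0} and Proposition \ref{PROD2}. First I would establish the set equality $\mathcal{D}(\mathcal{IE}_B) = \{\mathcal{I}(L) : L \in \mathcal{IE}_B\}$, where I read the index class as the category $\mathcal{E}_B$ (the natural source of $\mathcal{I}$). For the inclusion $\{\mathcal{I}(L)\} \subseteq \mathcal{D}(\mathcal{IE}_B)$, take any object $(L,\sigma) \in \mathcal{E}_B$ and form $\mathcal{I}(L)$; by Proposition \ref{FUNC0}-1 this is an $IE_B$-lattice whose internal state $s_\sigma$ takes only the values $0^L$ and $1^L$, so $s_\sigma(L) = \{0^L, 1^L\} \cong \mathbf{2}$, and Proposition \ref{PROD2}-1 then gives direct indecomposability. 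For the reverse inclusion, let $L$ be directly indecomposable in $\mathcal{IE}_B$. By Proposition \ref{PROD2}-1 we have $s(L) = \mathbf{2}$, and by Proposition \ref{PROD2}-2 there is a (unique) Boolean pre-state $\sigma_s$ coherent with $s$. I would then verify that $\mathcal{I}(L)$, built from this $\sigma_s$, returns the internal state $s_{\sigma_s}$ that coincides with the original $s$: indeed $s_{\sigma_s}(x) = 1^L$ exactly when $\sigma_s(x) = 1$, which by coherence and the definition of $\sigma_s$ happens exactly when $s(x) = 1^L$, and since $s$ is two-valued this forces $s_{\sigma_s} = s$. Hence $L = \mathcal{I}(L)$ as $IE_B$-lattices, completing the equality.

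Next I would promote the object-level correspondence to a categorical equivalence $\mathcal{I} : \mathcal{E}_B \rightarrow \mathcal{D}(\mathcal{IE}_B)$, viewing the codomain as a full subcategory of $\mathcal{IE}_B$ with $IE_B$-homomorphisms as arrows. Functoriality of $\mathcal{I}$ on arrows is already supplied by Proposition \ref{FUNC0}-2: an $\mathcal{E}_B$-homomorphism $f : (L_1,\sigma_1) \to (L_2,\sigma_2)$ yields an $IE_B$-homomorphism $\mathcal{I}(f) = f$, and since $\mathcal{I}$ is the identity on underlying maps it manifestly preserves identities and composition. To get an equivalence I would establish that $\mathcal{I}$ is essentially surjective, fully faithful. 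Essential surjectivity is immediate from the set equality just proved, since every object of $\mathcal{D}(\mathcal{IE}_B)$ is literally of the form $\mathcal{I}(L)$.

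For fullness and faithfulness the key observation is that on a directly indecomposable $IE_B$-lattice the internal state $s$ and its coherent Boolean pre-state $\sigma_s$ determine each other bijectively (Proposition \ref{PROD2}-2), so passing between $(L,\sigma)$ and $\mathcal{I}(L)$ does not lose or add structure. Faithfulness is clear because $\mathcal{I}$ acts as the identity on underlying functions, so distinct $\mathcal{E}_B$-homomorphisms remain distinct. For fullness, given an $IE_B$-homomorphism $g : \mathcal{I}(L_1) \to \mathcal{I}(L_2)$, I would check that the same underlying map $g$ is an $\mathcal{E}_B$-homomorphism $(L_1,\sigma_1) \to (L_2,\sigma_2)$; that is, $g$ is an $OML$-homomorphism (it is, being in particular a lattice homomorphism that commutes with $\neg$) making the state triangle commute, i.e.\ $\sigma_2 \circ g = \sigma_1$. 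This last identity is where I expect the only real work: I would derive it from the fact that $g$ commutes with the internal operations $s_{\sigma_1}, s_{\sigma_2}$ and that these take values in the two-element centers, so $\sigma_1(x) = 1 \iff s_{\sigma_1}(x) = 1^{L_1} \iff s_{\sigma_2}(g(x)) = g(s_{\sigma_1}(x)) = 1^{L_2} \iff \sigma_2(g(x)) = 1$, using $g(1^{L_1}) = 1^{L_2}$ and the coherence characterizations.

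The main obstacle, I expect, is precisely the commutativity $\sigma_2 \circ g = \sigma_1$ in the fullness step: one must be careful that an $IE_B$-homomorphism sends the top of the two-element range $s_{\sigma_1}(L_1)$ to the top of $s_{\sigma_2}(L_2)$ rather than collapsing it, which uses that homomorphisms preserve $0,1$ together with the directly indecomposable hypothesis (via Proposition \ref{PROD2}) guaranteeing that the ranges are genuinely the two-element Boolean algebra $\mathbf{2}$. Once that is in hand, the coherence equivalences reduce everything to a chain of ``iff'' statements and the equivalence follows.
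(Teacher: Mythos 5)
Your proposal is correct and follows exactly the route the paper intends: Proposition \ref{PROD3} is stated there as an immediate consequence of Propositions \ref{FUNC0} and \ref{PROD2}, and your argument simply fills in the details of that combination (the set equality from the two-valuedness of $s_\sigma$ and the coherence of $\sigma_s$, and fullness/faithfulness from the fact that ${\mathcal I}$ is the identity on underlying maps together with $g(s_{\sigma_1}(x)) = s_{\sigma_2}(g(x))$ and preservation of the constants $0,1$). Your reading of the index class as ${\mathcal E}_B$ rather than the misprinted ${\mathcal{IE}}_B$ is also the correct one.
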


Since ${\mathcal D}({\mathcal{IE}}_B)$ contains the subdirectly
irreducible algebras of ${\mathcal{IE}}_B$, we have that:
$${\mathcal{IE}}_B \models t= s \hspace{0.4cm}  \mathrm{iff}
\hspace{0.4cm} {\mathcal D}({\mathcal{IE}}_B) \models t= s$$ Hence,
the class of orthomodular lattices admitting Boolean pre-states can
be identified with the directly indecomposable algebras in
${\mathcal{IE}}_B$ that determine the variety ${\mathcal{IE}}_B$. We
can use these ideas to give a general criterium to characterize
families of two-valued states over orthomodular lattices by a
subvariety of ${\mathcal{IE}}_B$.

Let ${\mathcal A}_I$ be a subvariety of ${\mathcal{IE}}_B$. We
denote by ${\mathcal D}({\mathcal A}_I)$ the class of directly
indecomposable algebras in ${\mathcal A}_I$.

\begin{definition} \label{DEF1}
{\rm Let ${\mathcal A}$ be a subclass of ${\mathcal E}_B$ and let
${\mathcal A}_I$ be a subvariety of ${\mathcal{IE}}_B$. Then we say
that ${\mathcal A}_I$ equationally characterizes ${\mathcal A}$ iff
the following two conditions are satisfied

\begin{enumerate}

\item[{\bf I:}]
For each $(L, \sigma) \in {\mathcal A}$, $\langle {\mathcal I}(L),
\land, \lor, \neg,  s_{\sigma}, 0,1 \rangle$ belong to ${\mathcal
D}({\mathcal A}_I)$ where $ s_{\sigma}(x) = \cases {1^L, & if
$\sigma(x)=1$ \cr 0^L, & if $\sigma(x)=0$ \cr} $

\item[{\bf E:}]
For  each $L \in {\mathcal D}({\mathcal A}_I)$, $(L, \sigma_s ) \in
{\mathcal A}$ where $\sigma_s$, the unique Boolean pre-state
coherent with $s$, is given by $ \sigma_s(x) = \cases {1, & if
$s(x)=1^L$ \cr 0 , & if $s(x)=0^L$ \cr} $

\end{enumerate}
}
\end{definition}

Since ${\mathcal D}({\mathcal A}_I)$ contains the subdirectly
irreducible algebras of ${\mathcal A}_I$, we have that: $${{\mathcal
D}({\mathcal A}_I)} \models t=s \hspace{0.3cm} \mathrm{iff}
\hspace{0.3cm} {{\mathcal A}_I}\models t=s $$ where $t,s$ are terms
in the language of ${\mathcal A}_I$.

Thus, when we say that a subclass ${\mathcal A}$ of ${\mathcal E}_B$
is equationally characterizable by a subvariety ${\mathcal A}_I$ of
${\mathcal IE}_B$ this means that the objects of ${\mathcal A}$ are
identifiable with the  directly indecomposable algebras
of ${\mathcal A}_I$ according to the items {\bf I} and {\bf E} in Definition \ref{DEF1}.\\

Taking into account the concept of $IE_B$-lattice we introduce a way
to study the notion of Boolean$^*$ pre-state given in Definition
\ref{BOOLEANSTAR} via a  unary operation added to the Baer
$^*$-semigroups structure.

In fact, let $S$ be a Baer $^*$-semigroup. A unary operation $s$ on
$S$ that allows us to capture the notion of  Boolean$^*$ pre-state
would have to satisfy the following basic conditions:

\begin{itemize}

\item[a.]
$s(x') = s(x)'$.

\item[b.]
The restriction $s/_{P_c(S)}$ defines a unary operation in $P_c(S)$
such that $\langle P_c(S), \lor,\land, ', s/_{P_c(S)}, 0,1 \rangle$
is an $IE_B-{\mathrm lattice}$.

\item[c.]
$s$ should satisfy a version of Theorem \ref{BSIGMA3} i.e., $s$ should
be always obtainable as the unique extension of $s/_{P_c(S)}$.

\end{itemize}

These ideas motivate the following general definition:

\begin{definition}\label{E}
{\rm An $IE^*_B$-semigroup is an algebra $ \langle S,
\cdot, ^*, ', s, 0 \rangle$ of type $ \langle 2, 1,1, 1, 0 \rangle$
such that $ \langle S, \cdot, ^*, ', 0  \rangle$ is a
Baer $^*$-semigroup and $s$ satisfies the following equations for
each $x,y \in S$:

\begin{enumerate}

\item[\rm{bs1.}]
$s(1) = 1$,

\item[\rm{bs2.}]
$s(x') = s(x)'$,

\item[\rm{bs3.}]
$s(x)'' = s(x)$,

\item[\rm{bs4.}]
$s(x' \lor s(y')) = s(x') \lor s(y')$,

\item[\rm{bs5.}]
$y' = (y' \land s(x)) \lor (y' \land s(x)') $,

\item[\rm{bs6.}]
$s(x' \land y') \leq s(x')\land s(y') $.

\end{enumerate}
}

\end{definition}

\noindent Thus, the class of $IE^*_B$-semigroups is a variety that
we call ${\mathcal IE}^*_B$.

\begin{prop}\label{BS2}
Let $S$ be an $IE^*_B$-semigroup. Then

\begin{enumerate}
\item
$s(x) \in Z(P_c(S))$.

\item
$\langle P_c(S), \lor, \land, ', s/_{P_c(S)}, 0,1  \rangle$ is an
$IE_B$-lattice and $\langle s(S), \lor, \land, ', 0,1 \rangle$ is a
Boolean subalgebra of $Z(P_c(S))$.

\item
$s(x'') = s(x)$.

\item
If $x,y \in P(S)$ and $x\leq y$ then, $s(y')\leq s(x')$ and  $s(x)\leq s(y)$.

\item
$s(x \cdot y) = s(x'' \cdot y) \leq s(y)$.

\item
$s(x^* \cdot x ) = s(x)$.

\end{enumerate}
\end{prop}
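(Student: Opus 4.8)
The plan is to prove items 3--6 first, since they are the $s$-analogues of Proposition \ref{BS1}-1 through Proposition \ref{BS1}-4 and follow by exactly the same arguments with $s$ replacing $\sigma$; items 1 and 2 are structural and should be handled using the defining equations bs1--bs6 directly. I will begin with item 3, the identity $s(x'') = s(x)$. Applying bs2 twice gives $s(x'') = s((x')') = s(x')' = (s(x)')' = s(x)''$, and then bs3 yields $s(x)'' = s(x)$, which closes item 3.

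For item 1, I want to show $s(x) \in Z(P_c(S))$. First, bs3 tells us $s(x)'' = s(x)$, so $s(x) \in P_c(S)$. To see that it is central, the plan is to invoke Proposition \ref{eqcentro}-2: an element $z$ of an orthomodular lattice is central iff every element $a$ satisfies $a = (a \land z) \lor (a \land \neg z)$. Here the ambient orthomodular lattice is $P_c(S) = \{x' : x \in S\}$, so every element of $P_c(S)$ has the form $y'$ for some $y \in S$, and equation bs5 reads precisely $y' = (y' \land s(x)) \lor (y' \land s(x)')$. Since $s(x)' $ is the orthocomplement of $s(x)$ in $P_c(S)$, this is exactly the centrality criterion of Proposition \ref{eqcentro}-2 applied to $z = s(x)$, giving $s(x) \in Z(P_c(S))$.

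For item 2, I would verify that the restriction $s/_{P_c(S)}$ satisfies s1--s5 of Definition \ref{E} (the $IE_B$-lattice axioms) on the orthomodular lattice $P_c(S)$ guaranteed by Theorem \ref{PRO1}. The translation is direct: bs1 gives s1; bs2 restricted to closed projectors gives s2 (noting that $'$ is the orthocomplementation on $P_c(S)$); since every element of $P_c(S)$ is of the form $x'$, bs4 rewritten on $P_c(S)$ yields s3 and bs6 yields s5; and bs5 yields s4. By item 1 each $s(x)$ is central and idempotent-stable, so $s(P_c(S))$ sits inside $Z(P_c(S))$, and by Proposition \ref{E1}-1 applied to this $IE_B$-lattice, $\langle s(S), \lor, \land, ', 0, 1 \rangle$ is a Boolean subalgebra of $Z(P_c(S))$. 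The only care needed is to confirm $s(S) = s(P_c(S))$, which follows from item 3 since every $s(x) = s(x'')$ with $x'' \in P_c(S)$.

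Items 4--6 are then mechanical transcriptions of the proof of Proposition \ref{BS1}. For item 4, if $x, y \in P(S)$ and $x \leq y$, Proposition \ref{PBAER1}-1 gives $y' \leq x'$ with $x', y' \in P_c(S)$, so by the monotonicity already available in the $IE_B$-lattice $P_c(S)$ (Proposition \ref{E1}-2) we get $s(y') \leq s(x')$; applying Proposition \ref{PBAER1}-1 again gives $x'' \leq y''$, and then item 3 gives $s(x) = s(x'') \leq s(y'') = s(y)$. For item 5, Proposition \ref{PBAER1}-2 gives $(x \cdot y)'' = (x'' \cdot y)'' \leq y''$, so by item 3 we have $s(x \cdot y) = s((x \cdot y)'') = s((x'' \cdot y)'') = s(x'' \cdot y)$, and the same item together with monotonicity on $P_c(S)$ gives $s(x'' \cdot y) = s((x'' \cdot y)'') \leq s(y'') = s(y)$. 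For item 6, Proposition \ref{PBAER1}-3 gives $(x^* \cdot x)'' = x''$, whence item 3 yields $s(x^* \cdot x) = s((x^* \cdot x)'') = s(x'') = s(x)$.

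I expect the main obstacle to be item 2, specifically the bookkeeping that shows each $IE_B$-lattice axiom s1--s5 is genuinely recovered from its starred counterpart on $P_c(S)$, and in particular the claim $s(S) = s(P_c(S))$ needed so that ``Boolean subalgebra of $Z(P_c(S))$'' applies to all of $s(S)$ rather than merely to the restricted range. Everything else reduces to the template of Proposition \ref{BS1} via item 3.
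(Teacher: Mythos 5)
Your proposal is correct and follows essentially the same route as the paper: items 1 and 2 via bs3, bs5 and Proposition \ref{eqcentro}-2 together with the fact that every element of $P_c(S)$ has the form $y'$, and items 3--6 by transcribing the proof of Proposition \ref{BS1} with $s$ in place of $\sigma$ (the paper states this last part only as ``similar arguments''). Your explicit check that $s(S)=s(P_c(S))$, needed for the Boolean subalgebra claim via Proposition \ref{E1}-1, is a detail the paper leaves implicit but is exactly the intended justification.
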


\begin{proof}
1 and 2) By bs3, for each $x\in S$, $s(x) \in P_c(S)$. Then, by
Proposition \ref{eqcentro}-2 and bs5, $s(x) \in Z(P_c(S))$. Since
the image of $'$ is $P_c(S)$, from the rest of the axioms, $\langle
P_c(S), \lor, \land, ', s/_{P_c(S)}, 0,1  \rangle$ is an
$IE_B$-lattice and  $\langle s(S), \lor, \land, ', 0,1 \rangle$ is a
Boolean subalgebra of $Z(P_c(S))$. 3,4,5,6) Follow from similar
arguments used in the proof of Proposition \ref{BS1}.
\rule{5pt}{5pt}
\end{proof}

\begin{theo}\label{BSS3}
Let $S$ be a Baer $^*$-semigroup and $\langle P_c(S), \lor, \land,
', s, 0,1 \rangle$ be an $IE_B$-lattice. Then the operation $s_S:S
\rightarrow S $ such that: $$s_S(x) = s(x'')$$ defines the unique
$IE^*_B$-semigroup structure on $S$ such that $s_S /_{P_c(S)} = s $.
\end{theo}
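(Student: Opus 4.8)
The plan is to prove three things about the candidate operation $s_S(x) = s(x'')$: that it is well defined, that it equips $S$ with an $IE^*_B$-semigroup structure (i.e. verifies axioms bs1--bs6), that it restricts correctly to $P_c(S)$, and finally that it is the unique such operation. This mirrors the structure of Theorem~\ref{BSIGMA3}, which handled the same question for Boolean$^*$ pre-states rather than internal operations, so I expect to reuse that proof's template closely.

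First I would observe that $s_S$ is well defined: for any $x \in S$ we have $x'' \in P_c(S)$, and since $s$ is the given $IE_B$-lattice operation on $P_c(S)$, the value $s(x'') \in P_c(S) \subseteq S$ is defined. Next I would check that $s_S$ restricts to $s$ on $P_c(S)$: if $x \in P_c(S)$ then $x'' = x$, so $s_S(x) = s(x'') = s(x)$, giving $s_S/_{P_c(S)} = s$. I would then verify the six equations. Axiom bs1 follows because $1 = 1'' \in P_c(S)$, so $s_S(1) = s(1) = 1$ using s1 for the $IE_B$-lattice. For bs2 I would compute $s_S(x') = s(x''')$; since $x' \in P_c(S)$ we have $x''' = x'$, hence $s_S(x') = s(x')$, while $s_S(x)' = s(x'')'$, and I would reconcile these using s2 (i.e. $s(\neg \cdot)=\neg s(\cdot)$) applied inside $P_c(S)$ together with the fact that on $P_c(S)$ the operation $'$ plays the role of $\neg$. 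Axiom bs3 is immediate since $s(x'') \in P_c(S)$ is already closed. The remaining axioms bs4, bs5, bs6 all have their arguments guarded by $'$ or $s$, so every term appearing is already a closed projection; thus evaluating $s_S$ on them just reproduces the corresponding $IE_B$-lattice equations s3, s4, s5, which hold by hypothesis. The main bookkeeping step here is keeping track of where $s_S(x)$ means $s(x'')$ versus where the argument is already closed so the double-prime collapses.

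For uniqueness I would argue exactly as in Theorem~\ref{BSIGMA3}. Suppose $t$ is any operation giving $S$ an $IE^*_B$-semigroup structure with $t/_{P_c(S)} = s$. For arbitrary $x \in S$, since $x'' \in P_c(S)$ I would like to conclude $t(x) = t(x'')$; this is the analogue of Proposition~\ref{BS1}-2 (or rather of bs3 together with the parallel of Proposition~\ref{BS2}-3). Concretely, bs3 forces $t(x)$ to be a closed projection, and the semigroup axioms give $t(x) = t(x'')$ by the same reasoning that established $s_S(x'')=s_S(x)$, so that $t(x) = t(x'') = s(x'') = s_S(x)$. Hence $t = s_S$.

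The step I expect to be the genuine obstacle is establishing $t(x) = t(x'')$ for an arbitrary operation $t$, i.e. that any $IE^*_B$-structure is determined by its restriction to $P_c(S)$. In Theorem~\ref{BSIGMA3} this came from Proposition~\ref{BS1}-2, whose proof used the inequality $(x\cdot y)'' \le y''$ and monotonicity to squeeze $\sigma(x)=\sigma(x'')$; the analogue needs the equational version, namely Proposition~\ref{BS2}-3, $s(x'')=s(x)$, but applied to the \emph{unknown} operation $t$ rather than to $s_S$. So I would first re-derive the statement of Proposition~\ref{BS2}-3 purely from the axioms bs1--bs6, which it already is, and then invoke it for $t$: since $t$ satisfies all those axioms, $t(x) = t(x'')$ holds automatically, and the uniqueness argument closes. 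The delicate point is simply to notice that Proposition~\ref{BS2}-3 is an equational consequence of the axioms and hence available for every model, including the hypothetical $t$, rather than something special to the construction $s_S$.
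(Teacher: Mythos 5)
Your proposal is correct and takes essentially the same route as the paper's proof: the same well-definedness and restriction checks, the same verification of bs1--bs6 by exploiting that every argument is already a closed projection so the double prime collapses, and the same uniqueness argument via Proposition \ref{BS2}-3 applied to the hypothetical operation $t$. Your key observation---that $s(x'')=s(x)$ is an equational consequence of bs2 and bs3 and hence holds for \emph{any} operation satisfying the axioms---is exactly what justifies the paper's citation of Proposition \ref{BS2}-3 in its uniqueness step.
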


\begin{proof}
If $x\in S$ then $x'' \in P_c(S)$ and $s(x'')$ is defined. Then
$s_S$ is well defined as a function.  Since $x\in P_c(S)$ iff $x=x''$, $s_S(x) = s(x'') = s(x)$ for each $x\in P_c(S)$.
Thus $s_S /_{P_c(S)} = s$. Now we prove the validity of the axioms bs1,...,bs6.

bs1) Is immediate. bs2) $s_S(x') = s(x''') = s(x'')' = s_S(x)'$.
bs3) $s_S(x)'' = s(x'')'' = s(x'')$ since $s(x) \in P_c(S)$ and $'$
is an orthocomplementation on $ P_c(S)$. Hence $s_S(x)'' = s_S(x)$.
bs4, bs5, bs6) Follow from the fact that $s_S /_{P_c(S)} = s$ and
$\langle P_c(S), \lor, \land, ', s, 0,1  \rangle$ is an
$IE_B$-lattice. Hence $s_S$ defines an $IE^*_B$-semigroup structure
on $S$ such that $s_S /_{P_c(S)} = s$.

Suppose that $ \langle S, \cdot, ^*, ', s_1, 0 \rangle$ is an
$IE^*_B$-semigroup such that $s_1 /_{P_c(S)} = s $. Let $x\in S$.
Since $x''\in P_c(S)$, by Proposition \ref{BS2}-3, $s_1(x) =
s_1(x'') = s(x'') = s_S(x)$. Hence $s_1 = s_S$ and $s_S$ defines the
unique $IE^*_B$-semigroup structure on $S$ such that $s_S /_{P_c(S)}
= s $. \rule{5pt}{5pt}
\end{proof}

By Proposition \ref{BS2} and Theorem \ref{BSS3} we can see that the
definition of  $IE^*_B$-semigroup pre-state satisfies the
conditions required by items a, b, c.

\begin{coro}\label{BS4}
Let $\langle L, \lor, \land, \neg, s, 0,1 \rangle$ be an
$IE_B$-lattice and $S(L)$ be the Baer $^*$-semigroup of residuated
functions of $L$. If for each Sasaki projection $\phi_a$ we define
$\bar{s}(\phi_a) = \phi_{s(a)}$, then:

\begin{enumerate}
\item

$\langle P_c(S(L)), \lor, \land,' , \bar{s}, 0,1 \rangle$ is an
$IE_B$-lattice and $f:L \rightarrow P_c(S(L))$ such that $f(a) =
\phi_a$ is an $IE_B$-isomorphism.

\item
The operation ${\bar{s}}_S(\varphi) = \phi_{s(\varphi(1))}$ defines
the unique $IE^*_B$-semigroup structure on $S(L)$ such that $L$ is
$IE_B$-isomorphic to $P_c(S(L))$.

\end{enumerate}
\end{coro}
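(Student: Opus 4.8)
The plan is to reduce both items to the abstract extension result of Theorem \ref{BSS3}, once the operation $\bar{s}$ has been recognized as the transport of $s$ along the $\mathcal{OML}$-isomorphism $f_L$ of Theorem \ref{PRO2}. The whole corollary is thus conceptually a ``concrete instance'' of Theorem \ref{BSS3}: item 1 supplies the required $IE_B$-lattice structure on $P_c(S(L))$, and item 2 reads off the explicit formula for its unique extension.

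For item 1, I would first note that $f_L\colon L\to P_c(S(L))$, $f_L(a)=\phi_a$, is a bijection by Theorem \ref{PRO2}, so every closed projector equals $\phi_a$ for a unique $a\in L$ and the prescription $\bar{s}(\phi_a)=\phi_{s(a)}$ is a well-defined unary operation on $P_c(S(L))$. By construction $\bar{s}=f_L\circ s\circ f_L^{-1}$, so $f_L$ intertwines $s$ and $\bar{s}$: $f_L(s(a))=\phi_{s(a)}=\bar{s}(\phi_a)=\bar{s}(f_L(a))$. Since $f_L$ is already an $\mathcal{OML}$-isomorphism and ${\mathcal{IE}}_B$ is a variety, hence closed under isomorphic images, transporting the axioms s1--s5 from $L$ shows that $\langle P_c(S(L)),\lor,\land,',\bar{s},0,1\rangle$ is an $IE_B$-lattice and that $f_L$ is an $IE_B$-isomorphism. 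If one prefers a direct verification, each axiom is immediate; for instance $\bar{s}(1)=\phi_{s(1)}=\phi_1=1$ gives s1, while using $(\phi_a)'=\phi_{\neg a}$ one obtains $\bar{s}(\phi_a')=\phi_{s(\neg a)}=\phi_{\neg s(a)}=\bar{s}(\phi_a)'$, which is s2, and similarly for s3--s5.

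For item 2, I would apply Theorem \ref{BSS3} to the Baer $^*$-semigroup $S(L)$ together with the $IE_B$-lattice structure on $P_c(S(L))$ just produced. This immediately yields a unique $IE^*_B$-semigroup structure on $S(L)$, given by $\varphi\mapsto\bar{s}(\varphi'')$, whose restriction to $P_c(S(L))$ is $\bar{s}$; via the isomorphism $f_L$ of item 1 this restriction clause is exactly the requirement that $L$ be $IE_B$-isomorphic to $P_c(S(L))$. It then remains only to identify this operation with the stated formula $\phi_{s(\varphi(1))}$, which amounts to the identity $\varphi''=\phi_{1\varphi}$ (recall that in the right-hand notation $1\varphi=\varphi(1)$).

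The one genuine computation, and the step I expect to require the most care, is precisely this identity $\varphi''=\phi_{1\varphi}$, since it is the only place where the concrete definitions of $'$ and of the Sasaki projections in Theorem \ref{PRO2} are used. Unwinding them, $\varphi'=\phi_{\neg(1\varphi)}$ by the definition of $'$, while for any Sasaki projection one computes $1\phi_a=(1\lor\neg a)\land a=a$, so that $(\phi_a)'=\phi_{\neg(1\phi_a)}=\phi_{\neg a}$. Applying this with $a=\neg(1\varphi)$ gives $\varphi''=(\phi_{\neg(1\varphi)})'=\phi_{1\varphi}$. Substituting, $\bar{s}_S(\varphi)=\bar{s}(\varphi'')=\bar{s}(\phi_{1\varphi})=\phi_{s(1\varphi)}=\phi_{s(\varphi(1))}$, as claimed, and uniqueness is inherited directly from Theorem \ref{BSS3}.
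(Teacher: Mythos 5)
Your proposal is correct and follows essentially the same route as the paper: item 1 by transporting $s$ along the $\mathcal{OML}$-isomorphism $f_L$ via $\bar{s}=f_L\, s\, f_L^{-1}$, and item 2 by invoking Theorem \ref{BSS3} together with the identification $\varphi''=\phi_{1\varphi}$, so that $\phi_{s(\varphi(1))}=\bar{s}(\varphi'')$ is exactly the unique extension. Your explicit verification of $\varphi''=\phi_{1\varphi}$ (via $(\phi_a)'=\phi_{\neg a}$) merely spells out a step the paper leaves implicit in writing $\bar{s}(\phi_{\neg\neg\varphi(1)})$.
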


\begin{proof}
1) By Theorem \ref{PRO2},  there exists an $OML$-isomorphism $f:L
\rightarrow P_c(S(L))$. It is not very hard to see that the
composition $\bar{s} = f s f^{-1}$ satisfies bs1,...,bs6 and
$f(s(x)) = (f s  f^{-1}) f(x) = \bar{s}f(x)$, i.e. $f$ preserves
$\bar{s}$. Then $L$ is $IE_B$-isomorphic to $P_c(S(L))$.

2) Let $\varphi \in S(L)$. Then  ${\bar{s}}_S(\varphi) =
\phi_{s(\varphi(1))} = \bar{s}(\phi_{\varphi(1)}) =
\bar{s}(\phi_{\neg \neg \varphi(1)}) = s(\varphi'')$. Therefore
${\bar{s}}_S$ is the extension of $\bar{s}$ given in Theorem
\ref{BSS3}. Hence the operation ${\bar{s}}_S$ defines the unique
$IE^*_B$-semigroup structure on $S(L)$ such that $L$ is
$IE_B$-isomorphic to $P_c(S(L))$. \rule{5pt}{5pt}
\end{proof}

\begin{coro}\label{BS5}
Let $ \langle S, \cdot, ^*, ', s, 0 \rangle$ be an
$IE^*_B$-semigroup. Suppose that $S_1$ is a sub Baer $^*$-semigroup
of $S$ and $P_c(S_1)$ is a sub $IE_B$-lattice of $P_c(S)$. Then the
restriction $s/_{S_1}$ defines the unique $IE^*_B$-semigroup
structure on $S_1$. In this way, $S_1$ is also a sub
$IE^*_B$-semigroup of $S$.
\end{coro}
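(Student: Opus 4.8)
The plan is to reduce everything to Theorem \ref{BSS3}, which already produces a unique $IE^*_B$-semigroup structure on any Baer $^*$-semigroup once its closed projectors carry an $IE_B$-lattice structure. The only genuine work left is to confirm that the literal restriction $s/_{S_1}$ coincides with that canonical extension, so that the uniqueness transfers verbatim.

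First I would check that $s$ genuinely restricts to an operation on $S_1$. Since $S_1$ is a sub Baer $^*$-semigroup, it is closed under $'$, so for $x \in S_1$ we have $x'' \in P_c(S_1)$. By Proposition \ref{BS2}-3, $s(x) = s(x'')$, and because $P_c(S_1)$ is a sub $IE_B$-lattice of $P_c(S)$ it is closed under the internal operation $s/_{P_c(S)}$; hence $s(x'') \in P_c(S_1) \subseteq S_1$. Thus $s(S_1) \subseteq P_c(S_1) \subseteq S_1$, so $s/_{S_1}$ is a well-defined unary operation on $S_1$. Next I would apply Theorem \ref{BSS3} to the Baer $^*$-semigroup $S_1$ together with the $IE_B$-lattice $\langle P_c(S_1), \lor, \land, ', s/_{P_c(S_1)}, 0, 1\rangle$. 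This yields the unique $IE^*_B$-semigroup structure $s_{S_1}$ on $S_1$ with $s_{S_1}/_{P_c(S_1)} = s/_{P_c(S_1)}$, given explicitly by $s_{S_1}(x) = (s/_{P_c(S_1)})(x'')$. The identification is then immediate: for $x \in S_1$, since $x'' \in P_c(S_1)$ we have $(s/_{P_c(S_1)})(x'') = s(x'')$, and by Proposition \ref{BS2}-3 this equals $s(x)$. Therefore $s_{S_1} = s/_{S_1}$, so $s/_{S_1}$ is precisely the unique $IE^*_B$-semigroup structure on $S_1$; and since this structure is the literal restriction of $s$ while $S_1$ is a sub Baer $^*$-semigroup closed under $s$, $S_1$ is a sub $IE^*_B$-semigroup of $S$.

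There is no serious obstacle here; the argument is entirely a matter of bookkeeping. The single load-bearing fact is the identity $s(x) = s(x'')$ from Proposition \ref{BS2}-3, which collapses the value of $s$ on an arbitrary element to its value on the associated closed projector and thereby lets the hypothesis on $P_c(S_1)$ govern the whole of $S_1$. The only step requiring a moment's care is verifying that $s/_{S_1}$ takes values in $S_1$ before one is entitled to treat it as an operation on $S_1$, but this too follows from the same identity together with the closure of $P_c(S_1)$ under its internal operation.
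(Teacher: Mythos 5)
Your proof is correct and takes essentially the same route as the paper's: both reduce the statement to Theorem \ref{BSS3} applied to $S_1$ with the $IE_B$-lattice $P_c(S_1)$, and both identify the canonical extension $s_{S_1}(x) = s/_{P_c(S_1)}(x'')$ with the literal restriction $s/_{S_1}$ via the identity $s(x) = s(x'')$ from Proposition \ref{BS2}-3. Your explicit check that $s(S_1) \subseteq P_c(S_1) \subseteq S_1$, so that $s/_{S_1}$ is a bona fide operation on $S_1$, only spells out a step the paper leaves implicit.
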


\begin{proof}
Let $S_1$ be a sub Baer $^*$-semigroup of $S$ such that $P_c(S_1)$
is a sub $IE_B$-lattice of $P_c(S)$. If for each $x\in S_1$ we
define $s_{S_1}(x) = s/_{P_c(S_1)}(x'') = s(x'')$ then, $s_{S_1} =
s/_{S_1}$ and, by Theorem \ref{BSS3}, it is defines the unique
$IE^*_B$-semigroup structure on $S_1$ that coincides with
$s/_{P_c(S_1)}$ in $P_c(S_1)$. In this way $S_1$ also results  a sub
$IE^*_B$-semigroup of $S$. \rule{5pt}{5pt}
\end{proof}

\begin{prop}\label{PRO11}
Suppose that $(S_i)_{i\in I}$ is a family of $IE^*_B$-semigroups.
Then, $\prod_{i\in I}P_c(S_i)$ is $IE_B$-lattice isomorphic to
$P_c(\prod_{i\in I}S_i)$.

\end{prop}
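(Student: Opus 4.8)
The plan is to exhibit the obvious coordinatewise identification as the desired $IE_B$-isomorphism, reducing everything to the observation that all relevant operations on the product are computed pointwise. First I would record that, since Baer $^*$-semigroups and $IE^*_B$-semigroups are both equationally defined (hence closed under products), the product $\prod_{i\in I} S_i$ is again an $IE^*_B$-semigroup, with $\cdot$, $^*$, $'$, $s$ and the constant $0$ all evaluated coordinate by coordinate; the unit is $1 = 0' = (0_i)' = (1_i)_{i\in I}$, also coordinatewise.

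The next step is to identify the underlying sets. Using the representation $P_c(S) = \{x' : x \in S\}$ together with the fact that $'$ acts coordinatewise, I would argue the set equality $P_c(\prod_{i\in I} S_i) = \prod_{i\in I} P_c(S_i)$ directly: if $(e_i)_{i\in I}$ has each $e_i = x_i' \in P_c(S_i)$, then $(e_i)_{i\in I} = (x_i)_{i\in I}{}'$ lies in $P_c(\prod_i S_i)$; conversely, if $(e_i)_{i\in I} = (x_i)_{i\in I}{}' = (x_i')_{i\in I}$, then each $e_i = x_i' \in P_c(S_i)$. Thus the natural map $\Phi \colon \prod_{i\in I} P_c(S_i) \to P_c(\prod_{i\in I} S_i)$, $(e_i)_{i\in I} \mapsto (e_i)_{i\in I}$, is a well-defined bijection.

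It then remains to check that $\Phi$ preserves the full $IE_B$-lattice signature. The orthocomplement $'$ is coordinatewise by construction, and the constants $0,1$ match as noted. For the meet I would compute, for closed projectors $(e_i)_{i\in I}, (f_i)_{i\in I}$, using the formula of Theorem \ref{PRO1} and the coordinatewise character of $\cdot$ and $'$,
$$
(e_i)_{i\in I} \land (f_i)_{i\in I} = (e_i)_{i\in I} \cdot \bigl((f_i)_{i\in I}{}' \cdot (e_i)_{i\in I}\bigr)' = \bigl(e_i \cdot (f_i' \cdot e_i)'\bigr)_{i\in I} = (e_i \land f_i)_{i\in I},
$$
which is exactly the coordinatewise meet in $\prod_i P_c(S_i)$; the join follows since $e_1 \lor e_2 = (e_1' \land e_2')'$ is derived from $\land$ and $'$. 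Finally, by Proposition \ref{BS2} the $IE_B$-operation on $P_c(\prod_i S_i)$ is the restriction of the product's $s$, which is coordinatewise, hence equals $\prod_i\, (s/_{P_c(S_i)})$ under $\Phi$. Therefore $\Phi$ is an $IE_B$-isomorphism.

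The bulk of this argument is routine coordinatewise bookkeeping; the only point demanding genuine care is the set-theoretic step $P_c(\prod_i S_i) = \prod_i P_c(S_i)$, where one must invoke the representation $P_c(S) = \{x' : x\in S\}$ rather than attempting to characterize closed projectors through the projector conditions directly, and the verification that the Sasaki-type meet of Theorem \ref{PRO1} genuinely localizes to each coordinate. These I expect to be the main obstacle, modest as it is.
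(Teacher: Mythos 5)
Your proof is correct and follows essentially the same route as the paper: both exploit the pointwise definition of the operations on $\prod_{i\in I}S_i$ together with the representation $P_c(S)=\{x':x\in S\}$ to identify $P_c(\prod_{i\in I}S_i)$ with $\prod_{i\in I}P_c(S_i)$ via the coordinatewise map, and then verify preservation of the $IE_B$-operations coordinate by coordinate (the paper checks $s$ explicitly via bs2, while you invoke Proposition \ref{BS2}; these amount to the same computation). Your explicit verification of the meet through the formula $e_1\land e_2=e_1\cdot(e_2'\cdot e_1)'$ merely fills in what the paper dismisses as ``straightforward,'' so there is no substantive difference.
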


\begin{proof}
Since the operations in $\prod_{i\in I}S_i$ are defined pointwise,
for each $(x_i)_{i\in I} \in \prod_{i\in I}S_i$, $(x_i)'_{i\in I} =
(x'_i)_{i\in I}$. Then it is straightforward to prove that
$f((x_i)'_{i\in I}) = (x'_i)_{i\in I}$ defines an $OML$-isomorphism
$f: P_c(\prod_{i\in I}S_i) \rightarrow \prod_{i\in I}P_c(S_i)$. We
have to prove that this function preserves $s$. In fact
$f(s((x_i)'_{i\in I})) = f( (s(x_i))'_{i\in I}) = (s(x_i)')_{i\in I}
= (s(x'_i))_{i\in I} = s((x'_i)_{i\in I}) = s(f((x_i)'_{i\in I}))$.
Hence $f$ is an $IE_B$-lattice isomorphism. \rule{5pt}{5pt}
\end{proof}

In what follow we study the relation between Boolean$^*$ pre-states
and $IE^*_B$-semigroups.

\begin{prop}\label{FUNC32}
Let $ \langle S, \cdot, ^*, ', s, 0 \rangle$ be an $IE^*_B$-semigroup.
Then there exists a Boolean$^*$ pre-state  $\sigma:S \rightarrow \{0,1\}$ such that $s/_{P_c(S)}$ is
coherent with $\sigma/_{P_c(S)}$.
\end{prop}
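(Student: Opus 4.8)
The plan is to chain together three results already in hand. The key observation is that coherence is a property relating only the restriction $\sigma/_{P_c(S)}$ to $s/_{P_c(S)}$, so the whole construction can be carried out inside the orthomodular lattice $P_c(S)$ and then transported to $S$ by the unique-extension theorem. Concretely, I would factor the argument through the $IE_B$-lattice structure on $P_c(S)$, build a coherent Boolean pre-state there, and lift it.

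First I would apply Proposition \ref{BS2}-2 to obtain that $\langle P_c(S), \lor, \land, ', s/_{P_c(S)}, 0,1 \rangle$ is an $IE_B$-lattice. With this in place, Proposition \ref{FUNC00} supplies a Boolean pre-state $\sigma_0 : P_c(S) \rightarrow \{0,1\}$ coherent with $s/_{P_c(S)}$, that is, satisfying $\sigma_0(x) = 1$ iff $\sigma_0(s/_{P_c(S)}(x)) = 1$ for every $x \in P_c(S)$.

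Next I would extend $\sigma_0$ to all of $S$ by means of Theorem \ref{BSIGMA3}: since $\sigma_0$ is a Boolean pre-state on $P_c(S)$, the function $\sigma(x) = \sigma_0(x'')$ is the unique Boolean$^*$ pre-state on $S$ with $\sigma/_{P_c(S)} = \sigma_0$. Because $\sigma$ restricts to $\sigma_0$ on $P_c(S)$ and $\sigma_0$ was chosen coherent with $s/_{P_c(S)}$, the restriction $\sigma/_{P_c(S)}$ is coherent with $s/_{P_c(S)}$, which is precisely the assertion.

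I do not expect any genuine obstacle, since the statement is essentially the composition of Propositions \ref{BS2} and \ref{FUNC00} with Theorem \ref{BSIGMA3}. The only delicate point is notational bookkeeping: one must keep in mind that coherence is stated for the restriction, and invoke the clause $\sigma/_{P_c(S)} = \sigma_0$ furnished by Theorem \ref{BSIGMA3} to transfer the property verbatim. The non-uniqueness of $\sigma_0$ remarked after Proposition \ref{FUNC00} is harmless here, as the claim asserts only existence.
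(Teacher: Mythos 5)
Your proposal is correct and matches the paper's proof essentially step for step: invoke Proposition \ref{FUNC00} on the $IE_B$-lattice $\langle P_c(S), \lor, \land, ', s/_{P_c(S)}, 0,1\rangle$ to get a coherent Boolean pre-state $\sigma_0$, then lift it via Theorem \ref{BSIGMA3} to the unique Boolean$^*$ pre-state $\sigma$ with $\sigma/_{P_c(S)} = \sigma_0$. Your only addition is making explicit the appeal to Proposition \ref{BS2}-2, which the paper leaves implicit; that is a harmless (indeed welcome) clarification, not a different argument.
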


\begin{proof}
By Proposition \ref{FUNC00} there exists a Boolean pre-state
$\sigma_0: P_c(S) \rightarrow \{0,1\} $ such that  $s/_{P_c(S)}$ is
coherent with $\sigma_0$. By Theorem  \ref{BSIGMA3}, there exists a
unique Boolean$^*$ pre-state  $\sigma:S \rightarrow \{0,1\}$ such
that  $\sigma_0 = \sigma/_{P_c(S)}$. Hence $s/_{P_c(S)}$ is coherent
with $\sigma/_{P_c(S)} = \sigma_0$. \rule{5pt}{5pt}
\end{proof}

The following result gives a kind of converse of the last
proposition:

\begin{prop}\label{FUNC33}
Let $S$ be a Baer $^*$-semigroup and  $\sigma:S \rightarrow \{0,1\}$ be
a Boolean$^*$ pre-state. If we define $$ s_{\sigma}(x) = \cases
{1^{P_c(S)}, & if $\sigma(x)=1$ \cr 0^{P_c(S)} , & if $\sigma(x)=0$ \cr} $$then
$ \langle S, \cdot, ^*, ', s_{\sigma}, 0 \rangle$ is an $IE^*_B$-semigroup
and $s_\sigma/_{P_c(S)}$ is coherent with $\sigma/_{P_c(S)}$

\end{prop}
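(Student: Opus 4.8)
The plan is to build the desired $IE^*_B$-semigroup by a two-stage strategy that mirrors the construction already used in Theorem \ref{BSS3}. First I would restrict attention to the closed projectors and recover an $IE_B$-lattice structure there; then I would extend that structure to all of $S$ via the canonical formula $x \mapsto s(x'')$ and invoke Theorem \ref{BSS3} to obtain the uniqueness and the axioms bs1 through bs6 essentially for free. The advantage of this approach is that it lets me reuse the heavy lifting already done, rather than verifying all six equations directly on $S$.

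Concretely, the first step is to observe that the Boolean$^*$ pre-state $\sigma$, when restricted to $P_c(S)$, is a Boolean pre-state on the orthomodular lattice $P_c(S)$ by the very definition of Boolean$^*$ pre-state (Definition \ref{BOOLEANSTAR}, condition 2). Applying Proposition \ref{FUNC0} to the orthomodular lattice $P_c(S)$ together with the Boolean pre-state $\sigma/_{P_c(S)}$ produces an $IE_B$-lattice $\langle P_c(S), \lor, \land, ', s_0, 0,1\rangle$, where $s_0$ sends $x$ to $1^{P_c(S)}$ or $0^{P_c(S)}$ according to whether $\sigma(x)=1$ or $\sigma(x)=0$; moreover $s_0$ is coherent with $\sigma/_{P_c(S)}$. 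The second step is to feed this $IE_B$-lattice into Theorem \ref{BSS3}, which guarantees that the map $x \mapsto s_0(x'')$ is the unique $IE^*_B$-semigroup structure on $S$ extending $s_0$.

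The final task is then to check that this extension coincides with the operation $s_\sigma$ defined in the statement. This is where I expect the only genuine (though still routine) work to lie: I would verify that $s_0(x'') = s_\sigma(x)$ for every $x \in S$. Using Proposition \ref{BS1}-1, we have $\sigma(x'') = \sigma(x)$, so $\sigma(x'')=1$ iff $\sigma(x)=1$; hence $s_0(x'') = 1^{P_c(S)}$ iff $\sigma(x)=1$, which is exactly the defining clause of $s_\sigma(x)$, and likewise for the value $0^{P_c(S)}$. Therefore $s_\sigma = s_{0,S}$ is the $IE^*_B$-semigroup structure delivered by Theorem \ref{BSS3}, so $\langle S, \cdot, {}^*, {}', s_\sigma, 0\rangle$ is indeed an $IE^*_B$-semigroup. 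Coherence of $s_\sigma/_{P_c(S)}$ with $\sigma/_{P_c(S)}$ is inherited directly from the coherence of $s_0$ produced in the first step. The main obstacle, if any, is purely bookkeeping: keeping straight that $s_\sigma$ takes values in $P_c(S) \subseteq S$ rather than in $\{0,1\}$, and confirming that the $x''$-collapse used in Theorem \ref{BSS3} matches the $\sigma(x'')=\sigma(x)$ identity so that no separate verification of bs1 through bs6 on $S$ is needed.
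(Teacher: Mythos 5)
Your proposal is correct and follows essentially the same route as the paper's own proof: restrict $\sigma$ to $P_c(S)$ and apply Proposition \ref{FUNC0} to obtain the $IE_B$-lattice structure and coherence there, then use $\sigma(x)=\sigma(x'')$ (Proposition \ref{BS1}-1) to identify $s_\sigma$ with the canonical extension $x\mapsto s_\sigma/_{P_c(S)}(x'')$ of Theorem \ref{BSS3}, which yields all of bs1--bs6 without direct verification. No gaps; your version is merely slightly more explicit about the final bookkeeping step than the paper's three-line argument.
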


\begin{proof}
By Proposition \ref{FUNC0}, $\langle P_c(S), \land, \lor,',
s_\sigma/_{P_c(S)}, 0,1  \rangle$ is an $IE_B$-lattice and
$s_\sigma/_{P_c(S)}$ is coherent with $\sigma/_{P_c(S)}$. Since
$\sigma(x) = \sigma(x'')$ then $s_\sigma(x) = s_\sigma(x'') =
s_\sigma/_{P_c(S)}(x'')$. Hence, by Theorem \ref{BSS3}, $s_\sigma$
defines the unique $IE^*_B$-semigroup structure on $S_1$ that
extends $s_\sigma/_{P_c(S)}$. \rule{5pt}{5pt}
\end{proof}

\section{Varieties of $IE_B$-lattices determining varieties of $IE^*_B$-semigroups} \label{VARIETIES}

When a family of two-valued states over an orthomodular lattice is
equationally characterizable  by a variety of $IE_B$-lattices in the
sense of Definition \ref{DEF1}, the problem about the existence of a
variety of $IE^*_B$-semigroups that somehow may be able to
equationally characterize the mentioned family of two-valued states
may be posed. The following definition provides a ``natural
candidate'' for such a class of $IE^*_B$-semigroups.

\begin{definition}\label{AIASTE}
{\rm Let ${\mathcal A}_I$ be a subvariety of ${\mathcal IE}_I$. Then
we define the subclass ${\mathcal A}^*_I$ of  ${\mathcal IE}^*_I$ as
follows:
$${\mathcal A}^*_I =  \{S \in {\mathcal IE}^*_B: P_c(S)\in {\mathcal A}_I  \}$$

}
\end{definition}

Before proceeding,  we have to make sure that ${\mathcal A}^*_I$ is
a non-empty subclass of ${\mathcal IE}^*_I$.

\begin{prop}\label{NONEMPTY}
If ${\mathcal A}_I$ is a non-empty subvariety of ${\mathcal IE}_I$
then ${\mathcal A}^*_I$ is a non-empty subclass of ${\mathcal
IE}^*_B$.

\end{prop}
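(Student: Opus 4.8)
The plan is to produce a single concrete witness showing that $\mathcal{A}^*_I$ contains at least one element. Since $\mathcal{A}_I$ is nonempty, I would fix any orthomodular lattice $L$ that, equipped with a suitable internal Boolean pre-state $s$, belongs to $\mathcal{A}_I$; that is, I take an $IE_B$-lattice $\langle L, \land, \lor, \neg, s, 0, 1\rangle \in \mathcal{A}_I$. The goal is then to manufacture, from this $IE_B$-lattice, an $IE^*_B$-semigroup $S$ whose closed projectors form a copy of $L$ lying in $\mathcal{A}_I$, so that $S \in \mathcal{A}^*_I$ by Definition \ref{AIASTE}.

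The natural candidate for $S$ is the Baer $^*$-semigroup of residuated functions $S(L)$, and the tool that does all the work is Corollary \ref{BS4}. First I would invoke Theorem \ref{PRO2}, which guarantees that $S(L)$ is a Baer $^*$-semigroup and that $f_L : L \to P_c(S(L))$, $f_L(a) = \phi_a$, is an $\mathcal{OML}$-isomorphism. Next, Corollary \ref{BS4}-1 upgrades this: defining $\bar{s}(\phi_a) = \phi_{s(a)}$ turns $\langle P_c(S(L)), \lor, \land, {}', \bar{s}, 0, 1\rangle$ into an $IE_B$-lattice, and $f_L$ becomes an $IE_B$-isomorphism. Finally, Corollary \ref{BS4}-2 supplies the operation $\bar{s}_S(\varphi) = \phi_{s(\varphi(1))}$, which gives the unique $IE^*_B$-semigroup structure on $S(L)$ whose closed projectors reproduce $L$. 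Thus $\langle S(L), \circ, {}^*, {}', \bar{s}_S, \theta\rangle$ is a genuine $IE^*_B$-semigroup.

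It remains to check the membership condition of Definition \ref{AIASTE}, namely $P_c(S(L)) \in \mathcal{A}_I$. This follows because $P_c(S(L))$ is $IE_B$-isomorphic to $L$ (via $f_L$), and $\mathcal{A}_I$, being a variety, is closed under isomorphic images; since $L \in \mathcal{A}_I$ by assumption, so is $P_c(S(L))$. Hence $S(L) \in \mathcal{A}^*_I$, and the subclass is nonempty.

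I do not expect a serious obstacle here, since the statement is essentially a nonemptiness witness and every structural fact needed has already been packaged into Theorem \ref{PRO2} and Corollary \ref{BS4}. The only point demanding care is the justification that $\mathcal{A}_I$ is closed under the $IE_B$-isomorphism $f_L$: one must be explicit that $f_L$ preserves the enriched operation $s$ (which is exactly the content of Corollary \ref{BS4}-1, where $\bar{s} = f_L\, s\, f_L^{-1}$), so that $P_c(S(L))$ is not merely $\mathcal{OML}$-isomorphic but genuinely $IE_B$-isomorphic to $L$, and therefore lands in the variety $\mathcal{A}_I$ alongside $L$.
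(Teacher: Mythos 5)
Your proposal is correct and follows essentially the same route as the paper: both take an $IE_B$-lattice $L \in \mathcal{A}_I$, form the Baer $^*$-semigroup $S(L)$ of residuated functions via Theorem \ref{PRO2}, and extend $s$ to the unique $IE^*_B$-semigroup structure on $S(L)$ --- the paper citing Theorem \ref{BSS3} after informally ``identifying $L$ with $P_c(S(L))$,'' while you invoke Corollary \ref{BS4}, which merely packages that same identification. Your added remark that $\mathcal{A}_I$, being a variety, is closed under $IE_B$-isomorphisms makes explicit a step the paper leaves implicit, but it does not change the argument.
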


\begin{proof}
Suppose that $ \langle L, \land, \lor, \neg, s, 0,1 \rangle$ belong
to ${\mathcal A}_I$. By Theorem \ref{PRO2} we can consider the Baer
$^*$-semigroup $S(L)$ of residuated functions in $L$ in which $L$ is
$OML$-isomorphic to $P_c(S(L))$. Identifying $L$ with $P_c(S(L))$,
by Theorem \ref{BSS3}, there exists an operation $s_{S(L)}$ on
$S(L)$ that defines the unique $IE^*_B$-semigroup structure on
$S(L)$ such that $s_{S(L)}/_{P_c(S(L))} = s$. Hence $S(L) \in
{\mathcal A}^*_I$ and ${\mathcal A}^*_I$ is a non empty subclass of
${\mathcal IE}^*_B$. \rule{5pt}{5pt}
\end{proof}

In what follows we shall demonstrate not only that  ${\mathcal
A}^*_I$ is a variety but also we shall give a decidable method to
find  an equational system that defines ${\mathcal A}^*_I$ from an
equational system that defines ${\mathcal A}_I$. In order to study
this we first introduce the following concept:

\begin{definition}\label{TRANS}
{\rm We define the {\em $*$-translation} $\tau:$ Term$_{{\mathcal
IE}_B} \rightarrow$ Term$_{{\mathcal IE}^*_B}$ as follows:

\begin{itemize}

\item[]
$\tau(0) = 0$ and $\tau(1) = 1$

\item[]
$\tau(x) = x'$ for each variable $x$,

\item[]
$\tau(\neg t) = \tau(t)'$,

\item[]
$\tau (t \land s) = (\tau(t)'\cdot \tau(s))' \cdot \tau(s)$,

\item[]
$\tau(t \lor s) = \tau(\neg(\neg t \land \neg s))$,

\item[]
$\tau(s(t)) = s(\tau(t))$.

\end{itemize}
}
\end{definition}

\begin{prop}\label{EXTENS}
Let $S$ be a $IE^*_B$-semigroup, $v:$ {\rm Term}$_{{\mathcal
IE}^*_B}\rightarrow S $ be a valuation and $\tau$ be the
$*$-translation. Then:

\begin{enumerate}
\item
For each $t\in$ {\rm Term}$_{{\mathcal IE}_B}$,  $v(\tau(t)) \in
P_c(S)$,

\item
There exists a valuation $v_c:$ {\rm Term}$_{{\mathcal IE}_B}
\rightarrow P_c(S)$ such that for each $t\in$ {\rm Term}$_{{\mathcal
IE}_B}$, $v_c(t) = v(\tau(t))$.

\end{enumerate}
\end{prop}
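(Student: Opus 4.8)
The plan is to prove both statements simultaneously by induction on $\mathrm{Comp}(t)$, the guiding observation being that $\tau$ is engineered so that each $IE_B$-lattice operation is realized through the corresponding operation on the closed projectors of $S$. For part 1 the base cases are immediate: $\tau(0)=0$ and $\tau(1)=1$ evaluate to $0,1\in P_c(S)$ (recall $1=0'$ and, since $1\cdot 1'=0$, also $0=1'$), while for a variable $x$ we have $v(\tau(x))=v(x)'$, which lies in $P_c(S)$ because $P_c(S)=\{x':x\in S\}$. The case $t=\neg r$ is handled the same way, since $v(\tau(\neg r))=v(\tau(r))'$ is again in the image of $'$; and the case $t=s(r)$ follows at once from axiom bs3, as $v(\tau(s(r)))=s(v(\tau(r)))$ is a fixed point of $''$. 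The disjunction case $t=r\lor w$ reduces to the earlier ones, because $\tau(r\lor w)$ is by definition $\tau(\neg(\neg r\land\neg w))$, built from $\tau(r),\tau(w)$ using only the complement and meet clauses. Thus everything hinges on the conjunction case.

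For $t=r\land w$, write $a=v(\tau(r))$ and $b=v(\tau(w))$, which by the inductive hypothesis lie in $P_c(S)$, so that $v(\tau(r\land w))=(a'\cdot b)'\cdot b$. I would first note that $(a'\cdot b)'\in P_c(S)$ and, by Proposition \ref{PBAER1}-2, that $(a'\cdot b)''\le b$, equivalently $b'\le(a'\cdot b)'$ by Proposition \ref{PBAER1}-1. The orthomodular law in $P_c(S)$ then gives $(a'\cdot b)'=b'\lor\bigl((a'\cdot b)'\land b\bigr)$, which says precisely that $b$ and $p:=(a'\cdot b)'$ are compatible closed projectors. Hence they commute in $S$, so $p\cdot b=b\cdot p$ is a product of two commuting projectors and is therefore itself a projector; and by Theorem \ref{PRO1} (taking $e_1=b$, $e_2=a$) this common value equals $b\land a=a\land b\in P_c(S)$. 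This settles part 1 and, at the same time, identifies the value as the lattice meet of $a$ and $b$.

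For part 2, I would define the valuation $v_c$ on variables by $v_c(x)=v(x)'=v(\tau(x))\in P_c(S)$ and extend it uniquely to an $IE_B$-homomorphism into the $IE_B$-lattice $P_c(S)$, which is legitimate since $P_c(S)$ is an $IE_B$-lattice by Proposition \ref{BS2}-2. The identity $v_c(t)=v(\tau(t))$ is then obtained by the same induction: the complement, meet, and $s$ clauses of $\tau$ reproduce $\neg$, $\land$, and $s$ on $P_c(S)$ — the meet clause precisely by the computation above — while the join clause follows by de Morgan, since $v(\tau(r\lor w))=(a'\land b')'=a\lor b=v_c(r)\lor v_c(w)$ after reusing the meet formula with $\alpha=a'$, $\beta=b'$. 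The main obstacle is the conjunction step, and more sharply the assertion that the Sasaki-type product $(a'\cdot b)'\cdot b$ is a closed projector equal to the lattice meet; once the fact that compatible closed projectors commute in the underlying Baer $^*$-semigroup is in hand (a standard consequence of the results in \cite{MM}), the remaining clauses are a routine structural induction.
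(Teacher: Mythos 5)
Your proof is correct, and its skeleton is the same as the paper's: the same structural induction on Comp($t$), the same valuation $v_c$ determined on variables by $x\mapsto v(x)'=v(\tau(x))$ and extended freely into the $IE_B$-lattice $P_c(S)$, and the same reduction of the $\lor$ clause to $\neg$ and $\land$. The one genuine divergence is your treatment of the conjunction case, and there you are actually more careful than the paper. The paper's proof simply asserts $(v(\tau(p))'\cdot v(\tau(r)))'\cdot v(\tau(r)) = v(\tau(r))\land v(\tau(p))$ ``because both values are closed projections,'' tacitly matching the $\tau$-clause against Theorem \ref{PRO1}; but the formula quoted there is $e_1\land e_2 = e_1\cdot(e_2'\cdot e_1)'$, which with $e_1=b$, $e_2=a$ gives $b\cdot(a'\cdot b)'$, whereas the translation produces the product in the opposite order, $(a'\cdot b)'\cdot b$. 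You noticed this mismatch and supplied the missing bridge: Proposition \ref{PBAER1} gives $b'\le (a'\cdot b)'$, orthomodularity upgrades this comparability to compatibility of $b$ and $(a'\cdot b)'$ in $P_c(S)$, and the standard Foulis--Maeda theorem that compatible closed projections commute identifies the two products, after which Theorem \ref{PRO1} pins the common value down as $b\land a\in P_c(S)$. What your route buys is precisely the rigor the paper elides at this step; what it costs is an appeal to the commutation theorem for compatible closed projections, which is standard in \cite{MM} but is not among the facts the paper restates in its preliminaries, so strictly speaking you import one external lemma that the paper's (terser) argument never makes explicit. The remaining clauses — the base cases, $\neg$, and $s$ (where your direct use of bs3 is an equivalent substitute for the paper's appeal to Proposition \ref{BS2}) — coincide with the paper's proof.
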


\begin{proof}
1) Let $t\in$ Term$_{{\mathcal IE}_B}$.  If $t$ is the form $\neg r$
then,  $v(\tau(r)) = v(\tau(\neg r)) = v(\tau(r)') = v(\tau(r))' \in
P_c(S)$. If $t$ is the form $s(r)$ then, $v(\tau(s(r))) =
v(s(\tau(r))) = s(v(\tau(r))) \in Z(P_c(S)) \subseteq P_c(S)$.  For
the other case we use induction on the complexity of terms in
Term$_{{\mathcal IE}_B}$. If Comp($t$)$ = 0$ then $t$ is $0$, $1$,
or a variable $x$. In these cases $v(\tau(1)) = v(1) = 1^S$,
$v(\tau(0)) = v(0) = 0^S$ and $v(\tau(x)) = v(x') = v(x)'$. Thus
$v(\tau(t)) \in P_c(S)$. Assume that $v(\tau(t)) \in P_c(S)$
whenever Comp($t$)$ < n$. Suppose that Comp($t$)$ = n$. We have to
consider the case in which $t$ is the form $p \land r$. Then
$v(\tau(t)) = v(\tau(p\land r)) = v((\tau(p)'\cdot \tau(r))' \cdot
\tau(r)) = (v(\tau(p))'\cdot v(\tau(r)))' \cdot v(\tau(r)) =
v(\tau(r))\land v(\tau(p))$ because $v(\tau(r))\in  P_c(S)$ and
$v(\tau(p)) \in  P_c(S)$. Thus $v(\tau(t)) \in P_c(S)$. This proves
that for each $t\in$ Term$_{{\mathcal IE}_B}$
$v(\tau(t)) \in P_c(S)$. \\

2) Consider the valuation $v_c:$ Term$_{{\mathcal IE}_B} \rightarrow
P_c(S)$ such that for each variable $x$, $v_c(x) = v(x')$. Now we
proceed by induction on the complexity of terms in Term$_{{\mathcal
IE}_B}$. If Comp($t$)$ = 0$ then $t$ is $0$, $1$, or a variable $x$.
Then, $v_c(1) = 1^S = v(1) = v(\tau(1))$, $v_c(0) = 0^S = v(0) =
v(\tau(0))$ and  $v_c(x) = v(x') = v(\tau(x))$.  Assume that $v_c(t)
= v(\tau(t))$ whenever Comp($t$)$ < n$. Suppose that Comp($t$)$ =
n$. We have to consider three possible cases:

$t$ is the form $\neg r$. Then $v_c(t) = v_c(\neg r) = v_c(r)' = v(\tau(r))' = v(\tau(r)') = v(\tau(\neg r)) = v(\tau(t))$.

$t$ is the form $p \land r$.  $v_c(t) = v_c(p\land r) = v_c(p) \land
v_c(r) = v(\tau(p)) \land v(\tau(r)) = (v(\tau(p))'\cdot
v(\tau(r)))' \cdot v(\tau(r)) = v((\tau(p)'\cdot \tau(r))' \cdot
\tau(r)) = v(\tau(p\land r)) = v(\tau(t))$.

$t$ is the form $s(r)$. Then $v_c(t) = v_c(s(r)) = s(v_c(r)) = s(v(\tau(r))) = v(s(\tau(r))) = v(\tau(s(r))) = v(\tau(t))$.

This proves that  for each $t\in$ Term$_{{\mathcal IE}_B}$, $v_c(t)
= v(\tau(t))$. \rule{5pt}{5pt}
\end{proof}\\

\begin{prop}\label{EXTENS2}
Let $S$ be an $IE^*_B$-semigroup and $v:$  {\rm Term}$_{{\mathcal
IE}_B} \rightarrow P_c(S) $ be a valuation. Then there exists a
valuation $v^*:$  {\rm Term}$_{{\mathcal IE}^*_B} \rightarrow S $
such that $t\in$ {\rm Term}$_{{\mathcal IE}_B}$, $v^*(\tau(t)) =
v(t)$.

\end{prop}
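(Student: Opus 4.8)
The plan is to read this as the reverse of Proposition \ref{EXTENS}: there a valuation on the starred language produced one on the plain language, while here I must manufacture a valuation on the starred language out of a given plain valuation $v$ that lands in $P_c(S)$. Since a valuation is freely and uniquely determined by its values on the variables, I only need to choose $v^*(x)$ for each variable $x$ and then let $v^*$ be the unique valuation extending that assignment. The correct choice is forced by the translation: because $\tau(x) = x'$, the target identity at a variable reads $v^*(x)' = v^*(\tau(x)) = v(x)$. As $v(x) \in P_c(S)$ we have $v(x) = v(x)''$, so setting $v^*(x) = v(x)'$ (a legitimate element of $S$, since $P_c(S) \subseteq S$) yields $v^*(\tau(x)) = v^*(x') = v^*(x)' = v(x)'' = v(x)$, exactly as needed.

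First I would verify the base cases of an induction on Comp($t$): for $t = 0$ and $t = 1$ the translation is the identity and $v^*, v$ agree on the constants, while a variable is handled as above. For the inductive step I would treat the three clauses of Definition \ref{TRANS}, running the computation of Proposition \ref{EXTENS}-2 in reverse. For $t = \neg r$ one gets $v^*(\tau(\neg r)) = v^*(\tau(r))' = v(r)' = v(\neg r)$, using the induction hypothesis and that $\neg$ is interpreted by $'$ in $P_c(S)$. For $t = s(r)$ one gets $v^*(\tau(s(r))) = v^*(s(\tau(r))) = s(v^*(\tau(r))) = s(v(r))$; since $v(r) \in P_c(S)$ and $s/_{P_c(S)}$ is the $IE_B$-operation of the lattice $P_c(S)$ (Proposition \ref{BS2}-2), this equals $v(s(r))$.

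The one step deserving care is $t = p \land r$, where the translation replaces $\land$ by the semigroup word $(\tau(p)' \cdot \tau(r))' \cdot \tau(r)$. Here I would invoke Proposition \ref{EXTENS}-1 to guarantee $v^*(\tau(p)), v^*(\tau(r)) \in P_c(S)$, so that by the meet formula of Theorem \ref{PRO1} the expression $(v^*(\tau(p))' \cdot v^*(\tau(r)))' \cdot v^*(\tau(r))$ computes the orthomodular meet $v^*(\tau(r)) \land v^*(\tau(p))$ in $P_c(S)$; by the induction hypothesis and commutativity of $\land$ this is $v(r) \land v(p) = v(p) \land v(r) = v(p \land r)$. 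This identification of the semigroup word with the lattice meet is the only nontrivial ingredient, and it is precisely the computation already performed in the proof of Proposition \ref{EXTENS}, so no new lattice-theoretic work is required. Completing the induction gives $v^*(\tau(t)) = v(t)$ for every $t \in$ Term$_{{\mathcal IE}_B}$, which is the assertion.
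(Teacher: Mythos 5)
Your proposal is correct and follows essentially the same route as the paper: the paper also defines $v^*$ on variables by $v^*(x) = v(\neg x)$ (which equals your $v(x)'$, since $v$ is an ${\mathcal IE}_B$-valuation into $P_c(S)$) and then runs the same induction on complexity with the same three cases, identifying the semigroup word $(\tau(p)'\cdot\tau(r))'\cdot\tau(r)$ with the lattice meet in $P_c(S)$ exactly as you do. Your appeal to Proposition \ref{EXTENS}-1 for closure in $P_c(S)$ is a harmless variant of the paper's use of the induction hypothesis together with the fact that $v$ takes values in $P_c(S)$.
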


\begin{proof}
Consider the valuation $v^*:$ Term$_{{\mathcal IE}^*_B} \rightarrow
S $ such that for each variable $v^*(x) = v(\neg x)$. Let $t\in$
Term$_{{\mathcal IE}_B}$. We use induction on the complexity of
terms in Term$_{{\mathcal IE}_B}$. If Comp($t$)$ = 0$ then $t$ is
$0$, $1$, or a variable $x$. Then, $v^*(\tau(1)) = v^*(1) = 1^S =
v(1)$, $v^*(\tau(0)) = v^*(0) = 0^S = v(0)$ and $v^*(\tau(x)) =
v^*(x') = v^*(x)' = v(\neg x)' = v(x)'' = v(x)$ since $v(x) \in
P_c(S)$. Assume that $v^*(\tau(t)) =  v(t) $ whenever Comp($t$)$ <
n$. Suppose that Comp($t$)$ = n$. We have to consider three possible
cases:

$t$ is the form $\neg r$. Then $v^*(\tau(t)) = v^*(\tau(\neg r)) = v^*(\tau(r)') = v^*(\tau(r))' = v(r)' = v(\neg r) = v(t)$.

$t$ is the form $p \land r$. Then $v^*(\tau(t)) = v^*(\tau(p \land
r)) = v^*( (\tau(p)'\cdot \tau(r))' \cdot \tau(r)) =
(v^*(\tau(p))'\cdot v^*(\tau(r)))' \cdot v^*(\tau(r)) = (v(p)' \cdot
v(r))' \cdot v(r) = v(p)\land v(r) = v(p\land r) = v(t)$.

$t$ is the form $s(r)$. Then $v^*(\tau(t)) = v^*(\tau(s(r))) = v^*(s(\tau(r))) = s(v^*(\tau(r))) = s(v(r)) = v(s(r)) = v(t)$.

This proves that for each $t\in$ Term$_{{\mathcal IE}_B}$,
$v^*(\tau(t)) = v(t)$. \rule{5pt}{5pt}
\end{proof}\\

\begin{theo} \label{1EQ}
Let ${\mathcal A}_I$ be a subvariety of ${\mathcal IE}_I$ and assume
that $\{t_i = s_i \}_{i\in I}$ is a set of equations in the language
of ${\mathcal IE}_I$ that defines ${\mathcal A}_I$. Then $${\mathcal
A}^*_I = \{S \in {\mathcal IE}^*_B: \forall i\in I, S \models
\tau(t_i) = \tau(s_i) \}$$

\end{theo}

\begin{proof}
On the one hand, assume that $S \in {\mathcal A}^*_I$ i.e.,
$P_c(S)\models t_i = s_i$ for each $i\in I$. Suppose that there
exists $i_0 \in I$ such that $S \not \models \tau(t_{i_0}) =
\tau(s_{i_0})$. Then there exists a valuation $v:$ Term$_{{\mathcal
IB}_B} \rightarrow S $ such that $v(\tau(t_{i_0})) \not =
v(\tau(s_{i_0}))$. By Proposition \ref{EXTENS} there exists a
valuation $v_c:$ Term$_{{\mathcal IE}_B} \rightarrow P_c(S)$ such
that for each $t\in$ Term$_{{\mathcal IE}_B}$, $v_c(t) =
v(\tau(t))$. Then $v_c(t_{i_0}) = v(\tau(t_{i_0})) \not =
v(\tau(s_{i_0})) = v_c(s_{i_0})$ and  $P_c(S)\not \models t_{i_0} =
s_{i_0}$ which is a contradiction. Hence $S \models \tau(t_i) =
\tau(s_i)$ for each $i\in I$.

On the other hand, assume that $S \in {\mathcal IE}^*_B$ and $S
\models \tau(t_i) = \tau(s_i)$ for each $i\in I$. Suppose that there
exists $i_0 \in I$ such that $P_c(S)\not \models  t_{i_0} =
s_{i_0}$. Then there exists a valuation $v:$  Term$_{{\mathcal
IE}_B} \rightarrow P_c(S)$ such that $v(t_{i_0}) \not = v(s_{i_0})$.
By Proposition \ref{EXTENS2}, there exists a valuation $v^*:$
Term$_{{\mathcal IE}^*_B} \rightarrow S $ such that for each $t\in$
Term$_{{\mathcal IE}_B}$, $v^*(\tau(t)) = v(t)$. Then $
v^*(\tau(t_{i_0})) = v(t_{i_0}) \not = v(s_{i_0}) =
v^*(\tau(s_{i_0})) $ and $S \not \models \tau(t_{i_0}) =
\tau(s_{i_0})$ which is a contradiction. Hence, for each $i\in I$,
we have $P_c(S)\models t_i = s_i$. \rule{5pt}{5pt}
\end{proof}

\section{Baer $^*$-semigroups and the full class of two-valued states}  \label{FULLCLASS}

The category of orthomodular lattices admitting a two-valued state,
noted by ${\mathcal TE}_B$, is the full sub-category of ${\mathcal
E}_B$ whose objects are $E_B$-lattices $(L,\sigma)$ satisfying the
following condition: $$ x\bot y \hspace{0.2cm} \Longrightarrow
\hspace{0.2cm} \sigma(x\lor y) = \sigma(x)+\sigma(y). $$ In
\cite[Theorem 6.3]{DFD} it is proved that the variety
$${\mathcal ITE}_B = {\mathcal IE}_B + \{s(x \lor (y \land \neg x) )= s(x) \lor s(y \land \neg x) \} $$
equationally characterizes ${\mathcal TE}_B$ in the sense of
Definition \ref{DEF1}. Thus, the objects of ${\mathcal TE}_B$ are
identifiable to the directly indecomposable algebras of ${\mathcal
ITE}_B$. By Theorem \ref{1EQ} we can give an equational theory in
the frame of Baer $^*$-semigroups that captures the concept of
two-valued state. In fact this is done through the variety
$${\mathcal ITE}^*_B = {\mathcal IE}^*_B + \{s(x' \lor (y' \land x'') )=
s(x') \lor s(y' \land x'') \} $$

\section{Baer $^*$-semigroups and Jauch-Piron two-valued states} \label{JAUCHPIRON}

The category of orthomodular lattices admiting a Jauch-Piron
two-valued state \cite{RU}, noted by ${\mathcal JPE}_B$, is the full
sub-category of ${\mathcal TE}_B$ whose objects are $E_B$-lattices
$(L,\sigma)$ in ${\mathcal TE}_B$ also satisfying the following
condition:
$$\sigma(x) = \sigma(y) = 1 \hspace{0.3cm} \Longrightarrow
\hspace{0.3cm} \sigma(x\land y) = 1  $$ In \cite[Theorem 7.3]{DFD}
it is proved that the variety $${\mathcal IJPE}_B = {\mathcal ITE}_B
+ \{s(x) \land s(\neg x \lor y) = s(x\land y) \} $$ equationally
characterizes ${\mathcal JPE}_B$ in the sense of Definition
\ref{DEF1}. Thus the objects of ${\mathcal JPE}_B$ are identifiable
to the  directly indecomposable algebras of ${\mathcal IJPE}_B$.  By
Theorem \ref{1EQ} we can give an equational theory in the frame of
Baer $^*$-semigroups that captures the concept of two-valued state.
In fact this is done through the variety $${\mathcal IJPE}^*_B =
{\mathcal ITE}^*_B + \{ s(x') \land s(x'' \lor y') = s(x'\land y')\}
$$

\section{The problem of equational completeness in ${\mathcal A}^*_I$} \label{APROBLEM}

Let ${\mathcal A}$ be a family of $E_B$-lattices. Suppose that the
subvariety ${\mathcal A}_I$ of ${\mathcal IE}_I$ equationally
characterizes ${\mathcal A}$ in the sense of Definition \ref{DEF1}.
Then, through a functor  ${\mathcal I}$, ${\mathcal A}$ is
identifiable to the directly indecomposable algebras of the variety
${\mathcal A}_I$. In this way, we can state that  ${\mathcal A}$
determines the equational theory of ${\mathcal A}_I$. With the
natural extension of Boolean pre-states to Baer $^*$-semigroups,
encoded in ${\mathcal A}^*_I$, this kind of characterization may be
lost. More precisely, the class ${\mathcal A}$ may ``not rule'' the
equational theory of ${\mathcal A}^*_I$ in the way ${\mathcal A}$
does with ${\mathcal A}_I$. The following example shows such a
situation:

\begin{example}\label{problem1}
{\rm Let $\widetilde{{\mathcal B}}$ be the subclass of ${\mathcal
E}_B$ formed by the pairs $(B,\sigma)$ such that $B$ is a Boolean
algebra and $\sigma$ is a Boolean pre-state.  $\widetilde{{\mathcal
B}}$ is a non empty class since Boolean homomorphisms of the form
$B\rightarrow {\mathbf 2} $ always exist  for each Boolean algebra
$B$ and they are examples of Boolean pre-states. It is clear that
the class
$${\widetilde{{\mathcal B}}}_I = {\mathcal IE}_B + \{x \land (y \lor z) =
(x\land y) \lor (x \land z ) \} $$ equationally characterizes  the
class $\widetilde{{\mathcal B}}$ in the sense of Definition
\ref{DEF1}. Note that ${\widetilde{{\mathcal B}}}_I$ may be seen as
a sub-variety ${\widetilde{{\mathcal B}}}^*_I$  since, each algebra
$B$ in ${\widetilde{{\mathcal B}}}_I$ in the signature $\langle
\land,
*, \neg, s, 0 \rangle$ where $*$ is the identity, is a
$IE^*_B$-semigroup. Then the equational theory of
${\widetilde{{\mathcal B}}}_I$, as variety of $IE^*_B$-semigroups,
is determined by the algebras of $\widetilde{{\mathcal B}}$. Note
that algebras of ${\widetilde{{\mathcal B}}}_I$ are commutative Baer
$^*$-semigroups and then we have
$${\widetilde{{\mathcal B}}}_I \models x\cdot y = y \cdot x$$ What we want to point out is the following: ${\widetilde{{\mathcal
B}}}_I$ captures (although in some sense a trivial one) the concept
of Boolean pre-states over Boolean algebras in a variety. Moreover
$\widetilde{{\mathcal B}}$ also determines the equational theory of
${\widetilde{{\mathcal B}}}_I$ when ${\widetilde{{\mathcal B}}}_I$
is seen as  a variety of $IE^*_B$-semigroup.

Let us now compare  the last result with Definition \ref{TRANS} and
Theorem \ref{1EQ}. The variety ${\widetilde{{\mathcal B}}}^*_I$
given by
\begin{eqnarray*}
{\widetilde{{\mathcal B}}}^*_I & = & {\mathcal IE}^*_B + \{ \tau(x \land (y \lor z)) = \tau ((x\land y) \lor (x \land z )) \} \\
& = & {\mathcal IE}^*_B + \{x' \land (y' \lor z') = (x'\land y')
\lor (x' \land z' ) \}
\end{eqnarray*}
is the biggest  subvariety of ${\mathcal IE}^*_B$ whose algebras
have a lattice of closed projections with Boolean structure and then
${\widetilde{{\mathcal B}}}_I  \subseteq {\widetilde{{\mathcal
B}}}^*_I$. We shall prove that ${\widetilde{{\mathcal B}}}_I  \not =
{\widetilde{{\mathcal B}}}^*_I$, i.e. the inclusion,  is proper. In
fact:

Let $B_4$ be the Boolean algebra of four elements $\{0, a, \neg a, 1
\}$ endowed with the operation $s(x) = x$ i.e., the identity on
$B_4$. In this case $B_4 \in {\widetilde{{\mathcal B}}}_I$.
According to Theorem \ref{PRO2}, we consider the  Baer
$^*$-semigroup $S(B_4)$ of residuated functions of $B_4$. Since we
can identify $B_4$ with $P_c(S(B_4))$, by Proposition \ref{BSS3} we
can extend $s$ to $S(B_4)$. Therefore $S(B_4)$ may be seen as an
algebra of ${\widetilde{{\mathcal B}}}^*_I$. Consider the function
$\phi: B_4 \rightarrow B_4$ such that $0\phi =\phi(0) = 0$, $1\phi
=\phi(1) = 1$, $a\phi = \phi(a) = \neg a$ and $(\neg a)\phi =
\phi(\neg a) = a$. Note that $\phi$ is an order preserving function
and the composition $\phi  \phi = 1_{B_4}$. Hence $\phi$ is the
residual function of itself and then $\phi \in S(B_4)$. Let $\phi_a$
be the Sasaki projection associated to $a$. Then $x\phi_a = (x\lor
\neg a) \land a = x \land a$. Note that $a(\phi \phi_a) =
(a\phi)\land a = \neg a \land a = 0$ and $a(\phi_a \phi) =
(a\phi_a)\phi = a\phi = \neg a$.

This proves that $\phi \phi_a \not = \phi_a \phi$ and then, $S(B_4)
\not \in {\widetilde{{\mathcal B}}}_I$ because
${\widetilde{{\mathcal B}}}_I$ is a variety of commutative Baer
$^*$-semigroups. Thus ${\widetilde{{\mathcal B}}}_I \not =
{\widetilde{{\mathcal B}}}^*_I$ and the inclusion is proper.

Hence the directly indecomposable algebras of ${\widetilde{{\mathcal
B}}}_I$ considered as $IE^*_B$-semigroups do not determine the
equational theory of ${\widetilde{{\mathcal B}}}^*_I$. Consequently
$\widetilde{{\mathcal B}}$ does not significatively add to the
equational theory of ${\widetilde{{\mathcal B}}}^*_I$.

}
\end{example}

Taking into account Example \ref{problem1}, the following problem
may be posed:

\begin{itemize}
\item[]
{\it Let  ${\mathcal A}$ be a class of $E_B$-lattices and suppose
that the subvariety ${\mathcal A}_I$ of ${\mathcal IE}_I$
equationally characterizes  ${\mathcal A}$ in the sense of
Definition \ref{DEF1}. Give a subvariety ${\mathcal G}^*$ of
${\mathcal A}^*_I$ in which we can determine the equational theory
of ${\mathcal G}^*$ from the class ${\mathcal A}$.}
\end{itemize}

We conclude this section defining the meaning of the statement that,
the class ${\mathcal A}$ of $E_B$-lattices determines the equational
theory of a subvariety of ${\mathcal A}^*_I$.

\begin{definition} \label{problem3}
{\rm Let  ${\mathcal A}$ be a class of $E_B$-lattices. Suppose that
the variety  ${\mathcal A}_I$ of  $IE_B$-lattices equationally
characterizes the class ${\mathcal A}$ and ${\mathcal I}: {\mathcal
A} \rightarrow {\mathcal D}({\mathcal A}_I)$ is the functor that
provides the categorical equivalence between ${\mathcal A}$ and the
category ${\mathcal D}({\mathcal A}_I)$ of directly indecomposable
algebras of ${\mathcal A}_I$. We say that ${\mathcal A}$ determines
the equational theory of a subvariety ${\mathcal G}^*$ of ${\mathcal
A}^*_I$ iff there exists a class operator $${\mathbb G} : {\mathcal
A}_I \rightarrow {\mathcal A}^*_I$$ such that:

\begin{enumerate}
\item
For each $L \in {\mathcal A}$, ${\mathbb G}{\mathcal I}(L)$ is a
directly indecomposable algebra in ${\mathcal A}^*_I$.

\item
${\mathcal G}^* = {\mathcal V}(\{{\mathbb G}{\mathcal I}(L): L \in
{\mathcal A} \})$.

\end{enumerate}
}
\end{definition}

In the next section, we will study a class operator, denoted by
${\mathbb S}_0$, that will allow us to define a subvariety of
${\mathcal A}^*_I$ whose equational theory is determinated by
${\mathcal A}$ in the sense of Definition \ref{problem3}.

\section{The class operator ${\mathbb S}_0$} \label{OPERATORE}

Let $\langle L, \land, \lor, \neg, s, 0,1 \rangle$ be an
$IE_B$-lattice. By Corollary \ref{BS4}, we consider the
$IE^*_B$-semigroup $S(L)$ of residuated functions of $L$. By abuse
of notation, we also denote by $s$ the operation $s_{S(L)}$ on
$S(L)$ where $s_{S(L)}(x) = s(x'')$. Let $S_0(L)$ be the sub Baer
$^*$-semigroup of $S(L)$ generated by the Sasaki projections on $L$.
In the literature, $S_0(L)$ is refereed as {\em the small  Baer
$^*$-semigroup of products of Sasaki projections on $L$} \cite{AD,
FOU}. By Corollary \ref{BS5}, $S_0(L)$ with the restriction
$s/_{S_0(L)}$ is a sub  $IE^*_B$-semigroup of $S(L)$. This
$IE^*_B$-semigroup will be denoted by ${\mathbb S}_0(L)$.

Since $P_c({\mathbb S}_0(L))$ is $IE_B$-isomorphic to $L$, if
${\mathcal A}_I$ is a subvariety of ${\mathcal IE}_I$ and $L \in
{\mathcal A}_I$ then ${\mathbb S}_0(L) \in {\mathcal A}^*_I$. These
results allow us to define the following class operator: $${\mathbb
S}_0: {\mathcal A}_I \rightarrow {\mathcal A}^*_I \hspace{0.4cm}s.t.
\hspace{0.2cm} L\mapsto {\mathbb S}_0(L) $$

\begin{prop} \label{HOMBAER}
Let $L_1, L_2$ be two $IE_B$-lattices and $f:L_1 \rightarrow L_2$ be
an $IE_B$-homomorphism.

\begin{enumerate}

\item
If $\phi_{a_1} \ldots \phi_{a_n}$ are Sasaki projections in $L_1$
then for each $x \in L_1$ we have $f(x\phi_{a_1} \ldots \phi_{a_n})
= f(x) \phi_{f(a_1)} \ldots \phi_{f(a_n)} $.

\item
If $f$ is a surjective function then there exists an unique
$IE^*_B$-homomorphism $g: {\mathbb S}_0(L_1) \rightarrow {\mathbb
S}_0(L_2)$ such that, identifying $L_1$ with $P_c({\mathbb
S}_0(L_1))$, $g/_{L_1} = f$  Moreover, $g$ is a surjective function.

\item
If $f$ is bijective then ${\mathbb S}_0(L_1)$ and ${\mathbb
S}_0(L_2)$ are $IE^*_B$-isomorphic.

\end{enumerate}
\end{prop}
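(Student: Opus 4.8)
The plan is to prove the three items in sequence, using the fact that each element of $S_0(L_i)$ is, by definition, a finite product of Sasaki projections, so that it suffices to track how an $IE_B$-homomorphism acts on such products.

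For item 1, I would argue by induction on $n$. The base case $n=1$ reduces to the identity $f(x\phi_a) = f(x)\phi_{f(a)}$. Since $x\phi_a = (x \lor \neg a)\land a$ by the definition in Theorem \ref{PRO2}, and since $f$ is an $OML$-homomorphism (hence preserves $\land$, $\lor$ and $\neg$), we get $f(x\phi_a) = (f(x)\lor \neg f(a))\land f(a) = f(x)\phi_{f(a)}$. For the inductive step, writing $x\phi_{a_1}\cdots\phi_{a_n} = (x\phi_{a_1}\cdots\phi_{a_{n-1}})\phi_{a_n}$, I would apply the induction hypothesis to the inner product and the base case to the outermost Sasaki projection. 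This step is a routine calculation and should present no obstacle.

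For item 2, the key observation is that item 1 already tells us what $g$ must do: every element of ${\mathbb S}_0(L_1)$ has the form $\varphi = \phi_{a_1}\cdots\phi_{a_n}$ (a product of Sasaki projections, acting on the right), so I would define $g(\phi_{a_1}\cdots\phi_{a_n}) = \phi_{f(a_1)}\cdots\phi_{f(a_n)}$. The main obstacle is well-definedness: a single residuated function $\varphi$ may admit several representations as products of Sasaki projections, and I must check that any two representations are sent to the same function in ${\mathbb S}_0(L_2)$. This is where item 1 becomes essential rather than cosmetic: if $\phi_{a_1}\cdots\phi_{a_n} = \phi_{b_1}\cdots\phi_{b_m}$ as functions on $L_1$, then applying both sides to an arbitrary $x\in L_1$ and pushing $f$ through via item 1 shows $f(x)\phi_{f(a_1)}\cdots\phi_{f(a_n)} = f(x)\phi_{f(b_1)}\cdots\phi_{f(b_m)}$; because $f$ is surjective, $f(x)$ ranges over all of $L_2$, so the two image functions agree everywhere and $g$ is well-defined. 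Once well-definedness is secured, I would verify that $g$ is a Baer $^*$-semigroup homomorphism (preservation of $\cdot$ is immediate from concatenation of products, while preservation of $'$ and $^*$ follows because these operations are determined within $S(L)$ by the underlying lattice data that $f$ respects), that it preserves $s$ (using $s_S(\varphi) = \phi_{s(\varphi(1))}$ from Corollary \ref{BS4}-2 together with item 1 and the fact that $f$ preserves $s$), and that $g/_{L_1} = f$ under the identification $a \leftrightarrow \phi_a$. Surjectivity of $g$ follows since every generator $\phi_c$ of ${\mathbb S}_0(L_2)$ is hit: surjectivity of $f$ gives $c = f(a)$ for some $a$, whence $g(\phi_a) = \phi_c$. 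Uniqueness is forced because $g$ is determined on the generating Sasaki projections by the requirement $g/_{L_1}=f$.

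For item 3, I would apply item 2 to both $f$ and its inverse $f^{-1}$ (both are surjective $IE_B$-homomorphisms when $f$ is bijective), obtaining $IE^*_B$-homomorphisms $g:{\mathbb S}_0(L_1)\to{\mathbb S}_0(L_2)$ and $h:{\mathbb S}_0(L_2)\to{\mathbb S}_0(L_1)$. By the uniqueness clause of item 2, the composites $h\circ g$ and $g\circ h$ must each coincide with the identity, since both restrict to the identity on the respective copies of $L_i$ and such an extension is unique. Hence $g$ is an $IE^*_B$-isomorphism. I expect the only genuinely delicate point in the whole proposition to be the well-definedness check in item 2; everything else is bookkeeping that reduces, via item 1, to the corresponding facts about $f$ on the orthomodular lattice.
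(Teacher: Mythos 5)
Your items 1 and 2 follow the paper's proof essentially verbatim: the same induction for item 1 (the paper starts at $n=2$ rather than $n=1$, an immaterial difference), the same definition $g(\phi_{a_1}\cdots\phi_{a_n}) = \phi_{f(a_1)}\cdots\phi_{f(a_n)}$, and---crucially---the same well-definedness argument, evaluating two representations of $\phi$ at an arbitrary $b = f(a) \in L_2$ and pushing $f$ through via item 1 using surjectivity; the paper likewise treats $'$ via $\phi' = \phi_{\neg 1\phi}$, treats $s$ via $s(\phi) = s(\phi'')$ together with Corollary \ref{BS4}, and obtains uniqueness by evaluating any competing homomorphism on the generating Sasaki projections. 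Where you genuinely diverge is item 3. The paper keeps the $g$ from item 2 and proves injectivity directly: if $g(\phi) = g(\psi)$, then for every $x \in L_1$ item 1 yields $f(x\phi) = f(x)\,g(\phi) = f(x)\,g(\psi) = f(x\psi)$, and injectivity of $f$ forces $\phi = \psi$. You instead apply item 2 to both $f$ and $f^{-1}$ and invoke the uniqueness clause to identify $h \circ g$ and $g \circ h$ with the respective identities, since each restricts to the identity on the copy of $L_i$ and the identity map is one such extension. Your argument is valid---the inverse of a bijective homomorphism is again an $IE_B$-homomorphism, so item 2 does apply to $f^{-1}$---and it is arguably cleaner, recycling the uniqueness statement instead of redoing a computation; the paper's direct check, by contrast, needs nothing beyond item 1 and makes the injectivity mechanism explicit. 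One minor soft spot: your justification of $^*$-preservation in item 2 (``determined by the underlying lattice data'') is vaguer than it needs to be, but it is immediately repaired by noting $\phi_a^* = \phi_a$ and $(\phi_{a_1}\cdots\phi_{a_n})^* = \phi_{a_n}\cdots\phi_{a_1}$, so $g$ preserves $^*$ by inspection of its defining formula.
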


\begin{proof} 1) We use induction on $n$. Suppose that $n=2$. Then
\begin{eqnarray*}
f(x\phi_{a_1} \phi_{a_2})  & = & f( (((x\lor \neg a_1)\land a_1) \lor \neg a_2) \land a_2 )   \\
& = & (((f(x)\lor \neg f(a_1))\land f(a_1)) \lor \neg f(a_2)) \land f(a_2) \\
& = &f(x)\phi_{f(a_1)} \phi_{f(a_2)}
\end{eqnarray*}
Suppose that the result holds for $m < n$. Then:
\begin{eqnarray*}
f(x\phi_{a_1} \ldots  \phi_{a_n})  & = & f((x\phi_{a_1} \ldots \phi_{a_{n-1}} \lor \neg a_n) \land a_n )   \\
& = & (f(x) \phi_{f(a_1)} \ldots  \phi_{f(a_{n-1})}  \lor \neg f(a_n)) \land f(a_n) \\
& = & f(x)\phi_{f(a_1)} \ldots \phi_{f(a_n)}
\end{eqnarray*}

2) Suppose that $f:L_1 \rightarrow L_2$ is a surjective
$IE_B$-homomorphism. If $\phi \in  {\mathbb S}_0(L_1)$ then $\phi =
\phi_{a_1}  \ldots \phi_{a_n}$ where $\phi_{a_i}$ are Sasaki
projections on $L_1$. We define the function $g: {\mathbb S}_0(L_1)
\rightarrow {\mathbb S}_0(L_2)$ such that $$g(\phi) = g(\phi_{a_1}
\ldots \phi_{a_n}) = \phi_{f(a_1)} \ldots \phi_{f(a_n)}
$$  We first prove that $g$ is well defined. Suppose that $\phi =
\phi_{a_1} \ldots \phi_{a_n} =  \phi_{c_1} \ldots \phi_{c_m}$. Let
$b \in L_2$. Since $f$ is a surjective function then there exists
$a\in L_1$ such that $f(a) = b$. Then by item 1,
\begin{eqnarray*}
b\phi_{f(c_1)} \ldots \phi_{f(c_m)}   & = & f(a) \phi_{f(c_1)} \ldots \phi_{f(c_m)}   \\
& = & f(a\phi_{c_1} \ldots \phi_{c_m}) \\
& = & f(a\phi_{a_1} \ldots \phi_{a_n}) \\
& = & f(a)\phi_{f(a_1)} \ldots \phi_{f(a_n)} \\
& = & b \phi_{f(a_1)}  \ldots  \phi_{f(a_n)}
\end{eqnarray*}
Thus $g(\phi_{a_1} \ldots  \phi_{a_n}) = g(\phi_{c_1} \ldots
\phi_{c_m})$ and  $g$ is well defined. Note that for each $a\in
L_1$, $g(\phi_a) = \phi_{f(a)}$ and then $g/_{L_1} = f$ identifying
$L_1$ with $P_c({\mathbb S}_0(L_1))$. The surjectivity of $g$
follows immediately from the surjectivity of $f$. By definition of
$g$, it is immediate that $g$ is a $\langle \circ, ^*, 0
\rangle$-homomorphism where $\psi \circ \phi = \psi \phi$. We prove
that $g$ preserves the operation $'$. Suppose that $\phi =
\phi_{a_1} \ldots  \phi_{a_n}$. By Theorem \ref{PRO2} $\phi' =
\phi_{\neg 1\phi} = \phi_{\neg 1\phi_{a_1} \ldots  \phi_{a_n}}$. By
item 1 we have that
\begin{eqnarray*}
g(\phi') & = & g(\phi_{\neg 1\phi_{a_1} \ldots  \phi_{a_n}})   \\
& = & \phi_{f(\neg 1\phi_{a_1} \ldots  \phi_{a_n}} ) \\
& = & \phi_{\neg f(1)\phi_{f(a_1)}  \ldots \phi_{f(a_n)}}\\
& = & \phi_{\neg 1 g(\phi)}\\
& = & g(\phi)'
\end{eqnarray*}
Thus, $g$ preserves the operation $'$. Now we prove that $g$
preserves $s$. By Proposition \ref{BS2}-3 $s(\phi) = s(\phi'')$.
Then there exists $a\in L_1$ such that $\phi'' = \phi_a$. By
Corollary \ref{BS4}, $g(s(\phi)) = g(s(\phi_a)) = g(\phi_{s(a)}) =
\phi_{f(s(a))} = \phi_{s(f(a))} = s(\phi_{f(a)}) = s(g (\phi_a)) =
s(g(\phi'')) = s(g(\phi)'') = s(g(\phi))$ and $g$
preserves the operation $s$. Hence $g$ is a surjective $IE^*_B$-homomorphism
such that, identifying $L_1$ with $P_c({\mathbb S}_0(L_1))$, $g/_{L_1} = f$ . We have
to prove that $g$ is unique. Suppose that there exists an $IE^*_B$-homomorphism
$h: {\mathbb S}_0(L_1) \rightarrow {\mathbb S}_0(L_2)$ such that $h/_{L_1} = f$.
Let $\phi = \phi_{a_1} \ldots  \phi_{a_n} \in  {\mathbb S}_0(L_1)$. Then
$ h(\phi) = h(\phi_{a_1} \ldots \phi_{a_n}) = h(\phi_{a_1}) \ldots (\phi_{a_n}) =  f(\phi_{a_1}) \ldots  f(\phi_{a_n})
= g(\phi_{a_1} \ldots \phi_{a_n} ) = g(\phi)$. Thus, $h = g$ and this proves the unicity of $g$. \\

3) To prove this item, we assume that $f$ is bijective and use the
function $g$ of item 2. Then we have to prove that $g$ is injective.
Suppose that $g(\phi) = g(\psi)$ where $\varphi, \psi \in {\mathbb
S}_0(L_1)$. Suppose that $\phi = \phi_{a_1} \ldots  \phi_{a_n}$  and
$\psi =\phi_{c_1}  \ldots  \phi_{c_m}$. By item 1, for each $x\in
L_1$ we have that:
\begin{eqnarray*}
f(x\phi_{a_1}  \ldots  \phi_{a_n}) & = & f(x) \phi_{f(a_1)}  \ldots  \phi_{f(a_n)}\\
& = & f(x)g(\phi_{a_1} \ldots \phi_{a_n})\\
& = & f(x)g(\phi) \\
& = & f(x)g(\psi)\\
& = & f(x)\phi_{f(c_1)} \ldots \phi_{f(c_m)}\\
& = & f(x\phi_{c_1} \ldots \phi_{c_m})
\end{eqnarray*}
Since $f$ is bijective, $\phi_{a_1} \ldots \phi_{a_n}(x) = \phi_{c_1}  \ldots  \phi_{c_m}(x)$ and then $\phi = \psi$. Thus $g$ is bijective.

\rule{5pt}{5pt}
\end{proof}

\begin{prop} \label{HOMBAER1}
Let $A$ be a sub $IE_B$-lattice of $L$. Then there exists a sub
$IE^*_B$-semigroup $S_A$ of ${\mathbb S}_0(L)$ such that $A$ is
$IE_B$-isomorphic to $P_c(S_A)$.

\end{prop}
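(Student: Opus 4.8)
The plan is to realize $S_A$ concretely as a subset of ${\mathbb S}_0(L)$, namely as the set of all finite products $\phi_{a_1}\cdots\phi_{a_n}$ of Sasaki projections whose indices $a_1,\dots,a_n$ lie in $A$, and then to identify its closed projections. First I would check that this set is a sub Baer $^*$-semigroup of ${\mathbb S}_0(L)$. Closure under $\cdot$ is immediate; closure under $^*$ follows from $\phi_a^*=\phi_a$, since $(\phi_{a_1}\cdots\phi_{a_n})^*=\phi_{a_n}\cdots\phi_{a_1}$; and $0=\phi_0$ (hence $1=0'$) belongs to $S_A$ because $0\in A$. The one step requiring work is closure under $'$: by Theorem \ref{PRO2} we have $\phi'=\phi_{\neg 1\phi}$, and a short induction shows that for $\phi=\phi_{a_1}\cdots\phi_{a_n}$ the value $1\phi$ stays inside $A$. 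Indeed $1\phi_{a_1}=a_1\in A$, and if $y\in A$ then $y\phi_{a}=(y\lor\neg a)\land a\in A$ because $A$ is a sub-orthomodular-lattice of $L$. Hence $\neg 1\phi\in A$ and $\phi'=\phi_{\neg 1\phi}\in S_A$, so $S_A$ is precisely the sub Baer $^*$-semigroup of ${\mathbb S}_0(L)$ generated by $\{\phi_a:a\in A\}$.

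Next I would compute $P_c(S_A)$. Since $S_A$ is closed under $'$, one has $\{x':x\in S_A\}\subseteq S_A\cap P_c({\mathbb S}_0(L))$, and conversely any $e\in S_A\cap P_c({\mathbb S}_0(L))$ satisfies $e=e''=(e')'$ with $e'\in S_A$; thus $P_c(S_A)=S_A\cap P_c({\mathbb S}_0(L))=S_A\cap\{\phi_a:a\in L\}$. It then remains to show $S_A\cap\{\phi_a:a\in L\}=\{\phi_a:a\in A\}$. The inclusion $\supseteq$ is clear, each $\phi_a$ with $a\in A$ being a generator; for $\subseteq$, if $\phi_b=\phi_{a_1}\cdots\phi_{a_n}$ with $a_i\in A$, then $b=1\phi_b=1\phi_{a_1}\cdots\phi_{a_n}\in A$ by the computation above. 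Therefore $P_c(S_A)=\{\phi_a:a\in A\}=f_L(A)$, where $f_L\colon L\to P_c({\mathbb S}_0(L))$, $a\mapsto\phi_a$, is the $IE_B$-isomorphism of Corollary \ref{BS4}-1 (available for ${\mathbb S}_0(L)$ because $P_c({\mathbb S}_0(L))$ is $IE_B$-isomorphic to $L$).

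Finally I would assemble the conclusion. Because $A$ is a sub-$IE_B$-lattice of $L$ and $f_L$ is an $IE_B$-isomorphism intertwining $s$ and $\bar{s}$, the image $f_L(A)=P_c(S_A)$ is a sub-$IE_B$-lattice of $P_c({\mathbb S}_0(L))$ and $f_L/_{A}\colon A\to P_c(S_A)$ is an $IE_B$-isomorphism. Now $S_A$ is a sub Baer $^*$-semigroup of ${\mathbb S}_0(L)$ whose lattice of closed projections is a sub-$IE_B$-lattice of $P_c({\mathbb S}_0(L))$, so Corollary \ref{BS5} applies and endows $S_A$ with the unique $IE^*_B$-semigroup structure $s/_{S_A}$, making it a sub-$IE^*_B$-semigroup of ${\mathbb S}_0(L)$ with $A$ $IE_B$-isomorphic to $P_c(S_A)$. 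The main obstacle is the pair of computations in the first two paragraphs, namely the stability of $1\phi$ under the products and the resulting description $P_c(S_A)=\{\phi_a:a\in A\}$; once these are in place, the remainder is a direct application of Corollaries \ref{BS4} and \ref{BS5}.
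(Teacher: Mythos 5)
Your proposal is correct and follows the paper's proof in its essentials: you build the same set $S_A$ of finite products $\phi_{a_1}\cdots\phi_{a_n}$ with $a_i\in A$, identify $P_c(S_A)$ with $\{\phi_a: a\in A\}\cong A$, and invoke Corollary \ref{BS5} to transfer the $IE^*_B$-structure. The only difference is that the paper outsources the sub-Baer-$^*$-semigroup property and the $OML$-isomorphism $A\cong P_c(S_A)$ to a citation (\cite[Proposition 10]{AD}), whereas you prove these facts directly via the computation $1\phi_{a_1}\cdots\phi_{a_n}\in A$ and the identity $\phi'=\phi_{\neg 1\phi}$ --- a self-contained inlining of the cited result, not a different method.
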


\begin{proof}
Consider the set $$S_A = \bigcup_{n \in {\mathbb{N}}} \{
\phi_{a_1}\phi_{a_2}\ldots \phi_{a_n}: a_i\in A \}$$ where
$\phi_{a_i}$ are Sasaki projections on $L$. Note that in general
$S_A \not = {\mathbb S}_0(A)$ since the domain of Sasaki projections
$\phi_{a_i}$ is $L$ (and not $A$). In {\rm \cite[Proposition
10]{AD}} it is proved that $S_A$ is a sub Baer $^*$-semigroup of
${\mathbb S}_0(L)$ in which $A$ is $OML$-isomorphic to $P_c(S_A)$
and then, $A$ is $IE_B$-isomorphic to $P_c(S_A)$. Thus, by Corollary
\ref{BS5}, $S_A$ is a sub $IE^*_B$-semigroup  of ${\mathbb S}_0(L)$.
\rule{5pt}{5pt}
\end{proof}

\begin{prop} \label{HOMBAER3}
Let $S$ be an $IE^*_B$-semigroup and for each $a\in S$ we define the
function $\psi_a: P_c(S) \rightarrow P_c(S)$ such that $\psi_a(x) =
(x a)''$. Then

\begin{enumerate}

\item
If $a \in P_c(S)$ then $\psi_a = \phi_a$.

\item
$f: S \rightarrow S(P_c(S))$ such that $f(a) = \psi_a$ is an  $IE^*_B$-homomorphism.

\item
Let $S_0$ be the sub $IE^*_B$-semigroup of $S$ generated by
$P_c(S)$. If we consider the restriction $f/_{S_0}$ then
Imag($f/_{S_0}$)$ = {\mathbb S}_0(P_c(S))$.

\end{enumerate}
\end{prop}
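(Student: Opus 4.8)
The plan is to prove the three items in turn, since (2) uses (1) and (3) uses both. I work throughout in the orthomodular lattice $P_c(S)$ of Theorem \ref{PRO1}, and I record one preliminary fact used repeatedly: since every projector is self-adjoint, applying $^*$ to the equation $e\cdot f=e$ shows that for $e,f\in P(S)$ the relation $e\le f$ is equivalent both to $e\cdot f=e$ and to $f\cdot e=e$; in particular comparable closed projections satisfy $e\cdot f=f\cdot e=e$.

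Item (1) asks that $(x\cdot a)''=(x\lor a')\land a$ for $x,a\in P_c(S)$, i.e. that $\psi_a$ coincides with the Sasaki projection $\phi_a$ of Theorem \ref{PRO2}. First I would rewrite the right-hand side: with $u=x\lor a'$ one has $a'\le u$, so $a'\cdot u=a'$ and the meet formula of Theorem \ref{PRO1} gives $u\land a=u\cdot(a'\cdot u)'=u\cdot a''=u\cdot a$, a closed projection. Next I note that $\psi_a$ is order-preserving: if $x\le y$ then $x=x\cdot y$, so $y\cdot a\cdot z=0$ forces $x\cdot a\cdot z=x\cdot(y\cdot a\cdot z)=0$, whence $(y\cdot a)'\le(x\cdot a)'$ and, by Proposition \ref{PBAER1}-1, $(x\cdot a)''\le(y\cdot a)''$. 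Applying this with $y=u$ and using $u\cdot a=u\land a$ yields $(x\cdot a)''\le u\land a$. For the reverse inequality I would first establish the auxiliary fact $(x\cdot a)'\land a\le x'$: writing $w=(x\cdot a)'\land a$, from $w\le(x\cdot a)'$ one gets $(x\cdot a)\cdot w=0$ and from $w\le a$ one gets $a\cdot w=w$, so Proposition \ref{PBAER1}-5 gives $w=a\cdot w=x'\cdot a\cdot w=x'\cdot w$, i.e. $w\le x'$. This is equivalent, by De Morgan, to $x\le(x\cdot a)''\lor a'$; hence $u=x\lor a'\le(x\cdot a)''\lor a'$, and meeting with $a$ and using the orthomodular identity $(c\lor a')\land a=c$ for $c=(x\cdot a)''\le a$ gives $u\land a\le(x\cdot a)''$. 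The two inequalities finish (1).

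Item (2) is the heart of the argument. The first and main point is that each $\psi_a$ is a residuated function, so that $f$ maps into $S(P_c(S))$, and here I would produce the residual explicitly. For $x,y\in P_c(S)$, Propositions \ref{PBAER1}-1 and \ref{PBAER1}-5 with the preliminary fact give $x\psi_a\le y\Longleftrightarrow(x\cdot a)\cdot y'=0$; applying $^*$ and using $y'^*=y'$, $x^*=x$, $a^{**}=a$ turns this into $y'\cdot a^*\cdot x=0$, which is in turn equivalent to $x\le(y'\cdot a^*)'$. Thus $\psi_a$ is residuated with residual $y\psi_a^+=(y'\cdot a^*)'$, and substituting into $x\phi^*=((x')\phi^+)'$ of Theorem \ref{PRO2} gives $x\psi_a^*=(x\cdot a^*)''=x\psi_{a^*}$, i.e. $\psi_a^*=\psi_{a^*}$, so $f$ preserves $^*$. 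Preservation of the product is the identity $\psi_a\psi_b=\psi_{ab}$, immediate from $x(\psi_a\psi_b)=((x\cdot a)''\cdot b)''=(x\cdot a\cdot b)''=x\psi_{ab}$ via Proposition \ref{PBAER1}-2. For the orthocomplement I would use $\phi'=\phi_{(1\phi)'}$ from Theorem \ref{PRO2} together with $1\psi_a=(1\cdot a)''=a''$, which gives $(\psi_a)'=\phi_{a'}$; since $a'\in P_c(S)$, item (1) then gives $f(a')=\psi_{a'}=\phi_{a'}=(\psi_a)'$. Finally $f(0)=\psi_0=\theta$ because $x\cdot 0=0$, and $f$ preserves $s$ by Corollary \ref{BS4}: the operation on $S(P_c(S))$ sends $\psi_a$ to $\phi_{s(1\psi_a)}=\phi_{s(a'')}=\phi_{s(a)}$, which by item (1) equals $\psi_{s(a)}=f(s(a))$, using Proposition \ref{BS2}-3 and $s(a)\in P_c(S)$.

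For item (3) I would first note that $P_c(S)$ is closed under $'$, under $^*$ and under $s$ (as $s(x)\in P_c(S)$) and contains $0,1$, so the sub-$IE^*_B$-semigroup $S_0$ generated by $P_c(S)$ is exactly the set of finite products $e_1\cdots e_n$ of closed projections. Since $f$ is an $IE^*_B$-homomorphism by (2) and $f(e_i)=\phi_{e_i}$ by (1), we get $f(e_1\cdots e_n)=\phi_{e_1}\cdots\phi_{e_n}$, a product of Sasaki projections; conversely every product of Sasaki projections arises this way. As ${\mathbb S}_0(P_c(S))$ is by definition the sub-$IE^*_B$-semigroup of $S(P_c(S))$ generated by the Sasaki projections, this yields $\mathrm{Imag}(f/_{S_0})={\mathbb S}_0(P_c(S))$. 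I expect the real obstacle to be item (2), specifically the proof that $\psi_a$ is residuated together with the identity $\psi_a^*=\psi_{a^*}$, since this is where the involution and the annihilator characterization of $'$ must be combined carefully; the reverse inequality in (1), encoded in $(x\cdot a)'\land a\le x'$, is the other delicate step, while the remaining verifications are routine.
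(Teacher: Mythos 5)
Your proof is correct, and it is considerably more self-contained than the paper's, which is largely a proof by citation. For item 1 the paper simply invokes \cite[Lemma 37.10]{MM} for the identity $(x\cdot a)''=(x\lor a')\land a$, and for the preservation of $\langle \cdot, ^*, ', 0\rangle$ in item 2 it invokes \cite[Proposition 7]{AD}; you instead reconstruct both results from the axioms, and your two key steps check out: the auxiliary inequality $(x\cdot a)'\land a\le x'$ (combined, via De Morgan, with the orthomodular identity $(c\lor a')\land a=c$ for $c=(x\cdot a)''\le a$, the hypothesis $c\le a$ being supplied by Proposition \ref{PBAER1}-2) gives the hard direction of item 1, and the Galois equivalence $x\psi_a\le y\iff x\cdot a\cdot y'=0\iff x\le (y'\cdot a^*)'$ correctly produces the explicit residual $y\psi_a^+=(y'\cdot a^*)'$, from which $\psi_a^*=\psi_{a^*}$ follows by the definition of $^*$ in Theorem \ref{PRO2}; the identities $\psi_a\psi_b=\psi_{ab}$ (via Proposition \ref{PBAER1}-2) and $(\psi_a)'=\phi_{a'}$ are likewise sound. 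Where the paper does give an argument — the $s$-preservation in item 2 via Corollary \ref{BS4} and $s(a)=s(a'')$ (Proposition \ref{BS2}-3), and the computation $f(a_1\cdots a_n)=\phi_{a_1}\cdots\phi_{a_n}$ in item 3 — your reasoning coincides with it essentially verbatim. In item 3 you are in fact slightly more careful than the paper: you justify the inclusion $\mathrm{Imag}(f/_{S_0})\subseteq{\mathbb S}_0(P_c(S))$ by identifying $S_0$ with the set of finite products of closed projections (closure under $'$, $^*$ and $s$ holding because $x'\in P_c(S)$ for every $x$, $(e_1\cdots e_n)^*=e_n\cdots e_1$, and $s(x)\in P_c(S)$ by bs3), whereas the paper only proves the reverse inclusion explicitly and leaves this direction implicit. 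What your route buys is independence from \cite{MM} and \cite{AD}; what it costs is length, since the paper dispatches items 1 and 2 in a few lines plus two references.
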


\begin{proof}
1) Suppose that $a\in  P_c(S)$. By {\rm \cite[Lemma 37.10 ]{MM}}, $\psi_a(x) = (x a)'' = (x \lor \neg a) \land a = x\phi_a$. Hence $\psi_a = \phi_a$.

2)  In {\rm \cite[Proposition 7]{AD}} is proved that $f$ preserves
the operations  $\langle \cdot,^*,',0 \rangle$. Then we have to
prove that $f$ preserves the operation $s$. Note that, by item 1,
$f/_{P_c(S)}$ is the $IE_B$-isomorphism $a\mapsto \phi_a$. Then, by
Corollary \ref{BS4}, $f(s(a)) = s(f(a))$ for each $a\in P_c(S)$.
Taking into account that for each $a\in S$, $s(a) = s(a'')$ we have
that, $f(s(a)) = f(s(a'')) = s(f(a'')) = s(f(a)'') = s(f(a))$. Thus
$f$ is an  $IE^*_B$-homomorphism.

3) Suppose that $\varphi \in {\mathbb S}_0(P_c(S))$. Then $\varphi =
\phi_{a_1} \ldots  \phi_{a_n}$ for some $a_1, \ldots ,a_n$ in
$P_c(S)$. If we consider the element $a = a_1 a_2 \ldots  a_n$ then
$a\in S_0$ and, since $f$ is an $IE^*_B$-homomorphism, $f(a) = f(a_1
a_2 \ldots  a_n) = f(a_1) f(a_2) \ldots f(a_n) = \phi_{a_1}
\phi_{a_2} \ldots \phi_{a_n} = \varphi$. Imag($f/_{S_0}$)$ =
{\mathbb S}_0(P_c(S))$.

\rule{5pt}{5pt}
\end{proof}

\begin{prop} \label{PRODG0}
Let $(L_i)_{i\in I}$ be a family of $IE_B$-lattices. Then:

\begin{enumerate}

\item
If $\vec{a} = (a_i)_{i\in I} \in \prod_{i\in I} L_i$ then the Sasaki
projection $\phi_{\vec{a}}: \prod_{i\in I} L_i \rightarrow
\prod_{i\in I} L_i$ satisfies that for each $\vec{x} = (x_i)_{i\in
I}$, $\vec{x}\phi_{\vec{a}} = (x_i\phi_{a_i})_{i\in I}$.

\item
If  $\vec{a} = (a_i)_{i\in I}$ and $\vec{b} = (b_i)_{i\in I}$ are
elements in $\prod_{i\in I} L_i$ then for each $\vec{x} =
(x_i)_{i\in I}$, $\vec{x} \phi_{\vec{a}}  \phi_{\vec{b}} =
(x_i\phi_{a_i} \phi_{b_i})_{i\in I}$.

\item
 ${\mathbb S}_0(\prod_{i\in I} L_i)$ is  $IE^*_B$-isomorphic to $\prod_{i\in I} {\mathbb S}_0 (L_i)$

\end{enumerate}

\end{prop}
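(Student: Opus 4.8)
The plan is to reduce the first two items to the pointwise character of the operations in a direct product and then to bootstrap item~3 from them together with Proposition~\ref{HOMBAER}. For item~1 the lattice operations and the orthocomplement in $\prod_{i\in I} L_i$ are computed coordinatewise, so $\neg\vec a = (\neg a_i)_{i\in I}$ and hence
$$\vec x\,\phi_{\vec a} = (\vec x\lor\neg\vec a)\land\vec a = ((x_i\lor\neg a_i)\land a_i)_{i\in I} = (x_i\phi_{a_i})_{i\in I},$$
a one-line verification. Item~2 is item~1 applied twice, and an immediate induction on $n$ upgrades it to
$$\vec x\,\phi_{\vec a_1}\cdots\phi_{\vec a_n} = (x_i\,\phi_{a_{1i}}\cdots\phi_{a_{ni}})_{i\in I},\qquad \vec a_j=(a_{ji})_{i\in I},$$
which is the real workhorse for item~3: it says that an element of ${\mathbb S}_0(\prod_j L_j)$ acts coordinatewise, its $i$-th coordinate being the corresponding product of Sasaki projections on $L_i$.

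For item~3 I would assemble the map from the projections. Each $\pi_i:\prod_j L_j\to L_i$ is a surjective $IE_B$-homomorphism, so by Proposition~\ref{HOMBAER}-2 it lifts to a unique surjective $IE^*_B$-homomorphism $g_i:{\mathbb S}_0(\prod_j L_j)\to{\mathbb S}_0(L_i)$ acting by $g_i(\phi_{\vec a_1}\cdots\phi_{\vec a_n})=\phi_{a_{1i}}\cdots\phi_{a_{ni}}$. Bundling these, $G(\varphi)=(g_i(\varphi))_{i\in I}$ is an $IE^*_B$-homomorphism ${\mathbb S}_0(\prod_j L_j)\to\prod_i{\mathbb S}_0(L_i)$, so all the homomorphism clauses come for free from Proposition~\ref{HOMBAER} and the pointwise structure of the product; explicitly $G(\phi_{\vec a_1}\cdots\phi_{\vec a_n})=(\phi_{a_{1i}}\cdots\phi_{a_{ni}})_{i\in I}$. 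Injectivity is then clean: if $G(\varphi)=G(\psi)$ their coordinate functions agree for every $i$, and by the generalized formula of item~2 each of $\varphi,\psi$ acts on an arbitrary $\vec x$ through exactly those coordinate functions, so $\varphi$ and $\psi$ coincide on all of $\prod_j L_j$. That the closed-projection lattices already match is guaranteed by Proposition~\ref{PRO11}, so any obstruction must live in the non-closed part.

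The main obstacle is surjectivity. Given $(\psi_i)_{i\in I}$ with $\psi_i=\phi_{a_{1i}}\cdots\phi_{a_{n_i i}}$, the lengths $n_i$ may vary with $i$, whereas every element of the image of $G$ has all of its coordinate functions of one common length. I would normalize the lengths: since $\phi_1=1$ is the semigroup identity, appending factors $\phi_1$ rewrites any $\psi_i$ with any length $N\ge n_i$ without changing its value, and padding each coordinate with the top element $1\in L_i$ yields vectors $\vec a_j=(a_{ji})_{i\in I}\in\prod_i L_i$ whose Sasaki projections on the product satisfy $G(\phi_{\vec a_1}\cdots\phi_{\vec a_N})=(\psi_i)_{i\in I}$. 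The delicate point, and the one I expect to cost the most, is securing a single $N$ that dominates all the $n_i$: this is automatic when $I$ is finite (take $N=\max_i n_i$) or when the lengths are uniformly bounded, and it is precisely here that a finiteness hypothesis on the index set becomes essential, since it is what lets the passage to finite products of Sasaki projections commute with the formation of the direct product.
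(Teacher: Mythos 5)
Your items 1 and 2 coincide exactly with the paper's proof: the same one-line coordinatewise computation for $\phi_{\vec a}$, and the same ``apply item 1 twice'' for item 2. For item 3, however, the paper's entire proof is the phrase ``Follows from item 2,'' so your explicit construction is the only actual argument on the table: lifting each projection $\pi_i:\prod_j L_j\to L_i$ through Proposition \ref{HOMBAER}-2 to $g_i$, bundling into $G(\varphi)=(g_i(\varphi))_{i\in I}$, and deriving injectivity from the coordinatewise action is a faithful unpacking of what the authors presumably intended (your injectivity argument is the same device used inside the proof of Proposition \ref{HOMBAER}-3), and all of it is correct.

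The surjectivity obstruction you isolate is genuine, and it is a lacuna in the paper rather than a defect of your proposal. The image of $G$ is precisely the set of tuples $(\psi_i)_{i\in I}$ admitting a \emph{common} number $N$ of Sasaki factors after padding with $\phi_1$: an element of ${\mathbb S}_0(\prod_i L_i)$ is a product of finitely many $\phi_{\vec a}$'s, so every coordinate of its image has length bounded by that of the product. Hence for infinite $I$ with unbounded minimal factorization lengths $G$ is not onto; worse, no other isomorphism can rescue the statement in that case, since ${\mathbb S}_0(\prod_i L_i)$ is by construction generated as a Baer $^*$-semigroup by its closed projections, while (using Proposition \ref{PRO11}, which identifies $P_c(\prod_i{\mathbb S}_0(L_i))$ with the tuples of Sasaki projections) the subalgebra of $\prod_i{\mathbb S}_0(L_i)$ generated by its closed projections is exactly the image of $G$ — and generation by $P_c$ is an isomorphism invariant. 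Whether unbounded lengths actually occur depends on the $L_i$ (they cannot for Boolean factors, where $\phi_a\phi_b=\phi_{a\land b}$ collapses every product to length $1$), but nothing in the hypotheses of Proposition \ref{PRODG0} rules them out, so your finiteness caveat is a correction, not a concession. Note finally that the paper's later uses survive with exactly what you did prove: Proposition \ref{DIRECTINDB} only needs binary (more generally, finite) products, and Theorem \ref{COMPLETENESSEXP} only needs the embedding ${\mathbb S}_0(\prod_i L_i)\hookrightarrow\prod_i{\mathbb S}_0(L_i)$, i.e., your injective homomorphism $G$.
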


\begin{proof}
1) Let $\vec{a} = (a_i)_{i\in I} \in \prod_{i\in I} L_i$. Then
\begin{eqnarray*}
\vec{x} \phi_{\vec{a}}  & = & ( (x_i)_{i\in I} \lor \neg (a_i)_{i\in I} ) \land (a_i)_{i\in I}  \\
& = & (( x_i \lor \neg a_i ) \land a_i)_{i\in I}\\
& = & (x_i\phi_{a_i})_{i\in I}
\end{eqnarray*}

2) Let  $\vec{a} = (a_i)_{i\in I}$ and $\vec{b} = (b_i)_{i\in I}$ be two elements in $\prod_{i\in I} L_i$ and $\vec{x} = (x_i)_{i\in I}$. Then, by item 1, we have that:
\begin{eqnarray*}
\vec{x} \phi_{\vec{a}}  \phi_{\vec{b}}  & = & (\vec{x} \phi_{\vec{a}})  \phi_{\vec{b}}   \\
& = & ((x_i\phi_{a_i})_{i\in I}) \phi_{\vec{b}}\\
& = &  (x_i\phi_{a_i} \phi_{b_i})_{i\in I}
\end{eqnarray*}

3) Follows from item 2.

\rule{5pt}{5pt}
\end{proof}

\begin{prop} \label{DIRECTINDB}
Let $L$ be an  $IE_B$-lattice. Then, $L$ is directly indecomposable
iff ${\mathbb S}_0(L)$ is directly indecomposable
\end{prop}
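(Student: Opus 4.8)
The plan is to prove both implications by contraposition, reducing each to the behaviour of the two product-preserving constructions $P_c(-)$ and ${\mathbb S}_0(-)$. Throughout I will use that an algebra in either variety is \emph{nontrivial} exactly when $0\neq 1$ (if $0=1$ then $x = 1\cdot x = 0\cdot x = 0$ for all $x$), together with the fact recorded just before Proposition \ref{HOMBAER} that $P_c({\mathbb S}_0(L))$ is $IE_B$-isomorphic to $L$. The key observation is that both constructions convert binary products into binary products, so direct decompositions can be transported back and forth.

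First I would treat the implication ``$L$ directly indecomposable $\Rightarrow {\mathbb S}_0(L)$ directly indecomposable'' by its contrapositive. Suppose ${\mathbb S}_0(L)\cong S_1\times S_2$ with $S_1,S_2$ nontrivial; these factors lie in ${\mathcal{IE}}^*_B$ since they are homomorphic images of ${\mathbb S}_0(L)$ under the projections and ${\mathcal{IE}}^*_B$ is a variety. Applying Proposition \ref{PRO11} yields an $IE_B$-isomorphism $L\cong P_c({\mathbb S}_0(L))\cong P_c(S_1\times S_2)\cong P_c(S_1)\times P_c(S_2)$. Since each $S_i$ is nontrivial we have $0\neq 1$ in $S_i$, and as $\{0,1\}\subseteq P_c(S_i)$ each factor $P_c(S_i)$ is a nontrivial $IE_B$-lattice. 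Hence $L$ is directly decomposable, the required contradiction.

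For the converse, ``${\mathbb S}_0(L)$ directly indecomposable $\Rightarrow L$ directly indecomposable'', I again argue contrapositively. Assume $L\cong L_1\times L_2$ with $L_1,L_2$ nontrivial $IE_B$-lattices. By Proposition \ref{HOMBAER}-3 the operator ${\mathbb S}_0$ respects isomorphisms, so ${\mathbb S}_0(L)\cong {\mathbb S}_0(L_1\times L_2)$, and Proposition \ref{PRODG0}-3 rewrites the right-hand side as ${\mathbb S}_0(L_1)\times {\mathbb S}_0(L_2)$ in ${\mathcal{IE}}^*_B$. It then remains to see that each ${\mathbb S}_0(L_i)$ is nontrivial: it contains the Sasaki projections $\phi_0=\theta$ and $\phi_1=\mathrm{id}$, and these differ because $\mathrm{id}(1)=1\neq 0=\theta(1)$ whenever $0\neq 1$ in $L_i$. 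Thus ${\mathbb S}_0(L)$ is directly decomposable, completing the contraposition.

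I expect no serious obstacle, since both functorial product formulas are already available. The only points demanding care are the two nontriviality checks — confirming that each $P_c(S_i)$ and each ${\mathbb S}_0(L_i)$ genuinely has more than one element — and ensuring that the isomorphisms supplied by Propositions \ref{PRO11} and \ref{PRODG0} are read in the correct categories (as $IE_B$- and $IE^*_B$-isomorphisms, respectively), so that the witnessed direct decomposition of $L$, or of ${\mathbb S}_0(L)$, is genuinely a decomposition into nontrivial factors within the intended variety.
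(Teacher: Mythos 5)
Your argument coincides with the paper's own proof: both directions are handled contrapositively by transporting a nontrivial product decomposition through Proposition \ref{PRO11} (using $L\cong P_c({\mathbb S}_0(L))$) in one direction and through Proposition \ref{PRODG0}-3 in the other. Your explicit nontriviality checks for the factors $P_c(S_i)$ and ${\mathbb S}_0(L_i)$ are correct details that the paper leaves implicit, not a different route.
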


\begin{proof}
Suppose that ${\mathbb S}_0(L)$ admits a non trivial decomposition
in direct products of $IE^*_B$-semigroups i.e. ${\mathbb S}_0(L) =
\prod_{i\in I} S_i$. Then, by Proposition \ref{PRO11}, we can see
that $L \approx_{IE_B} P_c({\mathbb S}_0(L)) \approx_{IE_B}
P_c(\prod_{i\in I} S_i) \approx_{IE_B} \prod_{i\in I} P_c(S_i)$.
Thus $L$ admits a non trivial decomposition in direct products of
$IE_B$-lattices.

Suppose that $L$ admits a non trivial decomposition in direct
products of $IE_B$-lattices i.e. $L = \prod_{i\in I} L_i$. Then, by
Proposition \ref{PRODG0}-3, ${\mathbb S}_0(L) = {\mathbb
S}_0(\prod_{i\in I} L_i) \approx_{IE^*_B} \prod_{i\in I} {\mathbb
S}_0(L_i)$. Thus ${\mathbb S}_0(L)$ admits a non trivial
decomposition in direct products of $IE^*_B$-semigroups.
\rule{5pt}{5pt}
\end{proof}

Let ${\mathcal A}_I$ be a variety of $IE_B$-lattices. We denote by
${\mathcal G}^*({\mathcal A}_I)$ the sub variety of ${\mathcal
A}_I^*$ generated by the class $\{{\mathbb S}_0(L): L \in {\mathcal
A}_I \}$. More precisely,
$${\mathcal G}^*({\mathcal A}_I) = {\mathcal V}(\{{\mathbb S}_0(L): L \in {\mathcal
A}_I \}) $$

We also introduce the following subclass of ${\mathcal
G}^*({\mathcal A}_I)$  $${\mathcal G}_D^*({\mathcal A}_I) =
\{{\mathbb S}_0(L): L \in {\mathcal D}({\mathcal A}_I) \} $$ where
${\mathcal D}({\mathcal A}_I)$ is the class of the direct
indecomposable algebras of ${\mathcal A}_I$. By Proposition
\ref{DIRECTINDB}, we can see that ${\mathcal G}_D^*({\mathcal A}_I)$
is a subclass of the direct indecomposable algebras of  ${\mathcal
A}_I^*$.

\begin{theo}\label{COMPLETENESSEXP}
Let ${\mathcal A}_I$ be a variety of $IE_B$-lattices. Then
$${\mathcal G}^*({\mathcal A}_I) \models t= r \hspace{0.3cm}
\mathrm{iff} \hspace{0.3cm} {\mathcal G}_D^*({\mathcal A}_I) \models
t= r
$$
\end{theo}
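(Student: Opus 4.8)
The plan is to reduce the statement to the standard principle that a variety generated by a class $\mathcal{K}$ validates exactly those equations that hold throughout $\mathcal{K}$. Since ${\mathcal G}^*({\mathcal A}_I) = {\mathcal V}(\{{\mathbb S}_0(L): L \in {\mathcal A}_I \})$, Birkhoff's $HSP$ theorem gives that ${\mathcal G}^*({\mathcal A}_I) \models t = r$ holds if and only if ${\mathbb S}_0(L) \models t = r$ for every $L \in {\mathcal A}_I$. Hence the theorem reduces to proving that $\{{\mathbb S}_0(L): L \in {\mathcal A}_I\}$ and $\{{\mathbb S}_0(L): L \in {\mathcal D}({\mathcal A}_I)\} = {\mathcal G}_D^*({\mathcal A}_I)$ satisfy the same equations. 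One implication is immediate: as ${\mathcal D}({\mathcal A}_I) \subseteq {\mathcal A}_I$, the class ${\mathcal G}_D^*({\mathcal A}_I)$ is contained in $\{{\mathbb S}_0(L): L \in {\mathcal A}_I\}$, so every equation of the latter is an equation of the former, which yields the forward direction of the stated equivalence.

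For the converse I would assume ${\mathcal G}_D^*({\mathcal A}_I) \models t = r$ and fix an arbitrary $L \in {\mathcal A}_I$, the goal being ${\mathbb S}_0(L) \models t = r$. By Birkhoff's subdirect representation theorem $L$ embeds as a subdirect product into $\prod_{i\in I} L_i$, where each $L_i$ is a subdirectly irreducible homomorphic image of $L$; since ${\mathcal A}_I$ is a variety we have $L_i \in {\mathcal A}_I$, and as subdirectly irreducible algebras are directly indecomposable, in fact $L_i \in {\mathcal D}({\mathcal A}_I)$. By hypothesis each ${\mathbb S}_0(L_i)$ satisfies $t = r$, and by Proposition \ref{PRODG0}-3 we have ${\mathbb S}_0(\prod_{i\in I} L_i) \cong \prod_{i\in I}{\mathbb S}_0(L_i)$, so this product, which I write as ${\mathbb S}_0(M)$ with $M = \prod_{i\in I} L_i$, also satisfies $t = r$, equations being preserved by direct products.

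It then remains to locate ${\mathbb S}_0(L)$ inside the variety generated by ${\mathbb S}_0(M)$. Viewing $L$ as a sub $IE_B$-lattice of $M$, Proposition \ref{HOMBAER1} produces a sub $IE^*_B$-semigroup $S_L$ of ${\mathbb S}_0(M)$ whose closed projections satisfy $P_c(S_L) \cong L$ as $IE_B$-lattices. Because $S_L$ is by construction generated by the Sasaki projections indexed by $L$, that is, by $P_c(S_L)$, Proposition \ref{HOMBAER3} (items 2 and 3) applies with $S = S_L$: the map $f(a) = \psi_a$ is an $IE^*_B$-homomorphism and $\mathrm{Imag}(f) = {\mathbb S}_0(P_c(S_L))$. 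Combining this with $P_c(S_L) \cong L$ and Proposition \ref{HOMBAER}-3 (isomorphic $IE_B$-lattices yield isomorphic ${\mathbb S}_0$'s) exhibits ${\mathbb S}_0(L)$ as a homomorphic image of $S_L$, which is itself a subalgebra of ${\mathbb S}_0(M)$. Thus ${\mathbb S}_0(L) \in HS({\mathbb S}_0(M)) \subseteq HSP(\{{\mathbb S}_0(L_i): i \in I\})$, and since each ${\mathbb S}_0(L_i) \models t = r$ we conclude ${\mathbb S}_0(L) \models t = r$. As $L \in {\mathcal A}_I$ was arbitrary, $\{{\mathbb S}_0(L): L \in {\mathcal A}_I\} \models t = r$, and therefore ${\mathcal G}^*({\mathcal A}_I) \models t = r$.

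The main obstacle, and the heart of the argument, is precisely this last identification: ${\mathbb S}_0$ does not literally commute with subalgebra formation, since the subalgebra $S_L$ of ${\mathbb S}_0(M)$ is in general \emph{not} ${\mathbb S}_0(L)$ (its Sasaki projections are computed on $M$ rather than on $L$), so one cannot simply claim ${\mathbb S}_0(L)$ is a subalgebra of ${\mathbb S}_0(M)$. The device that repairs this is Proposition \ref{HOMBAER3}, which collapses $S_L$ onto ${\mathbb S}_0(P_c(S_L)) \cong {\mathbb S}_0(L)$ through a homomorphism; the delicate point to verify carefully is that $S_L$ is generated by its closed projections, so that item 3 of that proposition is applicable with $S_0 = S_L$.
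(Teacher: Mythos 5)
Your proposal is correct and follows essentially the same route as the paper: subdirect representation of $L$, Proposition \ref{PRODG0} to identify ${\mathbb S}_0(\prod_i L_i)$ with $\prod_i {\mathbb S}_0(L_i)$, Proposition \ref{HOMBAER1} to embed a semigroup $S_L$ with $P_c(S_L)\cong L$ into that product, and Proposition \ref{HOMBAER3} to project onto ${\mathbb S}_0(L)$, so that ${\mathbb S}_0(L)$ lies in $HS$ of the product. The only cosmetic difference is that you argue $S_L$ itself is generated by $P_c(S_L)=\{\phi_a : a\in L\}$ (which is indeed immediate from the definition of $S_A$ as finite products of Sasaki projections), whereas the paper hedges by passing to the subalgebra $F_0$ of $F=S_L$ generated by $P_c(F)$ before applying Proposition \ref{HOMBAER3}; both versions are sound.
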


\begin{proof}
As regards to the non-trivial direction assume that $${\mathcal
G}_D^*({\mathcal A}_I)  \models t(x_1, \ldots,x_n) = r(x_1,
\ldots,x_n)$$ Let ${\mathbb S}_0(L) \in {\mathcal G}^*({\mathcal
A}_I)$. By the subdirect representation theorem, there exists an
$IE_B$-lattice embedding $\iota: L \hookrightarrow \prod_{i\in I}
L_i $ where $(L_i)_{i\in I}$ is a family of subdirectly irreducible
algebras in ${\mathcal A}_I$. Therefore, $L_i \in  {\mathcal
D}({\mathcal A}_I) $ and $ {\mathbb S}_0(L_i) \in  {\mathcal
G}_D^*({\mathcal A}_I)$ for each $i\in I$. By Proposition
\ref{HOMBAER1}, there exists an $IE^*_B$-semigroup embedding
$\iota_F: F \hookrightarrow {\mathbb S}_0(\prod_{i\in I} L_i)$ where
$L$ is $IE_B$-isomorphic to $P_c(F)$. By Proposition \ref{PRODG0},
we can assume that the $IE^*_B$-semigroup embedding $\iota_F$ is of
the form $\iota_F: F \hookrightarrow \prod_{i\in I} {\mathbb
S}_0(L_i)$. By Proposition \ref{HOMBAER3}, if we consider the sub
$IE^*_B$-semigroup $F_0$ of $F$ generated by $P_c(F)$ then, there
exists a surjective $IE^*_B$-homomorphisms $f:F_0 \rightarrow
{\mathbb S}_0(L)$. The following diagram provides some intuition:

\begin{center}
\unitlength=1mm
\begin{picture}(90,20)(0,0)
\put(3,10){\vector(0,-2){5}}
\put(20,16){\makebox(0,0){$F_0 \hookrightarrow F \stackrel{\iota_F}{\hookrightarrow} \prod_{i\in I} {\mathbb S}_0(L_i)$}}
\put(3,-1){\makebox(0,0){${\mathbb S}_0(L)$}}
\put(2,8){\makebox(-6,0){$f$}}
\end{picture}
\end{center}

\vspace{0.3cm}

Since $F_0$ can be embedded into a direct product $\prod_{i\in I}
{\mathbb S}_0(L_i)$, where $ {\mathbb S}_0(L_i) \in {\mathcal
G}_D^*({\mathcal A}_I)$ for each $i\in I$,  by hypothesis, we have
that:
$$F_0 \models t(x_1, \ldots,x_n) = r(x_1, \ldots,x_n)$$ Let $\vec{a}
= (a_1 \ldots a_n)$ be a sequence in ${\mathbb S}_0(L)$. Since $f$
is surjective, there exists a sequence $\vec{m} = (m_1,\ldots, m_n)$
in $F_0$ such that $f(\vec{m}) = (f(m_1),\ldots, f(m_n)) = \vec{a}$.
Since $t^{F_0}(\vec{m}) = r^{F_0}(\vec{m})$ then $t^{{\mathbb
S}_0(L)}(\vec{a}) = f(t^{F_0}(\vec{m})) = f(r^{F_0}(\vec{m})) =
t^{{\mathbb S}_0(L)}(\vec{a})$. Hence ${\mathbb S}_0(L) \models
t(x_1, \ldots,x_n) = r(x_1, \ldots,x_n)$ and the equation holds in
${\mathcal G}^*({\mathcal A}_I)$. \rule{5pt}{5pt}
\end{proof}

Even though the study of equations in Exp(${\mathcal A}_I$) is quite
treatable from the result obtained in Theorem \ref{COMPLETENESSEXP},
we do not have in general a full description of the equational
system that defines the variety Exp(${\mathcal A}_I$). The following
corollary provides an interesting property about Exp(${\mathcal
A}_I$).

\begin{coro}
Let ${\mathcal A}_I$ be a variety of $IE_B$-lattices. Then
$${\mathcal G}^*({\mathcal A}_I) \models s(x)\cdot y = y\cdot s(x).$$

\end{coro}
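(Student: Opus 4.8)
The plan is to reduce the problem to the generating subclass by means of the equational completeness result just established. By Theorem \ref{COMPLETENESSEXP}, an equation holds throughout $\mathcal{G}^*(\mathcal{A}_I)$ if and only if it holds on $\mathcal{G}_D^*(\mathcal{A}_I) = \{{\mathbb S}_0(L): L \in {\mathcal D}({\mathcal A}_I)\}$. Hence it suffices to verify $s(x)\cdot y = y\cdot s(x)$ in each single algebra ${\mathbb S}_0(L)$ with $L$ directly indecomposable in ${\mathcal A}_I$, and the corollary will follow by invoking Theorem \ref{COMPLETENESSEXP} once more.

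First I would exploit direct indecomposability to pin down the image of $s$. Since having only the trivial factor congruences $\Delta,\nabla$ is an intrinsic property of the algebra and ${\mathcal A}_I \subseteq {\mathcal{IE}}_B$, every $L \in {\mathcal D}({\mathcal A}_I)$ is directly indecomposable in ${\mathcal{IE}}_B$, so Proposition \ref{PROD2}-1 gives $s(L) = {\mathbf 2}$; that is, $s$ takes only the values $0^L$ and $1^L$ on $L$. I would then compute $s$ on the whole semigroup ${\mathbb S}_0(L)$: for an arbitrary $\varphi$, Proposition \ref{BS2}-3 gives $s(\varphi) = s(\varphi'')$, and $\varphi'' = \phi_{1\varphi}$ is a Sasaki projection, so by Corollary \ref{BS4} we obtain $s(\varphi) = \bar{s}(\phi_{1\varphi}) = \phi_{s(1\varphi)}$. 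Because $s(1\varphi) \in s(L) = \{0^L,1^L\}$, and because $\phi_{0^L}=\theta$ is the zero $0$ of the semigroup while $\phi_{1^L}$ is the identity function, i.e. the unit $1$, we conclude that $s(\varphi) \in \{0,1\}$ for every $\varphi \in {\mathbb S}_0(L)$.

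With this in hand the conclusion is immediate from the defining axioms of a Baer $^*$-semigroup: since $0\cdot\psi = \psi\cdot 0 = 0$ and $1\cdot\psi = \psi\cdot 1 = \psi$, in either case $s(\varphi)\cdot\psi = \psi\cdot s(\varphi)$ for all $\psi$, so ${\mathbb S}_0(L) \models s(x)\cdot y = y\cdot s(x)$. This holds for every $L \in {\mathcal D}({\mathcal A}_I)$, hence ${\mathcal G}_D^*({\mathcal A}_I) \models s(x)\cdot y = y\cdot s(x)$, and Theorem \ref{COMPLETENESSEXP} lifts it to ${\mathcal G}^*({\mathcal A}_I)$. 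The only delicate point I expect is establishing that $s$ collapses to the two-element image $\{0,1\}$ on all of ${\mathbb S}_0(L)$ rather than merely on $P_c({\mathbb S}_0(L))$; this rests entirely on the identity $s(\varphi) = \phi_{s(1\varphi)}$ combined with $s(L) = {\mathbf 2}$. Once that is in place the commutativity is trivial, since $0$ and $1$ are respectively absorbing and neutral and therefore central in any Baer $^*$-semigroup.
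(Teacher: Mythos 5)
Your proof is correct and follows essentially the same route as the paper: reduce to ${\mathcal G}_D^*({\mathcal A}_I)$ via Theorem \ref{COMPLETENESSEXP}, observe that $s$ takes only the values $0,1$ there, and conclude commutativity trivially. The paper simply asserts $s(x)\in\{0,1\}$ for algebras in ${\mathcal G}_D^*({\mathcal A}_I)$ without proof; your derivation of this via Proposition \ref{PROD2}-1, $s(\varphi)=s(\varphi'')$, and $s(\varphi)=\phi_{s(1\varphi)}\in\{\theta,1\}$ is a valid filling-in of that implicit step.
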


\begin{proof}
Let $S$ be an algebra in ${\mathcal G}_D^*({\mathcal A}_I)$. Then
for each $x\in S$, $s(x) \in \{0,1\}$ and $s(x)\cdot y = y\cdot
s(x)$. Hence by Theorem \ref{COMPLETENESSEXP}, ${\mathcal
G}^*({\mathcal A}_I) \models s(x)\cdot y = y\cdot s(x)$.
\rule{5pt}{5pt}
\end{proof}\\

Let ${\mathcal A}_I$ be a variety of $IE_B$-lattices. Note that the
assignment $L \mapsto {\mathbb S}_0(L)$ defines a class operator of
the form $${\mathbb S}_0:{\mathcal A}_I \rightarrow {\mathcal
G}^*({\mathcal A}_I) \subseteq {\mathcal A}^*_I$$ Taking into
account Definition \ref{problem3}, by  Proposition \ref{DIRECTINDB}
and Theorem \ref{COMPLETENESSEXP} we can establish the following
result:

\begin{theo}\label{DETER}
Let  ${\mathcal A}$ be a class of $E_B$-lattices. Suppose that the
variety  ${\mathcal A}_I$ of  $IE_B$-lattices equationally
characterizes the class ${\mathcal A}$. Then the class  ${\mathcal
A}$ determines the equational theory of ${\mathcal G}^*({\mathcal
A}_I)$. \rule{5pt}{5pt}
\end{theo}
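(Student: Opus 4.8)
The plan is to exhibit the class operator required by Definition \ref{problem3} as ${\mathbb S}_0$ itself, so that ${\mathbb G}{\mathcal I}(L) = {\mathbb S}_0({\mathcal I}(L))$ for each $L \in {\mathcal A}$, and then to verify the two clauses of that definition with ${\mathcal G}^* = {\mathcal G}^*({\mathcal A}_I)$. Since essentially all the heavy lifting has already been done in Proposition \ref{DIRECTINDB} and Theorem \ref{COMPLETENESSEXP}, the argument reduces to assembling these two results together with the categorical equivalence ${\mathcal I}: {\mathcal A} \rightarrow {\mathcal D}({\mathcal A}_I)$ supplied by the hypothesis that ${\mathcal A}_I$ equationally characterizes ${\mathcal A}$.

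For clause 1, I would start from condition \textbf{I} of Definition \ref{DEF1}: for every $L \in {\mathcal A}$ the algebra ${\mathcal I}(L)$ lies in ${\mathcal D}({\mathcal A}_I)$ and hence is directly indecomposable. Proposition \ref{DIRECTINDB} then transfers direct indecomposability across ${\mathbb S}_0$, so ${\mathbb S}_0({\mathcal I}(L))$ is directly indecomposable; and it belongs to ${\mathcal A}^*_I$ because ${\mathbb S}_0$ maps ${\mathcal A}_I$ into ${\mathcal A}^*_I$. This settles the first clause.

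For clause 2, the key observation is that the family $\{{\mathcal I}(L): L \in {\mathcal A}\}$ exhausts ${\mathcal D}({\mathcal A}_I)$ up to isomorphism: condition \textbf{I} gives one inclusion, while condition \textbf{E} supplies, for each $M \in {\mathcal D}({\mathcal A}_I)$, an object $(M,\sigma_s) \in {\mathcal A}$ whose image ${\mathcal I}(M,\sigma_s)$ recovers $M$. Here one uses that $s(M) = {\mathbf 2}$ for directly indecomposable $M$ (Proposition \ref{PROD2}-1), so that $s_{\sigma_s}$ coincides with the original $s$ on $M$. Applying ${\mathbb S}_0$ pointwise, $\{{\mathbb S}_0({\mathcal I}(L)): L \in {\mathcal A}\}$ coincides, up to isomorphism, with ${\mathcal G}_D^*({\mathcal A}_I)$, and therefore ${\mathcal V}(\{{\mathbb S}_0({\mathcal I}(L)): L \in {\mathcal A}\}) = {\mathcal V}({\mathcal G}_D^*({\mathcal A}_I))$.

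It then remains to identify ${\mathcal V}({\mathcal G}_D^*({\mathcal A}_I))$ with ${\mathcal G}^*({\mathcal A}_I)$. One inclusion is immediate from ${\mathcal G}_D^*({\mathcal A}_I) \subseteq {\mathcal G}^*({\mathcal A}_I)$. For the converse I would invoke Theorem \ref{COMPLETENESSEXP}: since ${\mathcal G}_D^*({\mathcal A}_I)$ and ${\mathcal G}^*({\mathcal A}_I)$ satisfy exactly the same equations and ${\mathcal G}^*({\mathcal A}_I)$ is a variety, every algebra of ${\mathcal G}^*({\mathcal A}_I)$ satisfies all equations valid in ${\mathcal G}_D^*({\mathcal A}_I)$ and hence lies in ${\mathcal V}({\mathcal G}_D^*({\mathcal A}_I))$. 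Combining the last two steps yields ${\mathcal G}^*({\mathcal A}_I) = {\mathcal V}(\{{\mathbb S}_0({\mathcal I}(L)): L \in {\mathcal A}\})$, which is clause 2. I expect the only genuinely delicate point to be the essential surjectivity of ${\mathcal I}$ onto ${\mathcal D}({\mathcal A}_I)$ — that is, the identification $s_{\sigma_s} = s$ on directly indecomposable algebras — since the indecomposability transfer and the equational completeness step are already packaged as the two cited results.
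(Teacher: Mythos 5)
Your proposal is correct and follows essentially the same route as the paper, which proves Theorem \ref{DETER} precisely by taking ${\mathbb G} = {\mathbb S}_0$, getting clause 1 of Definition \ref{problem3} from Proposition \ref{DIRECTINDB} and clause 2 from Theorem \ref{COMPLETENESSEXP}. You merely make explicit two details the paper leaves implicit --- that $\{{\mathcal I}(L): L \in {\mathcal A}\}$ exhausts ${\mathcal D}({\mathcal A}_I)$ (via conditions \textbf{I} and \textbf{E} together with $s_{\sigma_s} = s$ on directly indecomposable algebras, as in Proposition \ref{PROD3}) and the Birkhoff-style identification ${\mathcal V}({\mathcal G}_D^*({\mathcal A}_I)) = {\mathcal G}^*({\mathcal A}_I)$ --- both of which you handle correctly.
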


\noindent This last theorem provides a solution to the problem posed in Section \ref{problem1}.

\section{Final remarks}

We have developed an algebraic framework that allows us to extend
families of two-valued states on orthomodular lattices to Baer
$^*$-semigroups. To do so, we have explicitly enriched this variety
with a unary operation that  captures the concept of two-valued
states on Baer $^*$-semigroups as an equational theory. Moreover, a
decidable method to find the equational system is given. We have
also applied this general approach to study the full class of
two-valued states and the subclass of Jauch-Piron two-valued states
on Baer $^*$-semigroups.

\section*{Acknowledgments}

The authors wish to thank an anonymous referee for his/her careful
reading of our manuscript and useful comments on an earlier draft.
His/her remarks have substantially improved our paper. This work was
partially supported by the following grants: PIP 112-201101-00636,
Ubacyt 2011/2014 635, FWO project G.0405.08 and FWO-research
community W0.030.06. CONICET RES. 4541-12 (2013-2014).\\

\end{document}